\theoremstyle{definition}
\newtheorem{defi}{Definition}
\def\cutrev{cut-stretch}
\newcommand\cut[3]{cut_{#1}(#2,#3)}
\newcommand\comcut[2]{cut_{#1}(#2)}
\newcommand\comcuti[2]{cut^{-1}_{#1}(#2)}
\newcommand\set[1]{\{#1\}}
\newcommand\indegree[2]{deg_{#1}^{-}(#2)}
\newcommand\outdegree[2]{deg_{#1}^{+}(#2)}
\newcommand\reaches[3]{r_{#1}(#2,#3)}
\newcommand\nofedges[3]{d_{#1}(#2,#3)}
\newcommand\recs[1]{{\rm REC(#1)}}
\newcommand\edgerem[2]{#1_{\backslash #2}}
\newcommand\pre[2]{\texttt{pref}(#1,#2)}
\newcommand\inter[2]{\texttt{inter}(#1,#2)}
\newtheorem{lem}{Lemma}
\newtheorem{theo}{Theorem}
\newtheorem{prop}{Proposition}
\newcommand\cutrevf[3]{Cs_{#1}(#2,#3)}
\begin{document}
\title{Feedback arc set problem and NP-hardness of minimum recurrent configuration problem of Chip-firing game on directed graphs\thanks{This paper was partially sponsored by Vietnam Institute for Advanced Study in Mathematics (VIASM) and the Vietnamese National Foundation for Science and Technology Development (NAFOSTED)}}         
\author{K\'evin Perrot and Trung Van Pham}

\date{\today}          
\maketitle
\begin{abstract}
In this paper we present further studies of recurrent configurations of Chip-firing games on Eulerian directed graphs (simple digraphs), a class on the way from undirected graphs to general directed graphs. A computational problem that arises naturally from this model is to find the minimum number of chips of a recurrent configuration, which we call the \emph{minimum recurrent configuration} (MINREC) problem. We point out a close relationship between MINREC and the minimum feedback arc set (MINFAS) problem on Eulerian directed graphs, and prove that both problems are NP-hard.\\
\text{}\\
\noindent\textbf{Keywords.} Chip-firing game, critical configuration, recurrent configuration, Eulerian digraph, feedback arc set, complexity, Sandpile model.
\end{abstract}

\section{Introduction}

A \emph{feedback arc set} of a directed graph (digraph) $G$ is a subset $A$ of arcs of $G$ such that removing $A$ from $G$ leaves an acyclic graph. The \emph{minimum feedback arc set} (MINFAS) problem is a classical combinatorial optimization on graphs in which one tries to minimize $|A|$. This problem has a long history and its decision version was one of Richard M. Karp's 21 NP-complete problems \cite{Kar72}. The problem is known to be still NP-hard for many smaller classes of digraphs such as tournaments, bipartite tournaments, and Eulerian multi-digraphs \cite{CTY07,Fli11,GHM07}. We will prove that it is also NP-hard on Eulerian digraphs, a class in-between undirected and digraphs, in which the in-degree and the out-degree of each vertex are equal.

Chip-firing game is a discrete dynamical system that has received a great attention in recent years, with many variants. The model is a kind of diffusion process on graphs that can be defined informally as follows. Each vertex of a graph has a number of chips and it can give one chip to each of its out-neighbors if it has as many chips as its out-degree. A distribution of chips on the vertices of the graph is called a \emph{configuration}. The model has several equivalent definitions \cite{BTW87, Dha90,BLS91,BL92}. In this paper we refer to the definition that is defined on digraphs by A. Bj\"orner, L. Lov\'asz, and W. Shor \cite{BLS91}. The most important property of Chip-firing games is that if the game converges, it always converges to a unique stable configuration. This property leads to some research directions. A natural direction is the classification of all lattices generated by the converging games \cite{LP01,Mag03}. Most recently, the authors of \cite{PP13} gave a criterion that provides an algorithm for deciding that class of lattices. In this paper we pay attention to another important direction initiated in a paper of N. Biggs. The author defined a variant of Chip-firing game on undirected graphs, the \emph{Dollar game} \cite{Big99}, and studied some special configurations called \emph{critical configurations}. A generalization to the case of digraphs was given in \cite{Dha90,HLMPPW08} where the authors defined \emph{recurrent configurations} and presented many properties that are similar to those of critical configurations on undirected graphs. Holroyd et al. in \cite{HLMPPW08} also studied the Chip-firing game on Eulerian digraphs and presented several typical properties that can also be considered as natural generalizations of the undirected case. In this paper we continue this work and present generalizations of more surprising properties.

A typical property of recurrent configurations is that any stable configuration being component-wise greater than a recurrent configuration is also a recurrent configuration. If the set of minimal recurrent configurations are known, one knows the set of all recurrent configurations. Hence it is worth studying properties of such recurrent configurations. It turns out from the study in \cite{Sch10} that we can associate a minimal recurrent configuration of an undirected graph $G$ with an acyclic orientation of $G$. The  acyclic orientations of $G$ have the same number of arcs, namely $|E(G)|$, so do the total number of chips of minimal recurrent configurations. A direct consequence of this fact is that we can compute the minimum total number of chips of a recurrent configuration in polynomial time since we can compute easily a minimal recurrent configuration. It is natural to ask whether this problem can be solved in polynomial time for the case of digraphs. We will see that the problem becomes much harder than in the undirected case, even when the game is restricted to Eulerian digraphs with a sink. By giving the notion of maximal acyclic arc sets that can be regarded as a generalization of acyclic orientations of undirected graphs, we generalize the definitions and the results in \cite{Sch10} to the class of Eulerian digraphs. Although natural, these generalizations are not easy to see from the studies on undirected graphs. They allow us to derive a number of interesting properties of feedback arc sets and recurrent configurations of the Chip-firing game on Eulerian digraphs, and provide a polynomial reduction from  the MINREC problem to the MINFAS problem on Eulerian digraphs. We extend a result of \cite{Fli11} and show that the MINFAS problem on Eulerian digraphs is also NP-hard, which implies the NP-hardness of the MINREC problem on general digraphs.


The paper is divided into two main sections. The first is devoted to the study of properties of the maximal acyclic arc sets that are complements of the feedback arc sets of an Eulerian digraph. The main result of this section is that finding an acyclic arc set of maximum size can be restricted to looking within particular subsets of acyclic arc sets. By using this result we prove that the MINFAS problem on Eulerian digraphs is NP-hard. It also gives a connection between the MINFAS problem and the MINREC problem on Eulerian digraphs that is presented in the second section. A direct consequence of this connection is the NP-hardness of the MINREC problem on general digraphs.

\section{Acyclic arc sets on Eulerian digraphs}

Throughout this paper a graph always means a simple connected digraph. All results in this paper can be generalized easily to the case of multi-graphs. Traditionally, the vertex set and the edge set of a graph $G$ are denoted by $V(G)$ and $E(G)$, respectively. An \emph{Eulerian digraph} is a digraph in which the in-degree and the out-degree of each vertex are equal. An undirected graph is considered as a digraph in which for any edge linking $u$ and $v$, we consider two arcs: one from $u$ to $v$ and another from $v$ to $u$. With this convention an undirected graph is an Eulerian digraph.  

Let $G=(V,E)$ be a digraph. For a subset $A$ of $E$ let $G[A]$ denote the graph $(V',E')$ with $V'=V$ and $E'=A$. A \emph{feedback arc set} $F$ of $G$ is a subset of $E$ such that removing the arcs in $F$ from $G$ leaves an acyclic graph. An \emph{acyclic arc set} $A$ of $G$ is a subset of $E$ such that the graph $G[A]$ is acyclic. Clearly, an acyclic arc set is the complement of a feedback arc set. A feedback arc set (resp. acyclic arc set) is \emph{minimum} (resp. \emph{maximum}) if it has minimum (resp. maximum) number of arcs over all feedback arc sets (resp. acyclic arc sets) of $G$. A feedback arc set $A$ (resp. acyclic arc set $A$) is \emph{minimal} (resp. \emph{maximal}) if for any $e \in A$ (resp. $e \in E\backslash A$) we have $A\backslash \set{e}$ (resp. $A\cup \set{e}$) is not a feedback arc set (resp. acyclic arc set).


From now until Proposition \ref{cardinality-independence} we work with an Eulerian connected digraph $G=(V,E)$ (note that a connected Eulerian digraph is also strongly connected). A lot of properties of the acyclic arc sets of $G$ are given in this section. The most important result is that finding a maximum acyclic arc set can be restricted to finding an acyclic arc set of the maximum size that has some special properties. This establishes a relation between the MINFAS problem and the MINREC problem on Eulerian digraphs, that we explore in the next section.

For two subsets $A$ and $B$ of $V$, we denote by $\cut{G}{A}{B}$ the set $\set{ (u,v)\in E: u\in A \text{ and } v\in B}$. We write $\comcut{G}{A}$ for $\cut{G}{A}{V\backslash A}$, and $\comcuti{G}{A}$ for $\cut{G}{V\backslash A}{A}$. The following appears stronger than the property $\forall v \in V, \indegree{G}{v}=\outdegree{G}{v}$, but are actually equivalent

\begin{lem}
\label{cut-equation}
For every $A \subseteq V$ we have $|\comcut{G}{A}|=|\comcuti{G}{A}|$.
\end{lem}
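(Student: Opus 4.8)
The plan is to establish the identity by double counting the arcs of $G$ incident to the set $A$, invoking the degree equality only at the very end. First I would partition the arcs that have at least one endpoint in $A$ into three pairwise disjoint classes: the arcs internal to $A$ (both endpoints in $A$), the arcs of $\comcut{G}{A}=\cut{G}{A}{V\backslash A}$ (tail in $A$, head outside $A$), and the arcs of $\comcuti{G}{A}=\cut{G}{V\backslash A}{A}$ (head in $A$, tail outside $A$). Since $G$ is a simple digraph each such arc occurs exactly once, and these three classes are clearly disjoint and together exhaust all arcs touching $A$.

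Next I would evaluate the two sums $\sum_{v\in A}\outdegree{G}{v}$ and $\sum_{v\in A}\indegree{G}{v}$ against this partition. An internal arc contributes $1$ to the first sum (through its tail) and $1$ to the second (through its head); an arc of $\comcut{G}{A}$ contributes $1$ to the first sum and $0$ to the second; an arc of $\comcuti{G}{A}$ contributes $0$ to the first and $1$ to the second. Writing $d$ for the number of internal arcs, this yields $\sum_{v\in A}\outdegree{G}{v}=d+|\comcut{G}{A}|$ and $\sum_{v\in A}\indegree{G}{v}=d+|\comcuti{G}{A}|$.

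Finally, because $G$ is Eulerian we have $\indegree{G}{v}=\outdegree{G}{v}$ for every $v\in V$, so the two left-hand sides above coincide; cancelling $d$ gives $|\comcut{G}{A}|=|\comcuti{G}{A}|$. I do not expect any genuine obstacle here: the only points needing (minor) care are checking that the three arc classes really are disjoint and exhaustive, and observing that the argument goes through verbatim for multigraphs if the degree sums are read with multiplicity. Note also that connectedness of $G$ is not used for this lemma, and that specialising to $A=\set{v}$ recovers the defining condition $\indegree{G}{v}=\outdegree{G}{v}$, which is the promised equivalence mentioned before the statement.
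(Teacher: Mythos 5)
Your proof is correct and follows essentially the same route as the paper: both arguments split the arcs touching $A$ into internal arcs, $\comcut{G}{A}$, and $\comcuti{G}{A}$, identify $\sum_{v\in A}\indegree{G}{v}$ and $\sum_{v\in A}\outdegree{G}{v}$ with the corresponding disjoint unions, and cancel the internal arcs using the Eulerian degree condition. Nothing further is needed.
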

\begin{proof}
Let $X=\set{(u,v) \in E: v\in A},Y=\set{(u,v) \in E: u \in A},Z=\set{(u,v) \in E: u\in A\text{ and } v \in A}$. We have $X=\comcuti{G}{A}\cup  Z$ and $Y=\comcut{G}{A}\cup Z$. Since $\comcut{G}{A},\comcuti{G}{A}$ and $Z$ are pairwise disjoint, $|X|=|\comcuti{G}{A}|+|Z|$ and $|Y|=|\comcut{G}{A}|+|Z|$. Since $G$ is Eulerian, we have $0=\underset{v \in A}{\sum}(\indegree{G}{v}-\outdegree{G}{v})=|X|-|Y|=|\comcuti{G}{A}|-|\comcut{G}{A}|$.
\end{proof}

\begin{defi}
Let $A$ be an acyclic arc set and $s$ a vertex of $G$. Let $\reaches{G}{A}{v}$ denote the subset of all vertices of $G$ that are reachable from $s$ by a path in $G[A]$. The set $A \backslash \comcuti{G}{\reaches{G}{A}{s}} \cup \comcut{G}{\reaches{G}{A}{s}}$ is called \emph{\cutrev} of $A$ at $s$. We denote this set by $\cutrevf{G}{A}{s}$.
\end{defi}

\noindent The idea of {\cutrev} is to construct a new acyclic arc set, so that it does not contain less arcs than the old one. Moreover, the number of vertices, that are reachable from a fixed vertex, increases after performing the {\cutrev}. For an intuitive illustration of this definition let us give here an example. Figure \ref{fig:im0} shows an Eulerian digraph with an acyclic arc set $A$ shown in Figure \ref{fig:im1} (plain arcs). If we want to compute the {\cutrev} of $A$ at $v_4$, we look at all vertices reachable from $v_4$ in $G[A]$. These vertices are the set $\reaches{G}{A}{v_4}$ drawn in black on Figure \ref{fig:im2}. The plain arcs in Figure \ref{fig:im3} form the set $\comcuti{G}{\reaches{G}{A}{v_4}}$: arcs of $A$ going from the outside (the set $\set{v_2,v_3,v_7}$) to $\reaches{G}{A}{v_4}$; and the other dotted arcs in this figure form  the set  $\comcut{G}{\reaches{G}{A}{v_4}}$ : arcs of $G$ going from $\reaches{G}{A}{v_4}$ to the outside. Remove the plain arcs in $A$ from $A$ and add the dotted arcs of Figure \ref{fig:im3}, we obtain $\cutrevf{G}{A}{v_4}$ that is shown in Figure \ref{fig:im4}.

\begin{figure}[!h]
\centering
\subfloat[An Eulerian digraph]{\label{fig:im0} \includegraphics[bb=0 0 258 198,width=1.71in,height=1.31in,keepaspectratio]{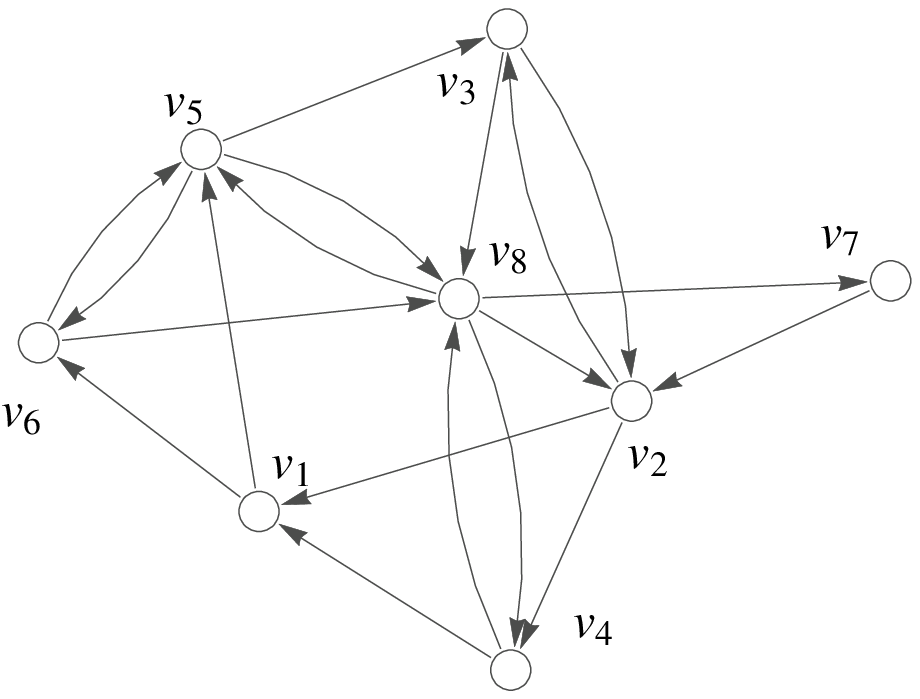} }\\
\subfloat[An acyclic arc set $A$]{\label{fig:im1} \includegraphics[bb=0 0 268 207,width=2.05in,height=1.58in,keepaspectratio]{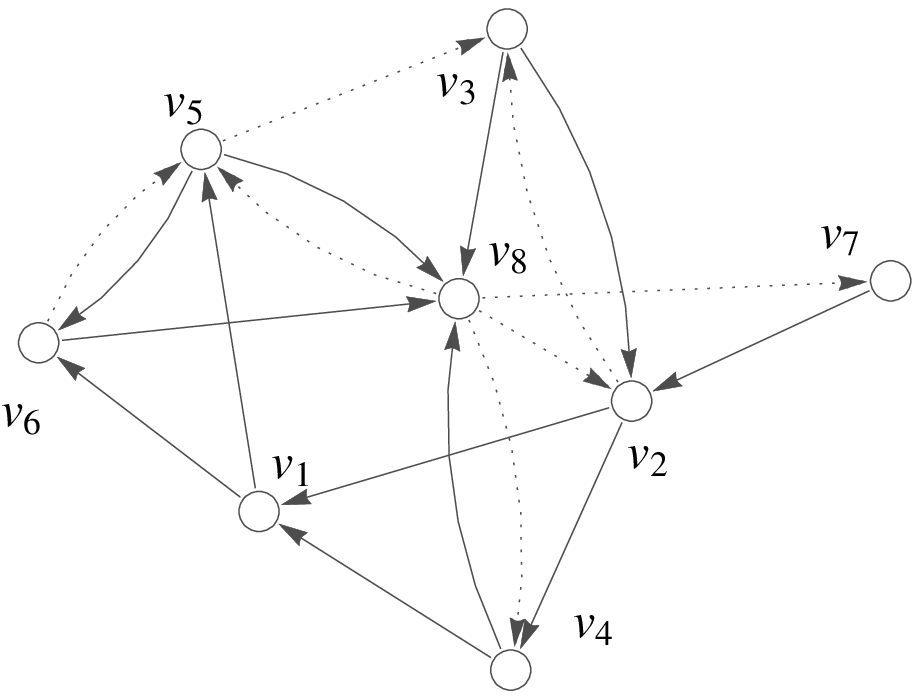}}
\quad
\subfloat[$v_4$ is chosen and the set $R$ of vertices reachable from $v_4$ in $G\text{[}A\text{]}$]{\label{fig:im2} \includegraphics[bb=0 0 303 234,width=2.05in,height=1.58in,keepaspectratio]{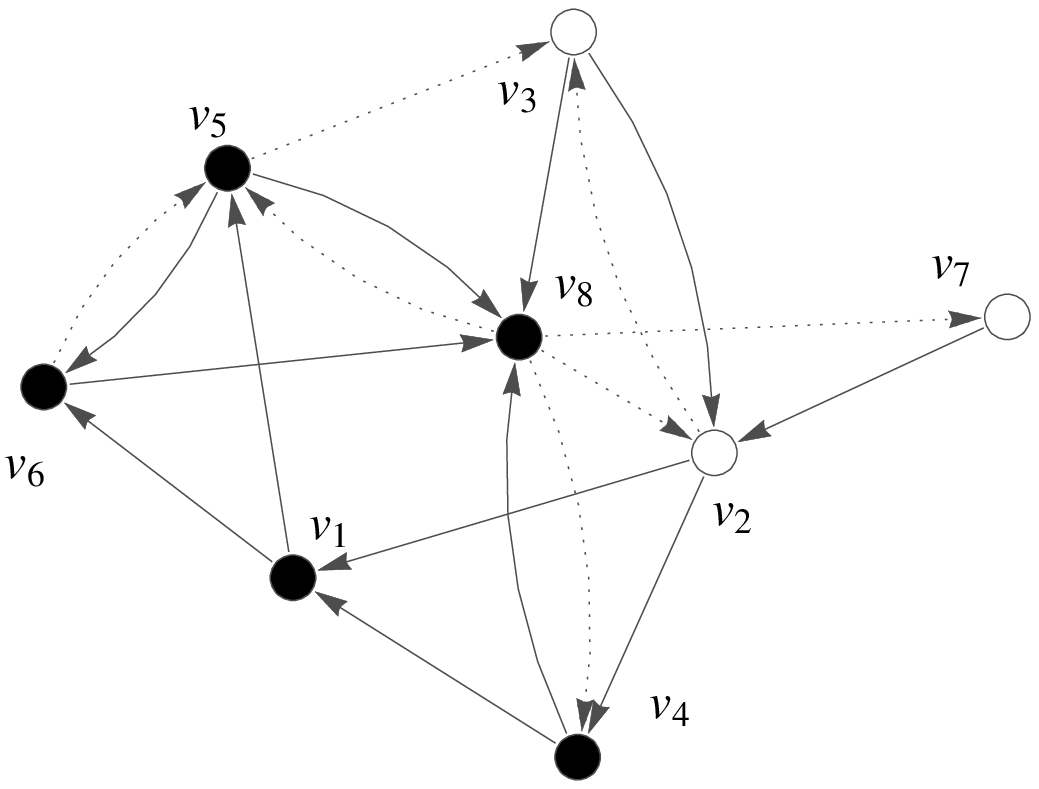}}\\
\subfloat[the arcs of $G$ going into $R$ from the outside (plain arcs) and the arcs of $G$ going from $R$ to the outside (the dotted arcs)]{\label{fig:im3} \includegraphics[bb=0 0 260 200,width=2.05in,height=1.57in,keepaspectratio]{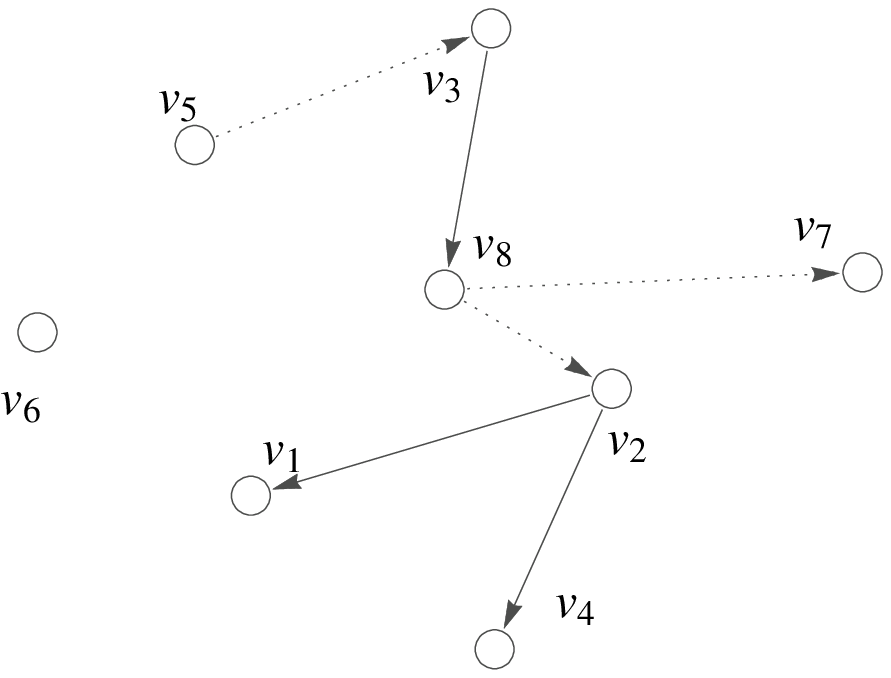} }
\subfloat[The {\cutrev} $\cutrevf{G}{A}{v_4}$]{\label{fig:im4} \includegraphics[bb=0 0 268 207,width=2.05in,height=1.58in,keepaspectratio]{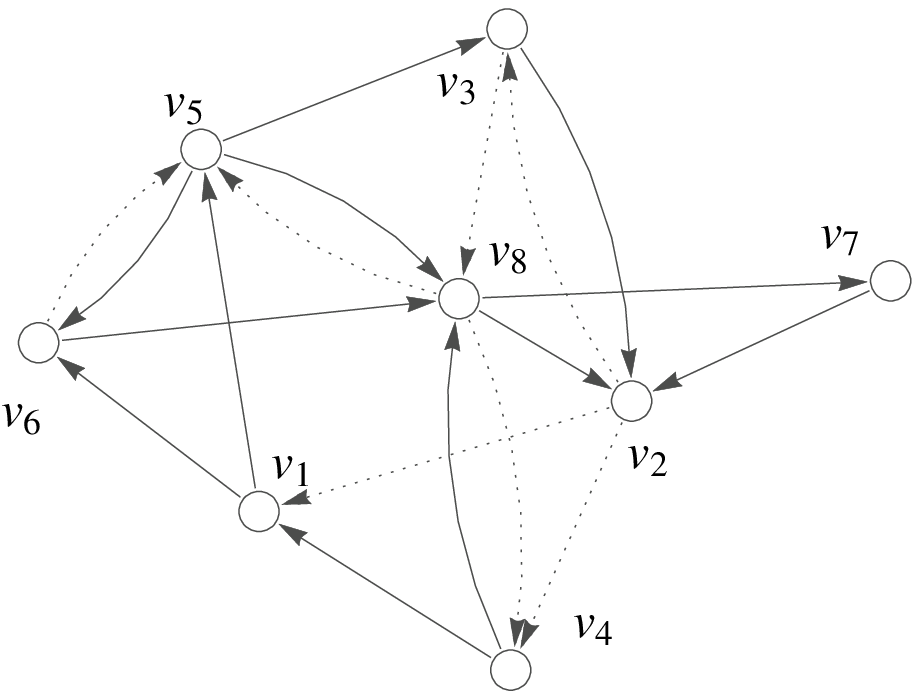} }
\caption{An example of {\cutrev}}
\label{fig:im01234}
\end{figure}

A simple observation from the above example is that a {\cutrev} is still an acyclic arc set and its number of arcs is not less than the number of arcs of the old one. The following shows that this property holds not only for this example but also holds for the general case.
\begin{lem}
\label{cardinality non-decreasing}
Let $A$ be an acyclic arc set and $s$ a vertex of $G$. Then $\cutrevf{G}{A}{s}$ is also an acyclic arc set of $G$. Moreover $|A|\leq |\cutrevf{G}{A}{s}|$.
\end{lem}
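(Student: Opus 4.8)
The plan is to split the claim into its two parts: first that $\cutrevf{G}{A}{s}$ is acyclic, and then the cardinality inequality, which will follow from Lemma \ref{cut-equation} together with a disjointness bookkeeping argument.

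For acyclicity, write $R=\reaches{G}{A}{s}$ and $B=\cutrevf{G}{A}{s}=\bigl(A\backslash\comcuti{G}{R}\bigr)\cup\comcut{G}{R}$. Suppose for contradiction that $G[B]$ contains a directed cycle $C$. The key structural observation is that in $G[B]$ there is no arc entering $R$: we have removed from $A$ all arcs of $\comcuti{G}{R}$ (arcs from $V\backslash R$ into $R$), and the arcs we added, namely $\comcut{G}{R}$, all leave $R$; arcs of $A$ internal to $R$ or internal to $V\backslash R$ are untouched, and none of them enters $R$ from outside. Hence $R$ is ``closed under predecessors'' in $G[B]$ only in the sense that no arc of $B$ goes from $V\backslash R$ into $R$. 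Therefore the cycle $C$ is either entirely inside $R$ or entirely inside $V\backslash R$. If $C\subseteq V\backslash R$, then all arcs of $C$ lie in $A$ (the only arcs of $B$ touching $V\backslash R$ that are not in $A$ are those of $\comcut{G}{R}$, which have their tail in $R$), contradicting that $A$ is acyclic. If $C\subseteq R$: every vertex of $R$ is reachable from $s$ in $G[A]$; I claim the arcs of $B$ inside $R$ are exactly the arcs of $A$ inside $R$. Indeed $B$ restricted to $R\times R$ equals $A$ restricted to $R\times R$, because $\comcuti{G}{R}$ and $\comcut{G}{R}$ both involve a vertex of $V\backslash R$, so removing/adding them does not affect arcs internal to $R$. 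Thus $C$ would be a cycle in $G[A]$, again a contradiction. Hence $G[B]$ is acyclic.

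For the cardinality bound, decompose $A$ according to $R$: let $A_{in}=A\cap\comcuti{G}{R}$, and $A_{rest}=A\backslash\comcuti{G}{R}$, so $|A|=|A_{in}|+|A_{rest}|$ and $B=A_{rest}\cup\comcut{G}{R}$. These last two sets are disjoint: $A_{rest}\subseteq A$ contains no arc of $\comcut{G}{R}$ because $A$ is acyclic and $R=\reaches{G}{A}{s}$ is closed under successors in $G[A]$ — any arc of $A$ with tail in $R$ has head in $R$, hence is not in $\comcut{G}{R}$. Therefore $|B|=|A_{rest}|+|\comcut{G}{R}|$. Comparing, $|B|-|A|=|\comcut{G}{R}|-|A_{in}|\geq|\comcut{G}{R}|-|\comcuti{G}{R}|$, and by Lemma \ref{cut-equation} the right-hand side is $0$. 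Hence $|A|\leq|B|$.

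The main obstacle I anticipate is the acyclicity argument, specifically pinning down precisely why a cycle in $G[B]$ cannot use the newly added arcs of $\comcut{G}{R}$ together with old arcs to ``escape and re-enter'' $R$: the crux is the assertion that $B$ has no arc from $V\backslash R$ into $R$, which uses both that we deleted $\comcuti{G}{R}$ and that $\comcut{G}{R}$ points outward; and separately the observation that $R$ is closed under successors in $G[A]$, which is what makes $A_{rest}$ and $\comcut{G}{R}$ disjoint and also underlies the claim that $B$ and $A$ agree on arcs internal to $R$. Once these two facts about $R$ are isolated, both parts of the lemma fall out cleanly.
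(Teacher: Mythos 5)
Your proposal is correct and follows essentially the same route as the paper: acyclicity because $\cutrevf{G}{A}{s}$ has no arc from $V\backslash\reaches{G}{A}{s}$ into $\reaches{G}{A}{s}$, so any cycle would live inside one side and consist of arcs of $A$; and the cardinality bound from the disjointness $A\cap\comcut{G}{\reaches{G}{A}{s}}=\emptyset$ (closure of $\reaches{G}{A}{s}$ under $A$-successors) combined with Lemma \ref{cut-equation}. Your write-up just spells out the same bookkeeping in slightly more detail.
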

\begin{proof}
By the definition of {\cutrev} there is no arc in $\cutrevf{G}{A}{s}$ from a vertex in $V\backslash \reaches{G}{A}{s}$ to a vertex in $\reaches{G}{A}{s}$. It implies that if $\cutrevf{G}{A}{s}$ contains a cycle, the vertices in this cycle must be completely contained either in $\reaches{G}{A}{s}$ or in $V\backslash \reaches{G}{A}{s}$. In this case the arcs of the cycle are also the arcs of $A$, therefore the cycle is also a cycle of $A$, a contradiction to the acyclicity of $A$.

To prove $|A|\leq |\cutrevf{G}{A}{s}|$, we observe that $A \cap \comcut{G}{\reaches{G}{A}{s}}=\emptyset$ (from the maximality of $\reaches{G}{A}{s}$). 
From Lemma \ref{cut-equation}, we have $|\cutrevf{G}{A}{s}|\geq |A|+|\comcut{G}{\reaches{G}{A}{s}}|-|\comcuti{G}{\reaches{G}{A}{s} }|=|A|$, which completes the proof.
\end{proof}

The following is the main result of this subsection.

\begin{theo}
\label{unique-source property}
Let $N$ be the maximum number of arcs of an acyclic arc set of $G$. For every vertex $s$ of $G$ there is an acyclic arc set of $N$ arcs such that it contains no arc whose head is $s$.
\end{theo}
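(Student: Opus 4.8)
The plan is to take \emph{any} maximum acyclic arc set $A$ (so $|A|=N$) and show that a single {\cutrev} at $s$ already does the job. Set $R=\reaches{G}{A}{s}$ and $A'=\cutrevf{G}{A}{s}$. Lemma \ref{cardinality non-decreasing} tells us that $A'$ is acyclic and $|A'|\ge|A|=N$; since $N$ is the maximum possible, $|A'|=N$. So the only thing left is to verify that $A'$ has no arc whose head is $s$, and that is where all the work lies.

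To do that I would unravel $A'=\bigl(A\setminus\comcuti{G}{R}\bigr)\cup\comcut{G}{R}$ and classify the arcs of $G$ whose head is $s$. First note $s\in R$ (a vertex reaches itself by the trivial path), and $G$ is simple, so every such arc has the form $(u,s)$ with $u\neq s$. If $u\notin R$, then $(u,s)\in\comcuti{G}{R}$, hence it is removed in passing from $A$ to $A'$ and does not lie in $A'$. If $u\in R$, then $u$ is reachable from $s$ in $G[A]$, so the arc $(u,s)$ would close a directed cycle in $G[A]$, contradicting acyclicity of $A$; hence no such arc belongs to $A$, and since arcs internal to $R$ are untouched by the {\cutrev} (they lie neither in $\comcuti{G}{R}$ nor in $\comcut{G}{R}$), no such arc belongs to $A'$ either. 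Finally, every arc that the {\cutrev} \emph{adds} lies in $\comcut{G}{R}$ and therefore has its head in $V\setminus R$, so not equal to $s$. Combining the cases, $A'$ contains no arc whose head is $s$, which is exactly the claimed statement.

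I do not foresee a genuine obstacle: the engine is Lemma \ref{cardinality non-decreasing}, which guarantees that the {\cutrev} cannot shrink the set, and optimality of $A$ upgrades this to equality $|A'|=N$. The only place demanding care is the arc bookkeeping around $s$ — in particular the observation that arcs $(u,s)$ with $u\in R$ are forced out of $A$ by acyclicity (so they cannot sneak back into $A'$), together with the observation that all freshly inserted arcs point \emph{out of} $R$. Once these two points are in place the argument is immediate, and in particular there is no need to iterate the operation.
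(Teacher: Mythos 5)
Your proof is correct, and it takes a genuinely shorter route than the paper's. The paper also starts from a maximum acyclic arc set and applies the {\cutrev} at $s$, but it does so \emph{iteratively}: it builds $A_0=X$, $A_{i+1}=\cutrevf{G}{A_i}{s}$, proves that $\reaches{G}{A_i}{s}$ strictly grows as long as it is not all of $V$, and only once $\reaches{G}{A_k}{s}=V$ does it conclude that no arc $(v,s)$ can lie in $A_k$, since every $v$ is then reachable from $s$ and such an arc would close a cycle. You instead observe that a \emph{single} application already suffices, by classifying the arcs of $G$ with head $s$: those with tail outside $R=\reaches{G}{A}{s}$ lie in $\comcuti{G}{R}$ (because $s\in R$) and are deleted by the operation and not re-inserted (inserted arcs lie in $\comcut{G}{R}$ and have head outside $R$), while those with tail in $R$ were never in $A$ by acyclicity and, being internal to $R$, are untouched. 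Combined with Lemma \ref{cardinality non-decreasing} and maximality of $N$, this gives the statement with no termination argument at all; your case analysis is complete and the appeal to $s\in R$ (trivial path) is the only convention one must check. What the paper's longer argument buys is a stronger conclusion that the statement does not record but that is implicitly invoked later (e.g.\ in the NP-hardness proof for EMINREC, where the acyclic arc set is taken with $s$ the \emph{unique} vertex of in-degree $0$ and every vertex reachable from $s$): iterating until $\reaches{G}{A_k}{s}=V$ yields an $N$-arc acyclic set in which all vertices are reachable from $s$, whereas after your single {\cutrev} other vertices may still have in-degree $0$. So your argument proves exactly the stated theorem more economically, while the paper's proof establishes the reachability-enhanced version needed downstream.
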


\begin{proof}
Let $X$ be an acyclic set of $G$ of $N$ arcs. We construct a sequence $\set{A_i}_{i\in \mathbb{N}}$ as follows:
 $A_0=X$ and $A_i=\cutrevf{G}{A_{i-1}}{s}$ for every $i\geq 1$. Lemma \ref{cardinality non-decreasing} and the maximum of $N$ imply that $|A_i|=N$ for every $i \in \mathbb{N}$. If $\reaches{G}{A_k}{s}=V$ for some $k$, $A_k$ is an acyclic set that has the required property since for any vertex $v \neq s$ of $G$ the existence of a path in $A_k$ from $s$ to $v$ implies that $(v,s) \not \in A_k$. The proof is completed by showing that  there always exists such a $k$.

Since a path from $s$ in $G[A_i]$ is also a path from $s$ in $G[A_{i+1}]$, we have $\reaches{G}{A_i}{s}\subseteq \reaches{G}{A_{i+1}}{s}$.  It suffices to show that if $\reaches{G}{A_i}{s}\subsetneq V$ then $\reaches{G}{A_i}{s}\subsetneq \reaches{G}{A_{i+1}}{s}$. Since $\reaches{G}{A_i}{s}\subsetneq V$, there is an arc $e=(v_1,v_2)$ of $G$ such that $v_1\in \reaches{G}{A_i}{s}$ and $v_2 \not \in \reaches{G}{A_i}{s}$. Since $e \in A_{i+1}$, there is a path in $A_{i+1}$ that is from $s$ to $v_2$ going through $v_1$. It implies that $v_2\in \reaches{G}{A_{i+1}}{s}$, therefore $\reaches{G}{A_i}{s}\subsetneq \reaches{G}{A_{i+1}}{s}$.
\end{proof}

\begin{defi}
A vertex $s$ of $G$ is called \emph{sink} in an acyclic arc set $A$ of $G$ if $s$ has indegree $0$ in $G[A]$. Let $A$ be an acyclic arc set of $G$ such that $A$ has exactly one sink $s$. A vertex $s'$ of $G$ distinct from $s$ is called \emph{sinkable} in $A$ if there is an arc of $G$ whose head is $s$ and whose tail is in $\reaches{G}{A}{s'}$.
\end{defi}

\noindent We call such a vertex $s'$ sinkable because the idea is to use the arc from $s'$ to $s$ to construct an acyclic arc set where it becomes a sink. The fact that this is done by the {\cutrev} at $s'$ is stated in the following lemma.

\begin{lem}
\label{sinkability}
Let $A$ be an acyclic arc set of $G$ having exactly one sink $s$. If $s'$ is sinkable in $A$ then $\cutrevf{G}{A}{s'}$ has exactly one sink $s'$ and $s$ is sinkable in $\cutrevf{G}{A}{s'}$. Moreover $A\subseteq \cutrevf{G}{\cutrevf{G}{A}{s'}}{s}$.
\end{lem}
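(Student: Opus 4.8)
The plan is to fix $R := \reaches{G}{A}{s'}$ and $A' := \cutrevf{G}{A}{s'}$, and to work throughout with the explicit description $A' = (A \backslash \comcuti{G}{R}) \cup \comcut{G}{R}$, whose two pieces are disjoint because the maximality of $R$ forces $A \cap \comcut{G}{R} = \emptyset$. Two preliminary facts will be used over and over. First, since $A$ is acyclic and $s$ is its unique sink (its unique vertex of indegree $0$ in $G[A]$), walking backwards along arcs of $G[A]$ from any vertex must terminate, and can terminate only at a sink, hence at $s$; so \emph{every} vertex of $G$ is reachable from $s$ in $G[A]$. Second, $s \notin R$: no arc of $A$ has head $s$, so no positive-length path of $G[A]$ ends at $s$, and as $s \neq s'$ this means $s$ is not reachable from $s'$; in particular $R \subsetneq V$. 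Finally, $A'$ is already an acyclic arc set by Lemma~\ref{cardinality non-decreasing}, so its sinks make sense; and sinkability of $s'$ gives an arc $(t,s) \in E$ with $t \in R$, which, since $s \notin R$, lies in $\comcut{G}{R} \subseteq A'$.

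First I would establish that $s'$ is the unique sink of $A'$. That $s'$ is a sink: an arc of $A'$ with head $s'$ cannot come from $\comcut{G}{R}$ (those arcs have head outside $R$, while $s' \in R$), and an arc $(u,s') \in A$ surviving in $A \backslash \comcuti{G}{R}$ must have $u \in R$ (since $s'\in R$, the arc lies in $\comcuti{G}{R}$ exactly when $u \notin R$); but then $u \in \reaches{G}{A}{s'}$ together with the arc $(u,s')$ creates a cycle in $G[A]$ (the digraph is simple, so $u \neq s'$), contradicting acyclicity of $A$. That $s'$ is the only sink: for $v \neq s'$ I exhibit an arc of $A'$ into $v$. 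If $v=s$, use $(t,s)$ above. If $v \in R\backslash\set{s'}$, the last arc $(w,v)$ of a positive-length $G[A]$-path from $s'$ to $v$ has $w,v\in R$, hence is neither cut nor added and survives in $A'$. If $v \notin R$ and $v\neq s$, then $v$ is not the sink of $A$, so some $(w,v)\in A$; here $w\notin R$ (else $v\in R$), so $(w,v)\in A\backslash\comcuti{G}{R}\subseteq A'$.

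The crux of the lemma is the identity $\reaches{G}{A'}{s} = V\backslash R$. For ``$\subseteq$'': $s\in V\backslash R$, and any arc of $A'$ with tail in $V\backslash R$ must lie in $A\backslash\comcuti{G}{R}$ and, not being cut, must also have head in $V\backslash R$ — so $G[A']$-reachability from $s$ cannot escape $V\backslash R$. For ``$\supseteq$'': given $v\in V\backslash R$, take a $G[A]$-path from $s$ to $v$ (it exists by the first preliminary fact); this path cannot meet $R$, for if $x_j$ were its first vertex in $R$ then the remaining suffix — including $v$ — would be reachable from $x_j$, hence from $s'$, forcing $v\in R$; so all its arcs have both ends in $V\backslash R$ and survive in $A'$, giving $v\in\reaches{G}{A'}{s}$. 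With this identity the last two claims are short. Sinkability of $s$ in $A'$: since $s'\in R$ is reachable from $s$ in $G[A]$ and $s'\neq s$, a positive-length $G[A]$-path from $s$ to $s'$ has last arc $(p,s')$ with $p\notin R$ (otherwise a cycle in $G[A]$ as before), so $(p,s')\in E$ has tail $p\in V\backslash R = \reaches{G}{A'}{s}$, as required. The inclusion $A\subseteq\cutrevf{G}{A'}{s}$: writing $A'':=\cutrevf{G}{A'}{s}$ and using $\reaches{G}{A'}{s}=V\backslash R$ one gets $\comcuti{G}{V\backslash R}=\comcut{G}{R}$ and $\comcut{G}{V\backslash R}=\comcuti{G}{R}$, so $A''=(A'\backslash\comcut{G}{R})\cup\comcuti{G}{R}$; since $A\cap\comcut{G}{R}=\emptyset$ this simplifies to $A''=(A\backslash\comcuti{G}{R})\cup\comcuti{G}{R}$, which contains $A$.

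I expect the only genuine obstacle to be that identity $\reaches{G}{A'}{s}=V\backslash R$: the tempting approach is to chase reachability arc by arc through the cut-stretch, whereas the clean route is to combine the two structural facts ``$R$ is closed under reachability in $G[A]$'' and ``every vertex is reachable from $s$ in $G[A]$''. Everything else is either a direct consequence of the definition of a unique sink or routine set manipulation with $A$, $\comcut{G}{R}$ and $\comcuti{G}{R}$.
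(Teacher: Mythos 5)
Your proof is correct and takes essentially the same route as the paper's: the key ingredients (every vertex is reachable from $s$ in $G[A]$ because $s$ is the unique indegree-zero vertex of an acyclic set, the sinkable arc into $s$ lies in $\comcut{G}{\reaches{G}{A}{s'}}$ and hence in the cut-stretch, paths of $A$ from $s$ to vertices outside $\reaches{G}{A}{s'}$ avoid that set and survive, and the identity $\cutrevf{G}{\cutrevf{G}{A}{s'}}{s}=A\cup\comcuti{G}{\reaches{G}{A}{s'}}$) are exactly those used there. The only cosmetic differences are that you certify the unique sink by exhibiting an incoming arc for each $v\neq s'$ whereas the paper proves the slightly stronger statement that every vertex is reachable from $s'$ in $\cutrevf{G}{A}{s'}$, and that you write out the steps the paper dismisses as ``clear'' (notably $\reaches{G}{\cutrevf{G}{A}{s'}}{s}=V\backslash\reaches{G}{A}{s'}$ and the final set identity).
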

\begin{proof}
Let $X$ denote $\reaches{G}{A}{s'}$ and $Y=V\backslash X$. Since $A$ has exactly one sink $s$, for any $v \in V$ there is a path in $A$ from $s$ to $v$, therefore from $s$ to $s'$. The acyclicity of $G[A]$ implies that $s \in Y$.

Clearly, $s'$ is a sink of $\cutrevf{G}{A}{s'}$. To prove that $\cutrevf{G}{A}{s'}$ has a unique sink, it suffices to show that for any $v \in V$ there is a path in $\cutrevf{G}{A}{s'}$ from $s'$ to $v$. It is trivial if $v \in X$. We consider the case $v \in Y$. Let $(v',s)$ be an arc of $G$ such that $v'\in X$. Such an arc exists because of the assumption of the lemma. By the definition of {\cutrev} we have $(v',s)\in \cutrevf{G}{A}{s'}$. Let $P_1$ and $P_2$ be paths in $A$ from $s'$ to $v'$ and from $s$ to $v$, respectively. It follows from the definition of {\cutrev} that $\cutrevf{G}{A}{s'}$ contains $P_1$. Since $v \in Y$, the path $P_2$ goes through only the vertices in $Y$. Therefore $\cutrevf{G}{A}{s'}$ also contains $P_2$. Hence the path $P_1\cup \set{(v',s)}\cup P_2$ is a path in $\cutrevf{G}{A}{s'}$ from $s'$ to $v$.

Let $P_3$ be a path in $A$ from $s$ to $s'$. The acyclicity of $G[A]$ implies that $P_3$ goes through only the vertices in $Y\cup \set{s'}$. Therefore there is an arc $(v'',s')$ such that $v''\in Y$. Clearly, we have $\reaches{G}{\cutrevf{G}{A}{s'}}{s}=Y$. By the definition of the sinkability we have $s$ is sinkable in $\cutrevf{G}{A}{s'}$.

It remains to show that $A\subseteq \cutrevf{G}{\cutrevf{G}{A}{s'}}{s}$. This follows immediately from the fact that\linebreak $\cutrevf{G}{\cutrevf{G}{A}{s'}}{s}=A\cup \cut{G}{Y}{X}$.
\end{proof}

For each $s \in V$, let $\chi_{s}$ denote the number of maximum acyclic arc sets of $G$ with exactly one sink $s$. It is well-known that for an undirected graph $G$, $T_G(1,0)$ counts the number of acyclic orientations with a unique fixed sink, therefore counts $\chi_{s}$, where $T_G(x,y)$ is the Tutte polynomial of $G$. This implies that if $G$ is an undirected graph, $\chi_{s}$ is independent of the choice of $s$. The following is a generalization of this fact to Eulerian digraphs

\begin{prop}
\label{cardinality-independence}
For any two vertices $s_1,s_2$ of $G$ we have $\chi_{s_1}=\chi_{s_2}$.
\end{prop}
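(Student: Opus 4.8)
The plan is to prove the stronger, local statement: if $(a,b)$ is an arc of $G$, then $\chi_b\le \chi_a$. Since $G$ is strongly connected, for any two vertices $s_1,s_2$ there is a directed path from $s_1$ to $s_2$ and one from $s_2$ to $s_1$; applying the local inequality arc by arc along these paths yields $\chi_{s_2}\le \chi_{s_1}$ and $\chi_{s_1}\le \chi_{s_2}$, hence equality. So everything reduces to the case of a single arc $(a,b)$, and since $G$ is simple we may assume $a\ne b$.

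To get $\chi_b\le\chi_a$ I would exhibit an injection from the set of maximum acyclic arc sets with unique sink $b$ into the set of maximum acyclic arc sets with unique sink $a$, given by $A\mapsto \cutrevf{G}{A}{a}$. The first point is that the arc $(a,b)$ makes $a$ automatically sinkable in any acyclic arc set $A$ having $b$ as its unique sink: the arc $(a,b)$ has head $b$ and tail $a\in\reaches{G}{A}{a}$, which is precisely the definition of sinkability. Lemma \ref{sinkability} then says that $\cutrevf{G}{A}{a}$ is an acyclic arc set with exactly one sink $a$; Lemma \ref{cardinality non-decreasing} gives $|\cutrevf{G}{A}{a}|\ge |A|=N$ (with $N$ the maximum size of an acyclic arc set, as in Theorem \ref{unique-source property}), and since $N$ is maximal this is an equality. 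Thus the map indeed lands in the set counted by $\chi_a$.

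For injectivity I would use the ``moreover'' clause of Lemma \ref{sinkability}, namely $A\subseteq \cutrevf{G}{\cutrevf{G}{A}{a}}{b}$. The right-hand side is acyclic (Lemma \ref{cardinality non-decreasing} applied twice), hence has at most $N$ arcs, and it contains $\cutrevf{G}{A}{a}$, which already has $N$ arcs, so it has exactly $N$ arcs; together with $|A|=N$ and the inclusion this forces $\cutrevf{G}{\cutrevf{G}{A}{a}}{b}=A$. Hence $A$ is recovered from $\cutrevf{G}{A}{a}$, so the map is injective and $\chi_b\le\chi_a$. I do not expect a genuinely hard step here: the only subtlety is to notice that the hypotheses of Lemma \ref{sinkability} hold for free because of the arc $(a,b)$, and to use maximality of $N$ consistently to promote the set-inclusions produced by {\cutrev} into equalities; were the ``moreover'' part of Lemma \ref{sinkability} unavailable, recovering $A$ from its {\cutrev} (i.e.\ injectivity) would be the obstacle.
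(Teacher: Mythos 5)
Your proposal is correct and is essentially the paper's own proof: the local inequality $\chi_b\le\chi_a$ for each arc $(a,b)$, established by the injection $A\mapsto \cutrevf{G}{A}{a}$ (well-defined because the arc $(a,b)$ makes $a$ sinkable, so Lemma \ref{sinkability} and Lemma \ref{cardinality non-decreasing} plus maximality apply; injective via the ``moreover'' clause $A\subseteq \cutrevf{G}{\cutrevf{G}{A}{a}}{b}$ promoted to equality by maximality), followed by chaining along directed paths using strong connectivity. The only blemish is your aside that $\cutrevf{G}{\cutrevf{G}{A}{a}}{b}$ contains $\cutrevf{G}{A}{a}$: writing $X=\reaches{G}{A}{a}$, the double {\cutrev} equals $A\cup\cut{G}{V\backslash X}{X}$ (see the proof of Lemma \ref{sinkability}), which omits the arcs of $\comcut{G}{X}$ that the first {\cutrev} added, so that containment generally fails --- but it is harmless, since the facts you already have ($A\subseteq \cutrevf{G}{\cutrevf{G}{A}{a}}{b}$, acyclicity of the latter, and $|A|=N$) force the equality exactly as in the paper.
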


\begin{proof}
We claim that if $(v',v) \in E(G)$ then $\chi_{v}\leq \chi_{v'}$. Let $\mathcal{A}_1$ denote the set of maximum acyclic arc sets of $G$ having exactly one sink $v$, and $\mathcal{A}_2$ the set of maximum acyclic arc sets having exactly one sink $v'$. Since $(v',v) \in E(G)$, $v'$ is sinkable in every acyclic arc set in $\mathcal{A}_1$. It follows from Theorem \ref{unique-source property} and Lemma \ref{sinkability} that the the map $\theta:\mathcal{A}_1\to \mathcal{A}_2$, defined by $A\to \cutrevf{G}{A}{v'}$, is well-defined. Let $A$ be arbitrary in $\mathcal{A}_1$. It follows from Lemma \ref{sinkability} that $A\subseteq \cutrevf{G}{\cutrevf{G}{A}{v'}}{v}$. Since $A$ is maximum, we have $A= \cutrevf{G}{\cutrevf{G}{A}{v'}}{v}$. This implies that $\theta$ is injective. Therefore $|\mathcal{A}_1|\leq |\mathcal{A}_2|$, equivalently $\chi_{v}\leq \chi_{v'}$.

The claim implies that for any two vertices $v'$ and $v$ of $G$ such that there is a path in $G$ from $v'$ to $v$, we have $\chi_{v}\leq \chi_{v'}$. Since $G$ is strongly connected, there is a path in $G$ from $s_1$ to $s_2$ and a path in $G$ from $s_2$ to $s_1$. Hence $\chi_{s_1}=\chi_{s_2}$.
\end{proof}

Note that in an undirected graph a maximal acyclic arc set is also a maximum acyclic arc set (and vice versa). This fact no longer holds for Eulerian digraphs. The assertion in Proposition \ref{cardinality-independence} is not correct if we replace the maximum acyclic arc sets by the maximal acyclic arc sets.

We recall the definition of the MINFAS problem
\begin{center}
\begin{tabular}{|l|}
\hline
\textbf{MINFAS Problem}\\
\text{}\\
\textbf{Input:} A digraph $G$\\
\textbf{Output:} Minimum number of arcs of a feedback arc set of $G$\\
\hline
\end{tabular}
\end{center}
When the problem is restricted to Eulerian digraphs, we call it \emph{EMINFAS} problem for short. Although the EMINFAS problem was known to be NP-hard for its multigraph version \cite{Fli11}, it is worth studying the computational complexity of the EMINFAS problem since most variants of the MINFAS problem are restrictions of the class of digraphs (simple) (see \cite{HMSSY12}). It does not seem that the construction in the proof of \cite{Fli11} is applicable to the case of simple digraphs. By using Theorem \ref{unique-source property} and a stronger construction we show that the EMINFAS is NP-hard. We work with a general digraph $G=(V,E)$, and construct an Eulerian digraph $G'$ so that an optimum value of the EMINFAS problem on $G'$ implies an optimum value of the MINFAS problem on $G$. The graph $G'=(V',E')$ used in the reduction of MINFAS to EMINFAS is constructed as follows.

The basic idea to construct an Eulerian graph $G'$ from $G$ would be to create a new vertex and add arcs from this new vertex to any vertex that has more out-degree than in-degree, and arcs from vertices which have in-degree greater than out-degree to the new vertex. To avoid multi-graphs, we furthermore add for each of those arcs a new vertex in between, which has in-degree and out-degree $1$. More precisely, the vertices of $G$ are denoted by $v_1,v_2,\cdots,v_n$ for some $n$. If $G$ is already an Eulerian digraph then $G':=G$. Otherwise let $G'$ be a copy of $G$. We add to $G'$ a new vertex $s$. For each vertex $v_i$ such that $\indegree{G}{v_i}<\outdegree{G}{v_i}$ we add $p_i$ new vertices $w_{i,1},w_{i,2},\cdots,w_{i,p_i}$ to $G'$, and for each $j \in [1..p_i]$ we add two arcs $(s,w_{i,j})$ and $(w_{i,j},v_i)$ to $G'$, where $p_i=\outdegree{G}{v_i}-\indegree{G}{v_i}$. For each vertex $v_i$ such that $\outdegree{G}{v_i}<\indegree{G}{v_i}$ we add $q_i$ new vertices $w_{i,1},w_{i,2},\cdots,w_{i,q_i}$ to $G'$, and for each $j \in [1..q_i]$ we add two arcs $(w_{i,j},s)$ and $(v_i,w_{i,j})$ to $G'$, where $q_i=\indegree{G}{v_i}-\outdegree{G}{v_i}$. Formally, the vertex set and the arc set of $G'$ are defined by 
\begin{align*}
V':=\set{s}\cup &V\cup \underset{1 \leq i \leq n}{\bigcup} \set{w_{i,j}: 1 \leq j \leq |\indegree{G}{v_i}-\outdegree{G}{v_i}|}\\
E':=E\cup &\underset{\indegree{G}{v_i}<\outdegree{G}{v_i}}{\bigcup} \set{(s,w_{i,j}): 1 \leq j \leq \outdegree{G}{v_i}-\indegree{G}{v_i}}\cup\\
&\underset{\indegree{G}{v_i}<\outdegree{G}{v_i}}{\bigcup} \set{(w_{i,j},v_i): 1 \leq j \leq \outdegree{G}{v_i}-\indegree{G}{v_i}}\cup\\
&\underset{\outdegree{G}{v_i}<\indegree{G}{v_i}}{\bigcup} \set{(w_{i,j},s): 1 \leq j \leq \indegree{G}{v_i}-\outdegree{G}{v_i}}\cup\\
&\underset{\outdegree{G}{v_i}<\indegree{G}{v_i}}{\bigcup} \set{(v_i,w_{i,j}): 1 \leq j \leq \indegree{G}{v_i}-\outdegree{G}{v_i}}
\end{align*}

\begin{figure}
\centering
\subfloat[A digraph $G$]{\label{fig:im5}\includegraphics[bb=0 0 300 145,width=2.05in,height=0.989in,keepaspectratio]{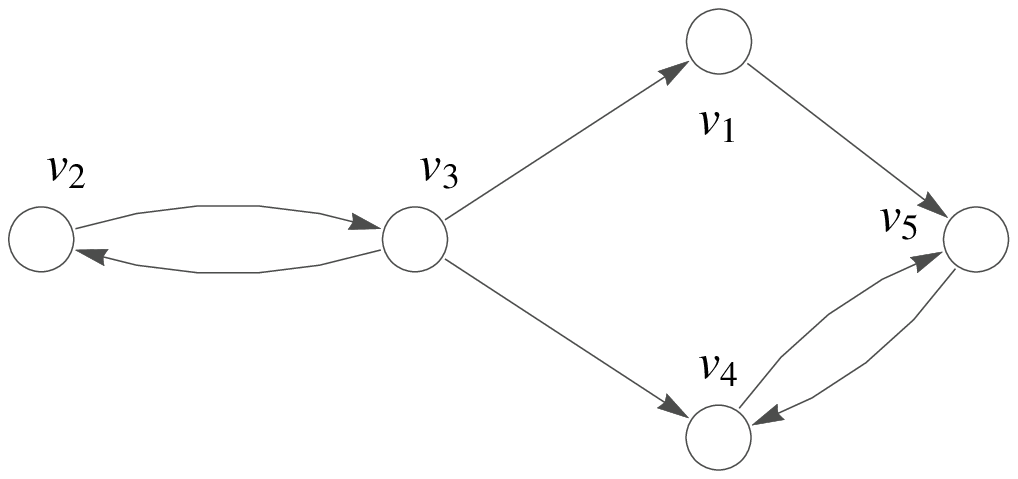}} \quad
\subfloat[Eulerian digraph $G'$]{\label{fig:im6} \includegraphics[bb=0 0 300 207,width=2.05in,height=1.41in,keepaspectratio]{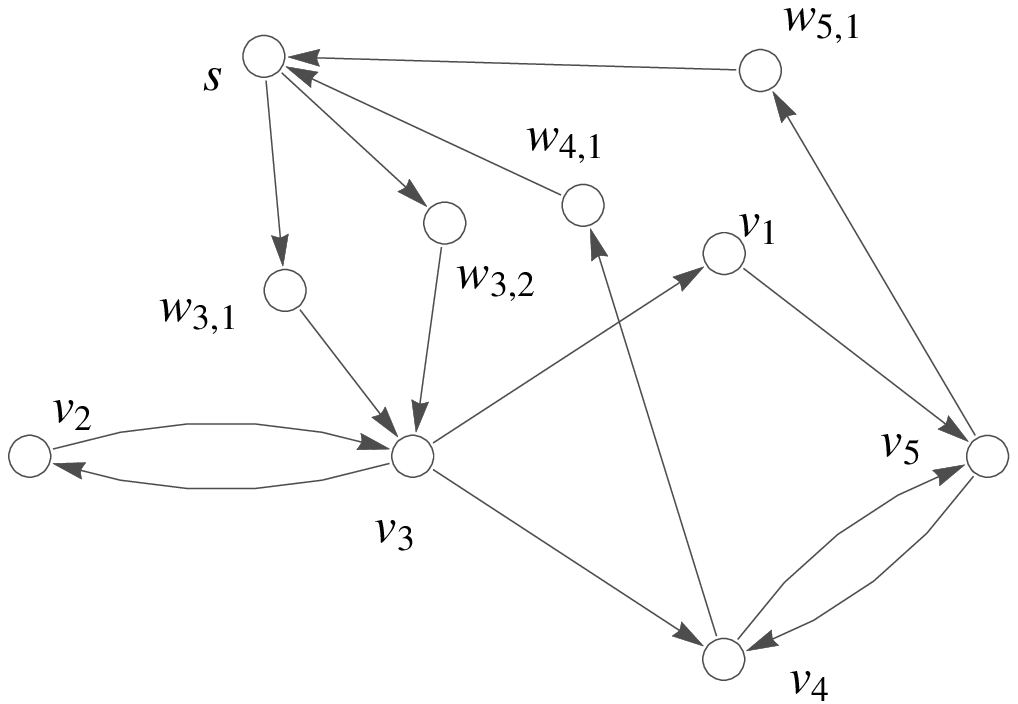}}\\
\subfloat[An acyclic arc set of $G$ of maximum cardinality]{\label{fig:im7} \includegraphics[bb=0 0 300 145,width=2.05in,height=0.989in,keepaspectratio]{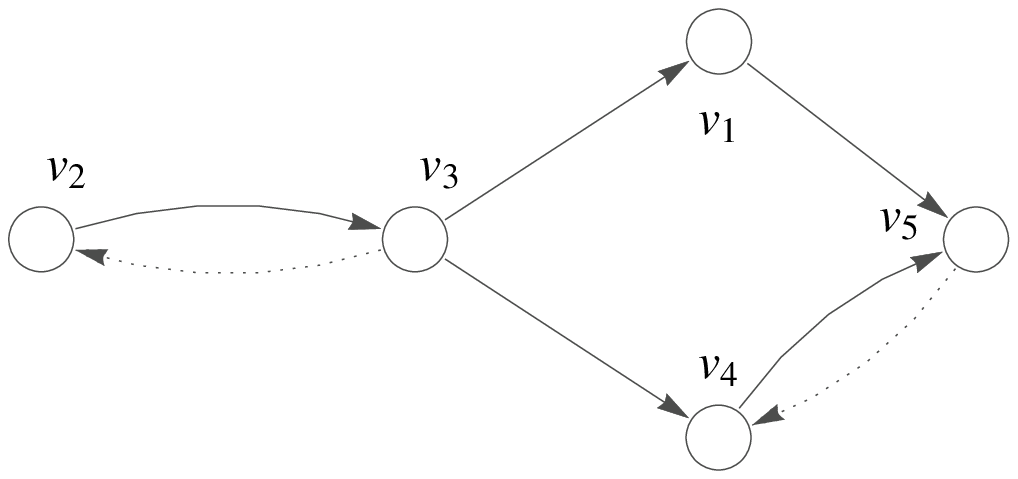} }\quad
\subfloat[An acyclic arc set of $G'$]{\label{fig:im8} \includegraphics[bb=0 0 300 207,width=2.05in,height=1.41in,keepaspectratio]{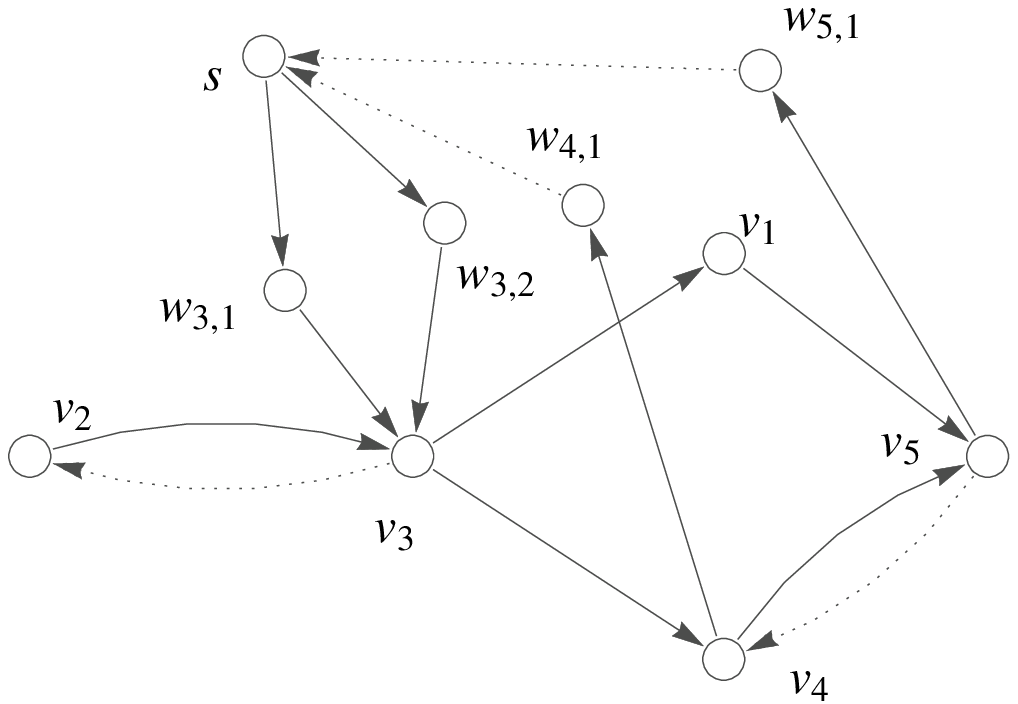}}
\caption{Maximum acyclic arc sets}
\label{fig:im5678}
\end{figure}

Figure \ref{fig:im5678} shows an example of $G$ (Fig. \ref{fig:im5}) and the corresponding Eulerian digraph $G'$ (Fig. \ref{fig:im6}). Figure \ref{fig:im7} shows an acyclic arc set of $G$ of maximum cardinality. In order to construct an acyclic arc set of $G'$, we add the arcs $(s,w_{i,j}),(w_{i,j},v_i)$ (all the arcs created to offset vertices having out-degree greater than in-degree in $G$) and $(v_i,w_{i,j})$ (half of the arcs created to offset vertices having in-degree greater than out-degree in $G$) to this set, which indeed results in an acyclic arc set of $G'$ of maximum cardinality. The following shows that we can always obtain an acyclic arc set of $G'$ of maximum cardinality with this construction.

\begin{lem}
\label{maximum acyclic arc sets relation}
Let $r$ be the maximum number of arcs of an acyclic arc set of $G$, and
$$d=\underset{\indegree{G}{v_i} \,<\, \outdegree{G}{v_i}}{\sum} (\outdegree{G}{v_i}-\indegree{G}{v_i}).$$
The maximum number of arcs of an acyclic arc set of $G'$ is $3d+r$.
\end{lem}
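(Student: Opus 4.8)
The plan is to establish the two inequalities $N\ge 3d+r$ and $N\le 3d+r$ separately, where $N$ denotes the maximum number of arcs of an acyclic arc set of $G'$. First note that $G'$ is a connected Eulerian digraph: it has exactly $d$ arcs of the form $(s,w_{i,j})$ and exactly $d$ of the form $(w_{i,j},s)$, using $\sum_v\indegree{G}{v}=\sum_v\outdegree{G}{v}$, and the balance at every other vertex is unchanged. Hence the results of this section apply to $G'$, and in particular so does Theorem~\ref{unique-source property}.

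For the bound $N\ge 3d+r$ I would make the construction sketched just before the lemma precise. Fix an acyclic arc set $A$ of $G$ with $|A|=r$, and let $\tilde A$ be the union of $A$ with all arcs $(s,w_{i,j})$ and $(w_{i,j},v_i)$ created from vertices $v_i$ with $\indegree{G}{v_i}<\outdegree{G}{v_i}$, together with all arcs $(v_i,w_{i,j})$ created from vertices $v_i$ with $\outdegree{G}{v_i}<\indegree{G}{v_i}$. There are exactly $d$ arcs of each of these three types (again by $\sum_v\indegree{G}{v}=\sum_v\outdegree{G}{v}$), and they are pairwise distinct and disjoint from $E(G)$, so $|\tilde A|=r+3d$. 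To see that $\tilde A$ is acyclic, observe that in $G'[\tilde A]$ the vertex $s$ has in-degree $0$, each $w_{i,j}$ on the out-excess side receives an arc only from $s$, and each $w_{i,j}$ on the in-excess side has out-degree $0$; hence no new vertex lies on a directed cycle, so any cycle of $G'[\tilde A]$ would already be a cycle of $G[A]$, contradicting acyclicity of $A$.

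For the bound $N\le 3d+r$, apply Theorem~\ref{unique-source property} to $G'$ and the vertex $s$: there is an acyclic arc set $B$ of $G'$ with $|B|=N$ that contains no arc whose head is $s$, hence none of the $d$ arcs $(w_{i,j},s)$. Write $B=B_0\cup B_1$ with $B_0=B\cap E(G)$ and $B_1$ the remaining arcs of $B$. Every arc of $B_1$ is then of the form $(s,w_{i,j})$, $(w_{i,j},v_i)$, or $(v_i,w_{i,j})$, and since $G'$ has exactly $d$ arcs of each of these three types, $|B_1|\le 3d$. On the other hand $G[B_0]$ is a subgraph of the acyclic graph $G'[B]$, so $B_0$ is an acyclic arc set of $G$ and $|B_0|\le r$. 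Therefore $N=|B|\le r+3d$, and combined with the previous paragraph $N=3d+r$.

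The step I expect to be the main obstacle is the upper bound, specifically discarding the arcs of the form $(w_{i,j},s)$. Without such a restriction an acyclic arc set of $G'$ could a priori contain all $4d$ new arcs, since each $w_{i,j}$ has only two incident arcs in $G'$ and keeping both of them merely creates a path through $w_{i,j}$, never a cycle; this would yield only the weaker estimate $N\le r+4d$. Theorem~\ref{unique-source property}, applied to $G'$ at $s$, is precisely the tool that lets us eliminate one of these $d$-families and close the gap; everything else is routine counting.
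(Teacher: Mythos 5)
Your proof is correct and follows essentially the same route as the paper: the same explicit construction $A\cup\{(s,w_{i,j})\}\cup\{(w_{i,j},v_i)\}\cup\{(v_i,w_{i,j})\}$ for the lower bound, and Theorem~\ref{unique-source property} applied to $G'$ at $s$ to discard the arcs $(w_{i,j},s)$ for the upper bound. The only (harmless) difference is that in the upper bound you simply bound the number of retained new arcs by $3d$, whereas the paper argues by maximality that $B'$ must contain all $3d$ of them before removing them; both yield $r'\le 3d+r$.
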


\begin{proof}
The lemma clearly holds if $G$ is an Eulerian digraph, in which case $G'=G$. We assume otherwise. Note that $4d$ arcs and $2d+1$ vertices are added to $G$ in order to construct $G'$. Let $r'$ be the maximum number of arcs of an acyclic arc set of $G'$. 

First, we show that $3d+r\leq r'$. Let $A$ be an acyclic arc set of $G$ of $r$ arcs. Let $A'=A \cup \set{(s,w_{i,j}):(s,w_{i,j})\in E'}\cup \set{(w_{i,j},v_i):(w_{i,j},v_i)\in E'}\cup \set{(v_i,w_{i,j}):(v_i,w_{i,j})\in E'}$. Since $A$ is an acyclic arc set of $G$ and $A'$ contains no arc $(w_{i,j},s)$ of $E'$, $A'$ is an acyclic arc set of $G'$. The sets $\set{(s,w_{i,j}):(s,w_{i,j})\in E'}$,$\set{(w_{i,j},v_i):(w_{i,j},v_i)\in E'}$ and $ \set{(v_i,w_{i,j}):(v_i,w_{i,j})\in E'}$ are pairwise-disjoint, and each of them has exactly $d$ arcs, therefore we have constructed an acyclic arc set $A'$ of size $|A'|=3 d+r$. It implies that $3d+r\leq r'$.

It remains to show that $r'\leq 3d+r$. Let $B$ be an acyclic arc set of $G'$ of $r'$ arcs. By Theorem \ref{unique-source property} there is an acyclic arc set $B'$ of $G'$ of $r'$ arcs such that $B'$ contains no arc $(w_{i,j},s) $ of $E'$. The set $B'$ must contain all arcs $e$ of $G'$ of the form $(s,w_{i,j})$, $(w_{i,j},v_i)$ or $(v_i,w_{i,j})$ since if otherwise, $B'\cup \set{e}$ is an acyclic arc set of $G'$ containing $r'+1$ arcs. Let $A''$ denote $B'\backslash \big(\set{(s,w_{i,j}): (s,w_{i,j}) \in E'}\cup \set{(w_{i,j},v_i):(w_{i,j},v_i)\in E'}\cup \set{(v_i,w_{i,j}):(v_i,w_{i,j})\in E'}\big)$. The set $A''$ is an acyclic arc set of $G$, therefore $|A''|\leq r$. It implies $r'=|B'|=3d+|A''|\leq 3d+r$.
\end{proof}

A direct consequence of Lemma \ref{maximum acyclic arc sets relation} is a NP-hardness proof for the EMINFAS problem.
\begin{theo}
The EMINFAS problem is NP-hard.
\end{theo}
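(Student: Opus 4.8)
The plan is to give a polynomial reduction from the MINFAS problem on general digraphs, which is NP-hard \cite{Kar72}, to the EMINFAS problem, using the construction $G \mapsto G'$ defined above together with Lemma \ref{maximum acyclic arc sets relation}. First I would verify that $G \mapsto G'$ is computable in polynomial time and that $G'$ is a simple connected Eulerian digraph: the quantity $d$ depends only on the degree sequence of $G$; since $\underset{i}{\sum}(\outdegree{G}{v_i}-\indegree{G}{v_i})=0$, the surpluses and deficits balance, so $\underset{\indegree{G}{v_i}<\outdegree{G}{v_i}}{\sum}p_i = \underset{\outdegree{G}{v_i}<\indegree{G}{v_i}}{\sum}q_i = d$, whence in $G'$ the vertex $s$ has in- and out-degree $d$, each $w_{i,j}$ has in- and out-degree $1$, and each $v_i$ has its imbalance corrected; connectivity of $G'$ follows from that of $G$ together with the fact that, when $G$ is not already Eulerian, $s$ is linked through new degree-$2$ vertices to both a surplus and a deficit vertex of $G$. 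Exactly $4d$ arcs are added, so $|E'|=|E|+4d$.

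Next I would restate Lemma \ref{maximum acyclic arc sets relation} in feedback-arc-set terms. An acyclic arc set is exactly the complement of a feedback arc set, so for any digraph $H$ the minimum size of a feedback arc set equals $|E(H)|$ minus the maximum size of an acyclic arc set. Letting $f, f'$ be the MINFAS values and $r, r'$ the maximum acyclic arc set sizes of $G, G'$, we get $f=|E|-r$ and, by Lemma \ref{maximum acyclic arc sets relation}, $f'=|E'|-r'=(|E|+4d)-(3d+r)=f+d$.

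To conclude: given $G$, compute $G'$ and $d$ in polynomial time, solve EMINFAS on the Eulerian digraph $G'$ to obtain $f'$, and output $f'-d=f$; equivalently, for the decision versions, $G$ has a feedback arc set of size at most $k$ iff $G'$ has one of size at most $k+d$. Hence EMINFAS is NP-hard. The hard part has essentially already been done in Lemma \ref{maximum acyclic arc sets relation} (and, through it, in Theorem \ref{unique-source property}); what remains is only the bookkeeping of the first two paragraphs — confirming the arc count $4d$ and the identity $\sum p_i=\sum q_i=d$, and checking that $G'$ lies in the class of simple connected Eulerian digraphs to which that lemma applies — so I do not anticipate any further combinatorial obstacle.
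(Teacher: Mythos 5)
Your proposal is correct and follows essentially the same route as the paper: a polynomial reduction from MINFAS to EMINFAS via the construction $G\mapsto G'$ and Lemma \ref{maximum acyclic arc sets relation}, with your identity $f'=f+d$ being exactly the paper's formula $b+3d+|E|-|E'|$ after substituting $|E'|=|E|+4d$. The extra bookkeeping you include (checking that $G'$ is a simple connected Eulerian digraph and that $\sum p_i=\sum q_i=d$) is left implicit in the paper's construction but is accurate.
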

\begin{proof}
Given a general digraph $G$, the Eulerian digraph $G'$ can be constructed in polynomial time. Let $b$ be the minimum number of arcs of a feedback arc set of $G'$, that is, the solution of EMINFAS on $G'$. Clearly $|E'|-b$ is the maximum number of arcs of an acyclic arc set of $G'$. By Lemma \ref{maximum acyclic arc sets relation} the maximum number of arcs of an acyclic arc set of $G$ is $|E'|-b-3 d$, where $d$ is defined as in Lemma \ref{maximum acyclic arc sets relation} and is computable in polynomial time. Thus the minimum number of arcs of a feedback arc set of $G$ is $|E|-(|E'|-b-3d)=b+3d+|E|-|E'|$. This implies a polynomial-time reduction from the MINFAS problem to the EMINFAS problem. The MINFAS problem is NP-hard, so is the EMINFAS problem.
\end{proof}

\section{NP-hardness of minimum recurrent configuration problem}

\subsection{Chip-firing game}

\subsubsection{Chip-firing game on digraphs} 

Let $G=(V,E)$ be a digraph. A vertex $s$ is called a \emph{global sink} if $\outdegree{G}{s}=0$ and for any $v\in V$ there is a path from $v$ to $s$ (possibly a path of length $0$). Clearly if $G$ has a global sink then it is unique. 

A \emph{configuration} $c$  of $G$ is a map from $V$ to $\mathbb{N}$. The value $c(v)$ can be regarded as the number of chips stored at $v$. A vertex $v$ of $G$ is \emph{active} if $c(v)\geq \outdegree{G}{v}\geq 1$. Configuration $c$ is \emph{stable} if $c$ has no active vertex. \emph{Firing} at $v$ results in the map $c':V\to \mathbb{Z}$ that is defined by 
$$
c'(w)=
\begin{cases}
c(w)-\outdegree{G}{w}&\text{ if }w=v\\
c(w)+1&\text{if } v \neq w \text{ and } (v,w)\in E\\
c(w)&\text{otherwise}
\end{cases}
$$
This firing is often denoted by $c\overset{v}{\to} c'$. Clearly if $v$ is active then $c'$ is also a configuration of $G$. In this case the firing $c\overset{v}{\to}c'$ is called \emph{legal}. If $d$ is obtained from $c$ by a sequence of legal firings (possibly a sequence of length $0$), we write $c\overset{*}{\to}d$.

A game begining with initial configuration $c_0$ and playing with legal firings is called \emph{a Chip-firing game}. Note that at each step of firing there are possibly more than one active vertex, therefore there are possibly more than one choice of legal firing. As a consequence, it may be a complicated problem if one wants to know the termination of the game. Hopefully, it is not the case for the Chip-firing model since the termination has a good characterization.

\begin{lem}\cite{BL92} Let $G$ be a digraph and $c$ an initial configuration. Then the game either plays forever or arrives at a unique stable configuration. Moreover if $G$ has a global sink, the game arrives  at a stable configuration. We denote by $c^{\circ}$ this stable configuration.
\end{lem}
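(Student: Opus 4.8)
The plan is to prove the two assertions separately: first the dichotomy together with uniqueness (this is the classical \emph{strong convergence} property and does not use the global sink), and then termination in the presence of a global sink. The basic fact underlying everything is that a legal firing preserves the total number of chips $\sum_{v\in V}c(v)$, since firing an active vertex $v$ removes $\outdegree{G}{v}$ chips from $v$ and adds exactly one chip to each of its $\outdegree{G}{v}$ out-neighbours (all distinct, and none equal to $v$, as $G$ is a simple digraph without loops).

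\textbf{Confluence and the dichotomy.} The first step is a one-step \emph{diamond lemma}: if $u\neq v$ are both active in a configuration $c$, with $c\overset{u}{\to}c_1$ and $c\overset{v}{\to}c_2$, then $v$ is still active in $c_1$ and $u$ is still active in $c_2$ (firing $u$ cannot decrease $c(v)$ because $u\neq v$ and there is no loop, so $c_1(v)\geq c(v)\geq\outdegree{G}{v}\geq 1$), and firing in either order leads to the same configuration, because the combined effect of firing both $u$ and $v$ is to subtract $\outdegree{G}{w}$ from each $w\in\set{u,v}$ and add one to every out-neighbour of $u$ and of $v$, independently of order. From this I would run the usual ``bubbling'' argument: if $\sigma=(v_1,v_2,\dots)$ is any legal firing sequence from $c$ and $\tau$ is a legal firing sequence from $c$ ending at a \emph{stable} configuration $d'$, then $v_1$ must occur in $\tau$ (otherwise $v_1$, being active in $c$ and never fired along $\tau$, keeps a non-decreasing chip count and would still be active in $d'$); applying the diamond lemma repeatedly to commute the first occurrence of $v_1$ to the front yields a legal sequence $c\overset{v_1}{\to}c'\overset{*}{\to}d'$, and one then inducts on the length of $\sigma$ with $c'$ in place of $c$. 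The conclusion is that the multiset of vertices fired along $\sigma$ is contained in that of $\tau$, and $\sigma$ can be completed to a legal sequence reaching $d'$. Taking $\sigma$ and $\tau$ to be two maximal legal games from $c$ then forces them to fire each vertex the same number of times, hence to have the same length (finite or infinite); and since the net change of a configuration depends only on the multiset of fired vertices, when finite they reach the same stable configuration. This is precisely the stated dichotomy: either every maximal play is infinite, or every maximal play is finite and terminates at one and the same stable configuration.

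\textbf{Termination with a global sink.} Suppose for contradiction that some legal game from $c$ is infinite. The sink $s$ has $\outdegree{G}{s}=0$, hence is never active and never fired. Since $V$ is finite, the set $S$ of vertices fired infinitely often is nonempty, and $s\notin S$. Pick $v\in S$; by the global-sink hypothesis there is a path $v=u_0\to u_1\to\cdots\to u_m=s$, and since $u_0\in S$ while $u_m\notin S$ there is an arc $(x,y)$ along this path with $x\in S$ and $y\notin S$. Let $T$ be a time after which no vertex of $V\backslash S$ is ever fired (possible, as there are finitely many such vertices and each fires only finitely often). After $T$, every firing is of a vertex of $S$, so $\sum_{w\in V\backslash S}c(w)$ is non-decreasing, and it strictly increases each time $x$ is fired, which happens infinitely often; hence $\sum_{w\in V\backslash S}c(w)$ is unbounded, contradicting $\sum_{w\in V\backslash S}c(w)\leq\sum_{v\in V}c(v)$, which is constant by conservation. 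Therefore every legal game is finite, and by the first part it ends at a unique stable configuration, which we may call $c^{\circ}$.

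\textbf{Main obstacle.} The delicate part is the bubbling/induction step in the confluence argument: one must verify carefully that moving a firing to the front of $\tau$ keeps all intermediate firings legal and does not move the endpoint, and that the induction (on $|\sigma|$, with a shortened $\tau$) is correctly set up. The termination argument, by contrast, is short once the conservation-of-chips invariant and the auxiliary set $S$ are in hand.
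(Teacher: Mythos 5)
Your proposal is correct, but note that the paper itself gives no proof of this lemma: it is quoted from Bj\"orner--Lov\'asz \cite{BL92}, so there is no internal argument to compare against. What you wrote is essentially the classical strong-convergence proof from that literature: chip conservation, the one-step diamond/commutation property, the observation that the first vertex fired in any legal sequence must also be fired in any terminating sequence (an unfired active vertex can only gain chips), and the exchange-and-induct argument showing that every legal play is dominated by, and extendable to, any terminating play, which yields both the finite/infinite dichotomy and uniqueness of the stable configuration; the global-sink case is then handled by the standard ``infinitely-often-fired set $S$'' argument, using that some arc leaves $S$ toward the sink, so $\sum_{w\notin S}c(w)$ would grow without bound while the total number of chips is conserved. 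The one step you flag as delicate is indeed the only place requiring care: a single application of the diamond lemma is not quite enough, since after moving $v_1$ to the front you must re-justify the legality of each subsequent firing; but this goes through because the intermediate configurations agree (firing effects commute additively) and an active vertex other than the one being fired never loses chips, so the exchange can be iterated along the whole prefix. With that verification spelled out, your argument is a complete and faithful rendering of the cited result.
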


\subsubsection{Recurrent configuration}

Let $G=(V,E)$ be a digraph with global sink $s$. Since $s$ is always not active no matter how many chips it has, it makes sense to define a configuration on $G$ to be a map from $V\backslash \set{s}$ to $\mathbb{N}$. In a firing when a chip goes into $s$, it vanishes. Therefore the total number of chips is no longer an invariant under firings. A configuration $c$ is \emph{accessible} if for any configuration $d$ there is a configuration $d'$ such that $(d+d')\overset{*}{\to}c$, where $d+d'$ is the configuration given by $(d+d')(v)=d(v)+d'(v)$ for any $v \in V\backslash \set{s}$. Configuration $c$  is \emph{recurrent} if it is both stable and accessible. We denote by $\recs{G}$ the set of all recurrent configurations of $G$.

Fix a linear order $v_1<v_2<\cdots<v_n$ on $V$, where $n=|V|$. The \emph{Laplacian matrix} $\Delta$ of $G$ with respect to the order is given by
$$\Delta_{i,j}=
\begin{cases}
\nofedges{G}{v_i}{v_j}&\text{ if } i\neq j\\
-\outdegree{G}{v_i}&\text{ if } i=j
\end{cases}
$$
With the order a configuration can be represented by a vector of $\mathbb{Z}^{n-1}$, therefore can be regarded as an element of the group $(\mathbb{Z}^{n-1},+)$. Let $\Delta_{\backslash s}$ denote the matrix $\Delta$ in which the row and the column corresponding to $s$ have been deleted. We define an equivalence relation $\sim$ on the set of all configurations of $G$ by $c_1\sim c_2$ iff there is a row vector $z \in \mathbb{Z}^{n-1}$ such that $c_1-c_2=z \cdot {\Delta}_{\backslash s}$. The following shows a relation between the set of recurrent configurations and the equivalence classes.

\begin{lem}\cite{HLMPPW08}
The set of all recurrent configurations $\recs{G}$ is an Abelian group with the addition defined by $c\oplus c':=(c+c')^{\circ}$. Moreover, each equivalence class according to $\sim$ contains exactly one recurrent configuration, and $|\recs{G}|$ is equal to the number of the equivalence classes.
\end{lem}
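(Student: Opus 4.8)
The plan is to reduce the statement to the combinatorics of a finite commutative monoid and then locate $\recs{G}$ inside it. \textbf{Two abelian lemmas first.} I would record the standard facts about legal firing: \emph{monotonicity}, that $c\overset{*}{\to}d$ implies $c+e\overset{*}{\to}d+e$ for any configuration $e$ (adding chips never blocks a legal firing, and the same multiset of firings produces the same net change), and, combining this with the cited uniqueness Lemma \cite{BL92}, \emph{confluence}: $(c+d)^{\circ}=(c^{\circ}+d)^{\circ}$ for all configurations $c,d$. Since every non-sink vertex has a directed path to $s$ and hence $\outdegree{G}{v}\geq 1$, a stable configuration satisfies $0\leq c(v)\leq\outdegree{G}{v}-1$ for all $v\neq s$, so the set $S$ of stable configurations is finite. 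The operation $a\oplus b:=(a+b)^{\circ}$ makes $S$ a finite commutative monoid: commutativity is clear, associativity follows from confluence (both $((a+b)^{\circ}+c)^{\circ}$ and $(a+(b+c)^{\circ})^{\circ}$ equal $(a+b+c)^{\circ}$), and the all-zero configuration is the identity.

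\textbf{$\recs{G}$ is an ideal.} Next I would show $\recs{G}$ is a subsemigroup and an ideal of $(S,\oplus)$: for recurrent $r$ and $a\in S$, the configuration $a\oplus r$ is stable, and it is accessible since, given any configuration $d$, accessibility of $r$ yields $d'$ with $d+d'\overset{*}{\to}r$, whence $d+(d'+a)\overset{*}{\to}r+a\overset{*}{\to}(r+a)^{\circ}=a\oplus r$ by monotonicity. In particular $\oplus$ is well defined on $\recs{G}$. It is also useful to rephrase accessibility monoid-theoretically: a stable $c$ is recurrent iff for every $a\in S$ there is $b\in S$ with $a\oplus b=c$.

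\textbf{Group structure.} Granting $\recs{G}\neq\emptyset$ (standard, e.g.\ stabilising a sufficiently large configuration yields a recurrent one; see \cite{HLMPPW08,Dha90}), pick $r\in\recs{G}$; since $S$ is finite the sequence $r,r^{\oplus 2},r^{\oplus 3},\dots$ is eventually periodic, so $e:=r^{\oplus k}$ is idempotent for suitable $k$, and $e\in\recs{G}$ by the ideal property. For any $c\in\recs{G}$, accessibility gives $b\in S$ with $e\oplus b=c$, hence $e\oplus c=e\oplus e\oplus b=e\oplus b=c$, so $e$ is the identity of $\recs{G}$. For inverses, apply accessibility of $e$: there is $b\in S$ with $c\oplus b=e$, and then $b\oplus e\in\recs{G}$ satisfies $c\oplus(b\oplus e)=e$. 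Thus $(\recs{G},\oplus)$ is a finite abelian group. Now firing a vertex $v$ changes a configuration by subtracting the row of $\Delta_{\backslash s}$ indexed by $v$, so $c\overset{*}{\to}d$ implies $c\sim d$; in particular $c\sim c^{\circ}$, every $\sim$-class meets $S$, and $a\oplus b\sim a+b$. Writing $L=\mathbb{Z}^{n-1}\Delta_{\backslash s}$ and $\Gamma=\mathbb{Z}^{n-1}/L$ (a finite group, $\Delta_{\backslash s}$ being non-singular), the map $\psi:\recs{G}\to\Gamma$, $r\mapsto r+L$, is then a group homomorphism. It remains to show $\psi$ is a bijection: surjectivity because every $\sim$-class has a recurrent representative — take a representative that is large in every coordinate (possible since $L$ contains a strictly positive vector, as every non-sink vertex reaches $s$) and stabilise it, again using that a large enough configuration stabilises to a recurrent one; and injectivity, i.e.\ no two distinct recurrent configurations are $\sim$-equivalent. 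A bijective homomorphism is an isomorphism, giving $|\recs{G}|=|\Gamma|=$ the number of $\sim$-classes.

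\textbf{Main obstacle.} Everything up to injectivity of $\psi$ is bookkeeping around the abelian property. The genuinely hard point is injectivity: it does \emph{not} follow from the monoid or group structure, because the $\sim$-class is preserved by re-firing whereas recurrence is not a class invariant in any obvious way — one needs a characterisation of recurrence sensitive to the actual chip counts. I would use Dhar's burning algorithm (a stable $c$ is recurrent iff a prescribed burning process started at $s$ reaches all vertices) together with monotonicity of stabilisation: if $c\sim c'$ are both recurrent then, as in the ideal step, each is reachable from the other after adding a non-negative element of $L$, and the burning characterisation then forces $c=c'$. Alternatively this single step may simply be imported from \cite{HLMPPW08,Dha90}.
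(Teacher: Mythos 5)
The paper gives no proof of this lemma --- it is imported verbatim from \cite{HLMPPW08} --- so there is nothing internal to compare you against; judged on its own, your outline is the standard semigroup-theoretic proof and is sound in every step you actually carry out: stable configurations form a finite commutative monoid under $\oplus$ (monotonicity plus confluence), $\recs{G}$ is an ideal of it, an idempotent power of any recurrent configuration is neutral on the ideal, inverses come from accessibility, and $r\mapsto r+L$ is a homomorphism onto $\mathbb{Z}^{n-1}/L$. The two steps you defer are exactly the two with real content: (a) that stabilising a sufficiently large configuration yields a recurrent one (needed both for $\recs{G}\neq\emptyset$ and for surjectivity of $\psi$), and (b) injectivity of $\psi$. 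Deferring them to \cite{HLMPPW08} is defensible for a cited lemma, but (b) can be closed using the paper's own Lemma \ref{weaker condition}, whose proof does not rely on the present statement: if $c\sim c'$ are both recurrent, pick a strictly positive $h\in L$ (for instance $|\det\Delta_{\backslash s}|\cdot(1,\dots,1)$, which lies in $L$ because the quotient is finite) and $k$ large enough that $\beta:=c'-c+kh$ is strictly positive; then $\beta\sim\textbf{0}$, so Lemma \ref{weaker condition} gives $c=(c+\beta)^{\circ}=(c'+kh)^{\circ}=c'$, the last equality by the same lemma applied to $c'$ with $kh$ in place of $\beta$. This is tighter than the mutual-reachability-plus-burning argument you gesture at, whose final step (``the burning characterisation then forces $c=c'$'') is the one genuinely vague point in your sketch. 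A small slip worth fixing: your parenthetical reason for $L$ containing a strictly positive vector (``every non-sink vertex reaches $s$'') is really the reason $\Delta_{\backslash s}$ is non-singular; positivity then follows from finiteness of $\mathbb{Z}^{n-1}/L$ as above, not directly from reachability.
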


Naturally, one asks if it is possible to verify efficiently whether a given configuration is recurrent? The definition of recurrent configuration does not imply an efficient algorithm for this computational problem. Nevertheless, the following implies a polynomial-time algorithm for this problem.

\begin{lem} \cite{HLMPPW08}
\label{verify a recurrent configuration}
Let $\delta$ be the configuration defined by $\delta(v)=2\outdegree{G}{v}$ for every $v \in V\backslash\set{s}$, and $\epsilon$ be the configuration given by $\epsilon(v)=\delta(v)-\delta^{\circ}(v)$ for every $v\in V\backslash\set{s}$. The configuration $\epsilon$ belongs to the equivalence class of the identity element, and a configuration $c$ is recurrent iff $c=(c+\epsilon)^{\circ}$.
\end{lem}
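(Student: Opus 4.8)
The plan is to prove the two assertions separately: first that $\epsilon$ lies in the equivalence class of $0$ (which is the class of the identity element $e$ of $\recs{G}$), and then the equivalence $c\in\recs{G}\iff c=(c+\epsilon)^{\circ}$. I will use two elementary facts throughout. Since $G$ has a global sink, every configuration stabilizes, so $\delta^{\circ}$ and all the stabilizations appearing below are well defined. And if $a\overset{*}{\to}b$ by legal firings and $f$ is any configuration, then $a+f\overset{*}{\to}b+f$: adding chips cannot turn an active vertex inactive, so each firing of the sequence remains legal after adding $f$, and the resulting configuration is the old one plus $f$, so the whole sequence lifts by induction on its length. I also note that $\epsilon$ is a genuine configuration with $\epsilon(v)\geq 1$ for all $v\neq s$: every non-sink vertex has a path to $s$, hence $\outdegree{G}{v}\geq 1$; since $\delta^{\circ}$ is stable, $\delta^{\circ}(v)\leq\outdegree{G}{v}-1$, so $\epsilon(v)=2\outdegree{G}{v}-\delta^{\circ}(v)\geq\outdegree{G}{v}+1\geq 2$.

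For $\epsilon\sim 0$, I track the firing vector of the stabilization of $\delta$: let $z_i\in\mathbb{N}$ be the number of times $v_i$ is fired (well defined by uniqueness of the stable configuration; in fact $z_i\geq 1$, since $v_i$ is active in $\delta$ and stays active until it fires). A firing of $v_i$ adds the $i$-th row of $\Delta_{\backslash s}$ to the current configuration, so $\delta^{\circ}=\delta+z\cdot\Delta_{\backslash s}$ and hence $\epsilon=\delta-\delta^{\circ}=(-z)\cdot\Delta_{\backslash s}$, i.e. $\epsilon\sim 0$. That this is the class of the identity follows from the previous lemma: for any recurrent $r$ one has $r=r\oplus e=(r+e)^{\circ}\sim r+e$, so $e\sim 0$.

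For the implication $c\in\recs{G}\Rightarrow c=(c+\epsilon)^{\circ}$, I first argue that $(c+\epsilon)^{\circ}$ is recurrent. It is stable; for accessibility, given any configuration $d$, accessibility of $c$ gives $d'$ with $d+d'\overset{*}{\to}c$, and lifting this sequence by $+\epsilon$ yields $d+(d'+\epsilon)\overset{*}{\to}c+\epsilon\overset{*}{\to}(c+\epsilon)^{\circ}$ with $d'+\epsilon$ a configuration. Since $\epsilon\sim 0$ we have $(c+\epsilon)^{\circ}\sim c+\epsilon\sim c$, so $(c+\epsilon)^{\circ}$ and $c$ are recurrent configurations in the same equivalence class; by the uniqueness part of the previous lemma they are equal.

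For the converse, assume $c=(c+\epsilon)^{\circ}$. Then $c$ is stable. By induction on $k$, $(c+k\epsilon)^{\circ}=c$ for every $k\geq 0$: the case $k=0$ is $c^{\circ}=c$, and if $c+k\epsilon\overset{*}{\to}c$ then lifting by $+\epsilon$ gives $c+(k+1)\epsilon\overset{*}{\to}c+\epsilon\overset{*}{\to}c$, whence $(c+(k+1)\epsilon)^{\circ}=c$ by stability of $c$. For accessibility, given $d$ choose $k$ large enough that $c(v)+k\epsilon(v)\geq d(v)$ for all $v\neq s$ (possible since $\epsilon(v)\geq 1$) and put $d'=c+k\epsilon-d\geq 0$; then $d+d'=c+k\epsilon\overset{*}{\to}c$, so $c$ is accessible, hence recurrent. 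The only real subtlety is the repeated lifting of a legal firing sequence along the addition of a fixed configuration, together with the invocation of the uniqueness of the recurrent representative of an equivalence class; the rest is bookkeeping, and I expect the lifting step (the abelian property of the model) to be the one that needs to be stated with care.
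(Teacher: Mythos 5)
Your proof is correct. Note that the paper itself gives no proof of this lemma: it is cited from [HLMPPW08], and later derived implicitly as the special case $A=V\backslash\set{s}$, $\beta=\epsilon$ of the paper's Lemma \ref{weaker condition}, whose hypothesis "$\beta\sim\textbf{0}$ and $\beta>0$ on $A$" is exactly what your preliminary observations establish ($\epsilon=-z\cdot\Delta_{\backslash s}$ via the firing vector of the stabilization of $\delta$, and $\epsilon(v)\geq 2$ from stability of $\delta^{\circ}$). Your argument runs along the same lines as the paper's proof of that generalization, but is self-contained and differs in two useful ways. First, you prove the "$\epsilon$ is in the identity class" part explicitly, including the small step $e\sim\textbf{0}$ via $r=(r+e)^{\circ}\sim r+e$, which the paper only asserts in passing. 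Second, in the converse direction the paper (for general $\beta$ supported on $A$) must first fire $k\beta$ to reach a configuration dominating all out-degrees, because $\beta$ need not be positive everywhere; since your $\epsilon$ satisfies $\epsilon(v)\geq 1$ at every non-sink vertex, you can skip that step and take $d'=c+k\epsilon-d$ directly, at the cost of the (correct) induction showing $(c+k\epsilon)^{\circ}=c$ via the lifting property $a\overset{*}{\to}b\Rightarrow a+f\overset{*}{\to}b+f$. The forward direction is the same mechanism in both: show $(c+\epsilon)^{\circ}$ is recurrent, observe it is equivalent to $c$, and invoke the uniqueness of the recurrent representative in each class from the preceding lemma; you lift an explicit firing sequence where the paper uses $(d+d''+\beta)^{\circ}$, which is only a cosmetic difference. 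No gaps.
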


\noindent Note that the assertion of Lemma \ref{verify a recurrent configuration} still holds if we replace the definition of $\delta$ in the lemma by $\delta(v)=\outdegree{G}{v}$ for every $v \in V\backslash\set{s}$. The following is a generalization of Lemma \ref{verify a recurrent configuration}, where $\textbf{0}$ denotes the zero-configuration, \emph{i.e.} $\textbf{0}(v)=0$ for every $v \in V\backslash \set{s}$ ($\textbf{0}$ is in the equivalence class of the identity, but is not a recurrent configuration).

\begin{lem}
\label{weaker condition}
Let $A$ be a subset of $V\backslash \set{s}$ satisfying that for every $v \in V$ there is a path in $G$ from a vertex in $A$ to $v$. Let $\beta$ be a configuration such that $\beta$ is in the same equivalence class as \textbf{0} and $\beta(v)>0$ for every $v \in A$. Then a configuration $c$ is recurrent iff $c=(c+\beta)^{\circ}$.
\end{lem}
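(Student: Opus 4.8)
The plan is to prove the two implications separately, basing the harder one on Lemma~\ref{verify a recurrent configuration}.

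The forward implication — if $c$ is recurrent then $c=(c+\beta)^\circ$ — uses only $\beta\sim\textbf{0}$ and $\beta\geq\textbf{0}$, and goes exactly as the forward direction of Lemma~\ref{verify a recurrent configuration}. Since $c$ is recurrent it is accessible, and accessibility is preserved by adding a nonnegative configuration (if $(d+d')\overset{*}{\to}c$ then $(d+d'+\beta)\overset{*}{\to}c+\beta$), so $(c+\beta)^\circ$ is stable and accessible, hence recurrent. A legal firing alters a configuration by a row of $\Delta_{\backslash s}$, so $(c+\beta)^\circ\sim c+\beta\sim c$; thus $c$ and $(c+\beta)^\circ$ are two recurrent configurations in the same $\sim$-class, and the uniqueness statement that each $\sim$-class contains exactly one recurrent configuration forces $(c+\beta)^\circ=c$.

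For the converse, suppose $c=(c+\beta)^\circ$. Then $c$ is stable, and, by the abelian property $(x+y)^\circ=(x^\circ+y)^\circ$ (valid for $y\geq\textbf{0}$) together with induction, $c=(c+m\beta)^\circ$ for every $m\geq1$. Fix a legal firing sequence realizing $c+\beta\overset{*}{\to}c$, and let $f(v)$ be the number of times $v$ fires in it. The key step is to show $f(v)\geq1$ for every non-sink vertex $v$. From chip conservation at $v$ (the sink never fires),
\[
\outdegree{G}{v}\,f(v)=\beta(v)+\sum_{u}\nofedges{G}{u}{v}\,f(u)\;\geq\;\beta(v).
\]
For $v\in A$ this gives $f(v)\geq1$, since $\beta(v)>0$ and $\outdegree{G}{v}\geq1$. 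For an arbitrary non-sink $v$, pick (by hypothesis) a path $u_0\to u_1\to\cdots\to u_k=v$ in $G$ with $u_0\in A$; it avoids $s$ because $\outdegree{G}{s}=0$. Walking along it, the displayed inequality gives $\outdegree{G}{u_i}\,f(u_i)\geq\beta(u_i)+f(u_{i-1})$, so $f(u_{i-1})\geq1$ implies $f(u_i)\geq1$; hence $f(v)\geq1$. Thus the stabilization $c+\beta\overset{*}{\to}c$ fires every non-sink vertex at least once.

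It remains to deduce that $c$ is recurrent from this. This is precisely the property of the marker configuration on which the delicate direction of Lemma~\ref{verify a recurrent configuration} rests: there $\epsilon$ is built with $\epsilon(v)>\outdegree{G}{v}$ for all $v$ exactly so that stabilizing $c+\epsilon$ fires every non-sink vertex, and this is the only feature of $\epsilon$ used in the implication ``$c=(c+\epsilon)^\circ\Rightarrow c$ recurrent''. Re-running that argument with $\beta$ in place of $\epsilon$ — equivalently, invoking the digraph form of Dhar's burning criterion, that a stable configuration is recurrent as soon as some legal firing sequence returns to it while firing every non-sink vertex at least once — shows that $c=(c+\beta)^\circ$ is recurrent, completing the proof. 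I expect this last step to be the main obstacle: it is where the reaching hypothesis on $A$ is genuinely used, and where formal manipulation of $(\cdot)^\circ$ alone does not suffice, since the naive candidates for a witness of accessibility only produce the recurrent configuration of the class $[c]$, and one must use the ``every vertex fires'' property to identify that configuration with $c$ itself.
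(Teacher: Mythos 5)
Your forward direction is fine and matches the paper's (show $(c+\beta)^\circ$ is stable and accessible, then invoke uniqueness of the recurrent representative in each $\sim$-class), and your chip-conservation argument that every non-sink vertex fires at least once in the stabilization of $c+\beta$ is correct. The gap is exactly where you anticipate it: the final step of the converse. You reduce everything to the assertion that a stable configuration is recurrent as soon as some legal firing sequence returns to it from $c+\beta$ while firing every non-sink vertex at least once, and you justify this by ``re-running'' the proof of Lemma~\ref{verify a recurrent configuration}. But that lemma is quoted from \cite{HLMPPW08} without proof, so there is nothing in the paper to re-run; worse, your description of what its proof uses is not accurate. The feature of $\epsilon$ that makes the delicate implication work is not ``every vertex fires'' but the pointwise bound $\epsilon(v)>\outdegree{G}{v}$: since $c+\epsilon$ dominates $\outdegree{G}{\cdot}$, for any $d$ one can take $d'=c+\epsilon-d^{\circ}\geq 0$ and get $d+d'\overset{*}{\to}d^{\circ}+d'=c+\epsilon\overset{*}{\to}c$, which is accessibility directly. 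Your $\beta$ enjoys no such pointwise bound, so this argument does not transfer, and the ``digraph form of Dhar's burning criterion'' you invoke instead (all vertices fire at least once $\Rightarrow$ recurrent) is not stated anywhere in the paper and is not the Eulerian burning lemma (Lemma~\ref{burning algorithm}), which has a different hypothesis and a specific $\beta$.

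The criterion you appeal to is in fact true, but proving it is essentially the content of Lemma~\ref{weaker condition} itself: from $\beta\geq 0$ and a firing vector $f\geq 1$ one shows that every vertex is reachable from the support of $\beta$ (otherwise the set of unreachable vertices would have no arc leaving it, contradicting the existence of the global sink), and one is then back to the situation of the lemma with $A=\set{v:\beta(v)>0}$ -- so citing it here is circular. The paper closes the converse differently, and more directly: from $c=(c+\beta)^{\circ}$ it deduces $c=(c+k\beta)^{\circ}$ for all $k$, and then uses the reachability hypothesis on $A$ (not the firing count) to legally fire $k\beta$, for $k$ large, into a configuration $c'$ with $c'(v)\geq\outdegree{G}{v}$ for every non-sink $v$; hence $c=(c+k\beta)^{\circ}=(c+c')^{\circ}$ with $c+c'$ pointwise at least $\outdegree{G}{\cdot}$, and accessibility follows by the same domination argument as for $\epsilon$. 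To repair your proof you would need to supply an argument of this kind in place of the appeal to the unproven criterion; your ``every vertex fires'' observation, while correct, does not by itself yield accessibility.
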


\begin{proof}
$\Rightarrow:$ Let $\bar{c}=(c+\beta)^{\circ}$. The proof is completed by showing that $\bar{c}$ is recurrent. Configuration $\bar{c}$ is stable, therefore it remains to prove that $\bar{c}$ is accessible. Let $d$ be a configuration. Since $c$ is recurrent, there is a configuration $d''$ such that $c=(d+d'')^{\circ}$, therefore $\bar{c}=(c+\beta)^{\circ}=(d+d''+\beta)^{\circ}$. Let $d'=d''+\beta$. We have $\bar{c}=(d+d')^{\circ}$.

$\Leftarrow:$ For $k \in \mathbb{N}$ let $k \beta$ be the configuration defined by $(k\beta)(v)=k \cdot \beta(v)$ for every $v\in V\backslash \set{s}$. Since for every $v \in V$ there is a path from a vertex in $A$ to $v$, with $k$ large enough and by an appropriate  sequence of legal firings the configuration $k \beta$ arrives at a configuration $c'$ that satisfies $c'(v)\geq \outdegree{G}{v}$ for every $v \in V\backslash \set{s}$. We have $c=(c+k \beta)^{\circ}=(c+c')^{\circ }$. Since $(c+c')(v)\geq \outdegree{G}{v}$ for every $v \in V\backslash \set{s}$, $(c+c')^{\circ}$ is accessible,\linebreak so is $c$.
\end{proof}

\begin{figure}
\centering
\subfloat[A digraph with global sink $s$]{\label{fig:im9}\includegraphics[bb=0 0 260 136,width=2.05in,height=1.07in,keepaspectratio]{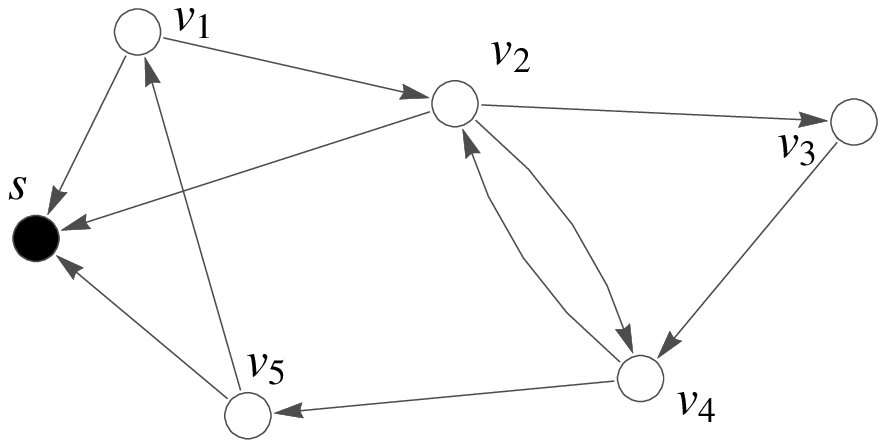}}\quad 
\subfloat[A configuration $c$]{\label{fig:im11(1)}\includegraphics[bb=0 0 348 180,width=2.05in,height=1.06in,keepaspectratio]{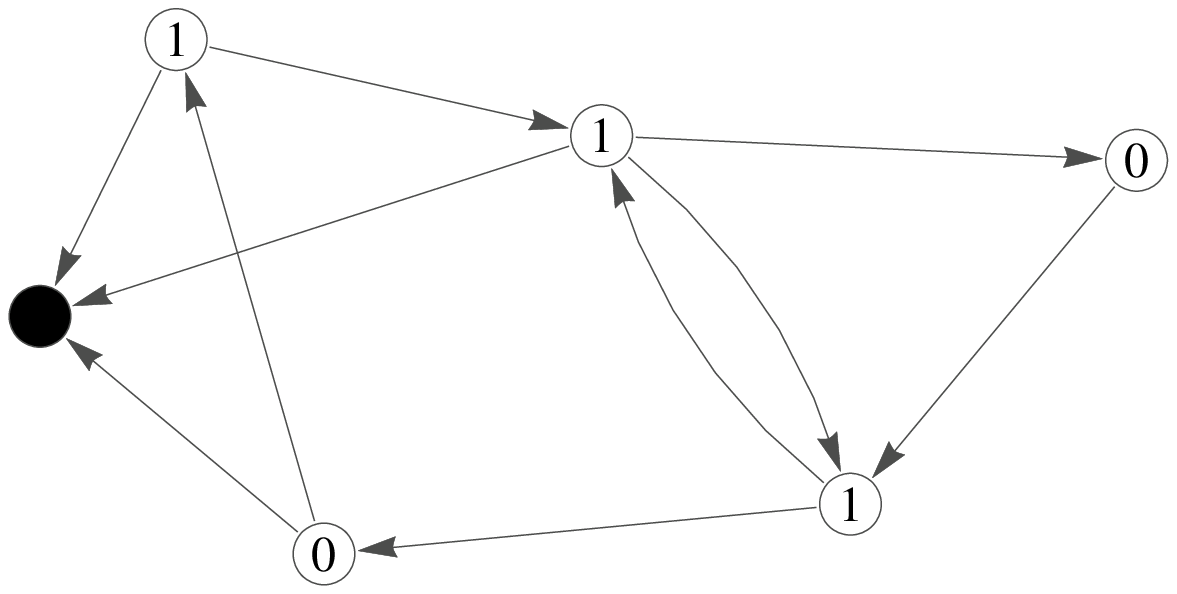}}\\
\subfloat[Configuration $\epsilon$]{\label{fig:im10}\includegraphics[bb=0 0 348 180,width=2.05in,height=1.06in,keepaspectratio]{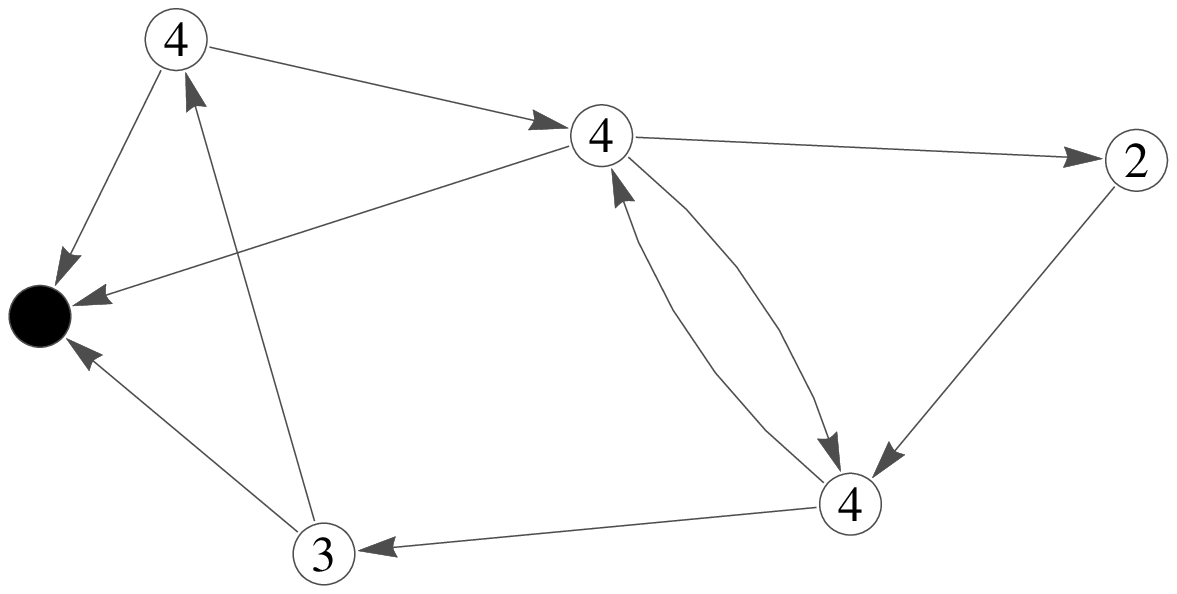}}\quad 
\subfloat[Configuration $\beta$]{\label{fig:im12}\includegraphics[bb=0 0 348 180,width=2.05in,height=1.06in,keepaspectratio]{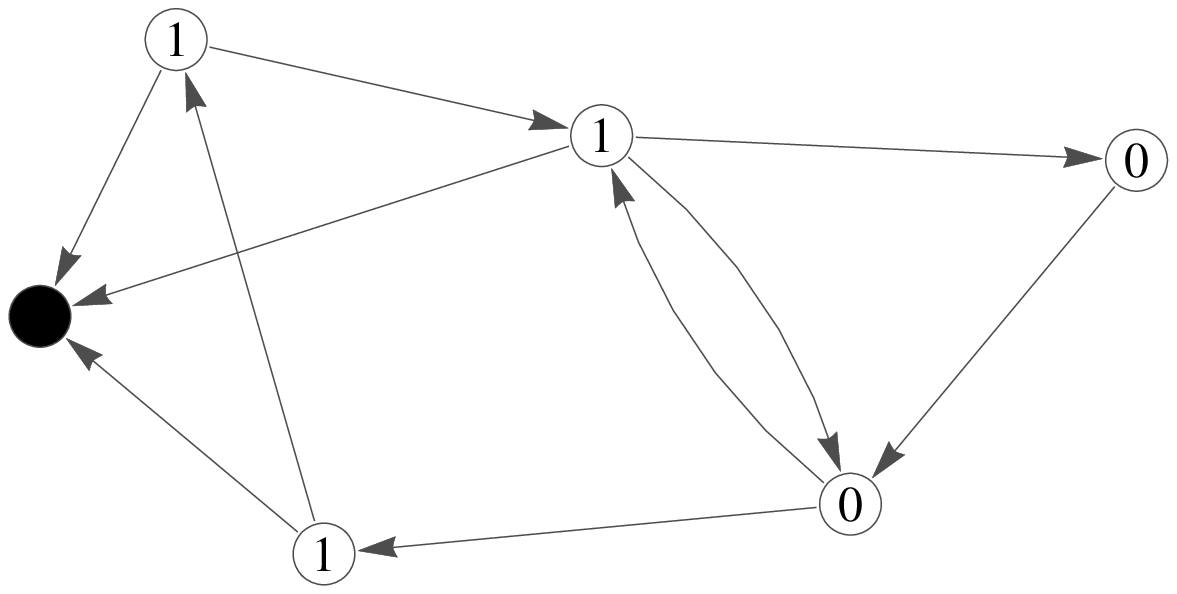}}\\
\subfloat[$(c+\epsilon)^{\circ}$]{\label{fig:im11(2)}\includegraphics[bb=0 0 348 180,width=2.05in,height=1.06in,keepaspectratio]{im11.eps}}\quad
\subfloat[$(c+\beta)^{\circ}$]{\label{fig:im11(3)}\includegraphics[bb=0 0 348 180,width=2.05in,height=1.06in,keepaspectratio]{im11.eps}}
\caption{Verifying a recurrent configuration}
\label{fig:im09101112}
\end{figure}

\noindent Lemma \ref{verify a recurrent configuration} is a special case of Lemma \ref{weaker condition} with $A=V\backslash \set{s}$ and $\beta=\epsilon$.

Figure \ref{fig:im9} shows a digraph with global sink $s$ and a configuration $c$ on the right (Fig. \ref{fig:im11(1)}). If we want to decide whether this configuration is recurrent, we construct the configuration $\epsilon$ (Figure \ref{fig:im10}), and compute the stable configuration $(c+\epsilon)^{\circ}$ (Figure \ref{fig:im11(2)}). Lemma \ref{verify a recurrent configuration} states that $c$ is recurrent if and only if $c=(c+\epsilon)^{\circ}$. However $\epsilon$ has a large number of chips, and the computation of $(c+\epsilon)^{\circ}$ may be long. It may be more time efficient to us another configuration with fewer chips, for example the one given on Figure \ref{fig:im12}, $\beta$, which allows to decide in a similar way if $c$ is recurrent (Lemma \ref{weaker condition} applies since the digraph has a global sink $s$). Consider performing the stabilization by hand: one would clearly prefer using $\beta$ to $\epsilon$.

\subsubsection{Chip-firing game on Eulerian digraphs with sink and firing graph}

Let $G=(V,E)$ be an Eulerian digraph (connected) and a distinguished vertex $s$ of $G$ that is called \emph{sink}. Let $\edgerem{G}{s}$ be the graph $G$ in which the out-going arcs of $s$ have been deleted. Clearly $\edgerem{G}{s}$ has a global sink $s$. The Chip-firing game on $G$ with sink $s$ is the ordinary Chip-firing game that is defined on the graph $\edgerem{G}{s}$.

Let $\beta$ be the configuration defined by for every $v\in V\backslash \set{s}$, $\beta(v)=1$ if $(s,v) \in E$ and $\beta(v)=0$ otherwise. Since $G$ is Eulerian, $\beta \sim \textbf{0}$ (after firing $-1$ time every vertex, except the sink). Lemma \ref{weaker condition} implies the burning algorithm.

\begin{lem}\cite{Dha90}
\label{burning algorithm}
Configuration $c$ is recurrent if and only if $c=(c+\beta)^{\circ}$. Moreover if $c$ is recurrent then each vertex of $G$ except for the sink fires exactly once during any sequence of legal firings to reach the stabilization of $(c+\beta)$.
\end{lem}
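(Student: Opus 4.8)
The plan is to obtain the first equivalence as an instance of Lemma~\ref{weaker condition}, applied to the digraph $\edgerem{G}{s}$ (which has global sink $s$), and then to read off the ``moreover'' clause from a short computation with the reduced Laplacian $\Delta_{\backslash s}$ of $\edgerem{G}{s}$. We may assume $|V|\ge 2$, the case $|V|=1$ being vacuous.

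First I would take $A=\set{v\in V\backslash\set{s}:(s,v)\in E}$, the out-neighbourhood of $s$ in $G$; since $G$ is a connected simple Eulerian digraph we have $A\neq\emptyset$ and $A\subseteq V\backslash\set{s}$ (using $\outdegree{G}{s}=\indegree{G}{s}\ge 1$ and the absence of loops). To invoke Lemma~\ref{weaker condition} I must check: (i) $\beta(v)>0$ for all $v\in A$, which is immediate since $\beta(v)=1$ there; (ii) every $v\in V$ is reachable in $\edgerem{G}{s}$ from a vertex of $A$; and (iii) $\beta\sim\textbf{0}$. For (ii): since $G$ is strongly connected, take a shortest path in $G$ from $s$ to $v$; being shortest it is simple, so it meets $s$ only at its initial vertex, hence its first arc is some $(s,w)$ with $w\in A$ and the remainder is a path of $\edgerem{G}{s}$ from $w$ to $v$ (the degenerate case $v=s$ is handled symmetrically, via a shortest path in $G$ from any fixed $w\in A$ to $s$). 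For (iii): I would exhibit the witness $z\in\mathbb{Z}^{|V|-1}$ with $z_u=-1$ for every $u\in V\backslash\set{s}$ and compute, for $v\neq s$, that $(z\cdot\Delta_{\backslash s})_v=\outdegree{G}{v}-\sum_{u\neq s,\,u\neq v}\nofedges{G}{u}{v}=\outdegree{G}{v}-\indegree{G}{v}+\nofedges{G}{s}{v}=\nofedges{G}{s}{v}=\beta(v)$, using simplicity of $G$ to split the in-degree sum and the Eulerian identity for the cancellation; thus $\beta=\textbf{0}+z\cdot\Delta_{\backslash s}$, i.e. $\beta\sim\textbf{0}$. With (i)--(iii) in hand, Lemma~\ref{weaker condition} gives exactly the first assertion: $c$ is recurrent iff $c=(c+\beta)^{\circ}$.

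For the last sentence, assume $c$ is recurrent and fix any legal firing sequence stabilising $c+\beta$; by the first part it stabilises to $c$ itself. Let $x\in\mathbb{N}^{|V|-1}$ record the number of firings of each $v\in V\backslash\set{s}$ along the sequence ($s$ never fires, having out-degree $0$ in $\edgerem{G}{s}$). Tracking the net effect of these firings via $\Delta_{\backslash s}$ yields $c=(c+\beta)+x\cdot\Delta_{\backslash s}$, so $x\cdot\Delta_{\backslash s}=-\beta$; but by (iii) above $-\beta=(-z)\cdot\Delta_{\backslash s}=\textbf{1}\cdot\Delta_{\backslash s}$ with $\textbf{1}$ the all-ones row vector, whence $(x-\textbf{1})\cdot\Delta_{\backslash s}=0$. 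Since $\edgerem{G}{s}$ has a global sink reachable from every vertex, $\Delta_{\backslash s}$ is non-singular (a classical fact, also forced here by the finiteness of $\recs{G}$ together with its bijection, recalled above, with $\mathbb{Z}^{|V|-1}/\mathbb{Z}^{|V|-1}\Delta_{\backslash s}$), so $x=\textbf{1}$: every non-sink vertex fires exactly once. As this argument used nothing about the particular firing sequence, the count $x=\textbf{1}$ is the same for every legal stabilising sequence.

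The only genuinely delicate points are hypotheses (ii) and (iii) of Lemma~\ref{weaker condition}: that the reachability must be checked inside $\edgerem{G}{s}$ rather than $G$ (dispatched by the shortest-path observation), and that $\beta$ really is $\sim$-equivalent to $\textbf{0}$ (dispatched by the explicit vector $z$, where it is precisely the Eulerian identity $\outdegree{G}{v}=\indegree{G}{v}$ that makes decrementing every non-sink firing count once a net do-nothing modulo $\sim$). Everything else is routine bookkeeping with $\Delta_{\backslash s}$.
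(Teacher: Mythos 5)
Your argument is correct and takes essentially the paper's route: the equivalence comes from Lemma~\ref{weaker condition} once one notes $\beta\sim\textbf{0}$ (your explicit witness $z=-\textbf{1}$ is precisely the paper's remark about firing $-1$ times every non-sink vertex) and checks the reachability hypothesis inside $\edgerem{G}{s}$, which you do carefully. The ``moreover'' clause is not proved in the paper (it is part of the result cited from Dhar), and your firing-vector computation together with the nonsingularity of $\Delta_{\backslash s}$ is a sound way to supply it.
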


\noindent Note that the configuration $c+\beta$ can be regarded as the configuration resulting from firing the sink in the configuration $c$. Lemma \ref{burning algorithm} allows to define the notion of firing graph that is originally from \cite{Sch10}.

\begin{defi}
Let $c$ be a recurrent configuration and $c+\beta=d_0\overset{w_1}{\to}d_1\overset{w_2}{\to}d_2\overset{w_3}{\to}d_3\cdots\overset{w_k}{\to}d_k$ a legal firing sequence of $c$ such that $d_k=c$. This sequence of legal firings can be presented by $(w_1,w_2,\cdots,w_k)$ since $d_i$ is completely defined by $w_1,w_2,\cdots,w_i$ for $i\geq 1$. Lemma \ref{burning algorithm} implies that $k=|V|-1$ and $\set{w_1,w_2,\cdots,w_k}=V\backslash \set{s}$. The graph $\mathcal{F}=(\mathcal{V},\mathcal{E})$ with $\mathcal{V}=V$ and $\mathcal{E}=\set{(s,w_i):(s,w_i)\in E}\cup \set{(w_i,w_j): i <j\text{ and }(w_i,w_j) \in E}$ is called a \emph{firing graph} of $c$.
\end{defi}

\begin{figure}
\centering
\subfloat[An Eulerian digraph]{\label{fig:im14}\includegraphics[bb=0 0 340 174,width=2.05in,height=1.05in,keepaspectratio]{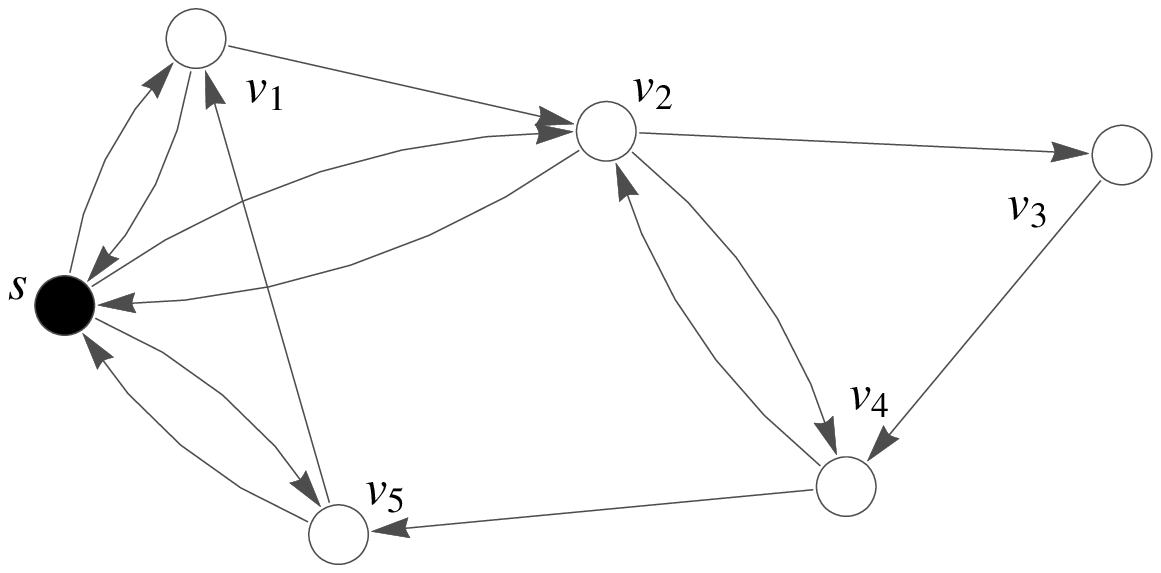}}\quad
\subfloat[A recurrent configuration]{\label{fig:im13}\includegraphics[bb=0 0 397 205,width=2.05in,height=1.06in,keepaspectratio]{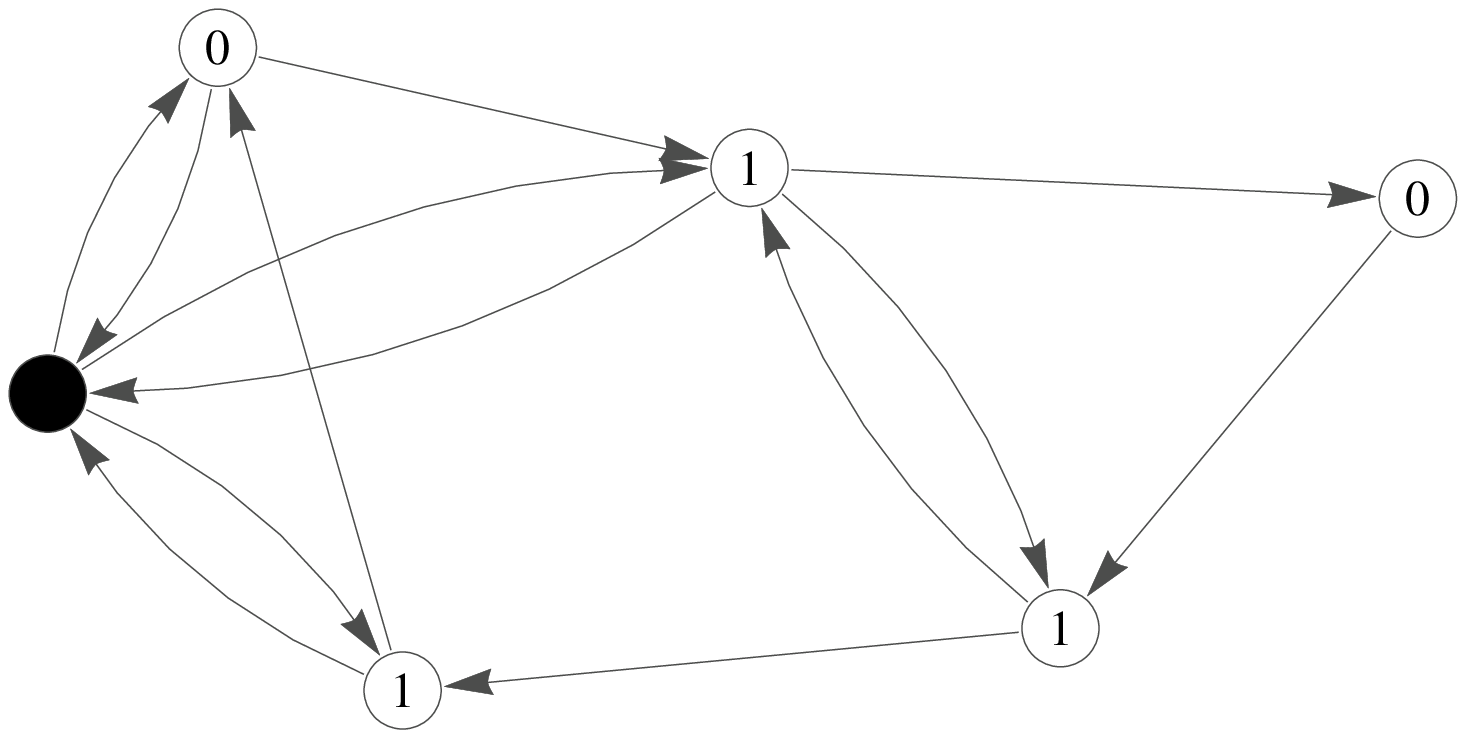}}\\
\subfloat[$c+\beta$]{\label{fig:im15}\includegraphics[bb=0 0 345 178,width=2.05in,height=1.06in,keepaspectratio]{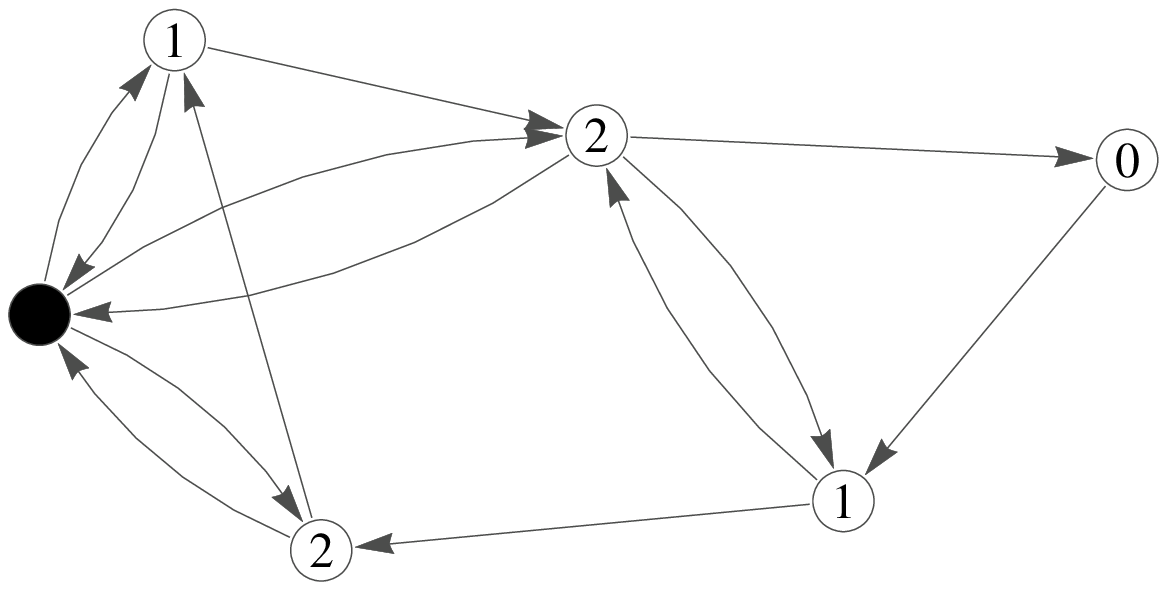}}\quad
\subfloat[Firing graph corresponding to the sequence $(v_5,v_1,v_2,v_4,v_3)$]{\label{fig:im16}\includegraphics[bb=0 0 340 174,width=2.05in,height=1.05in,keepaspectratio]{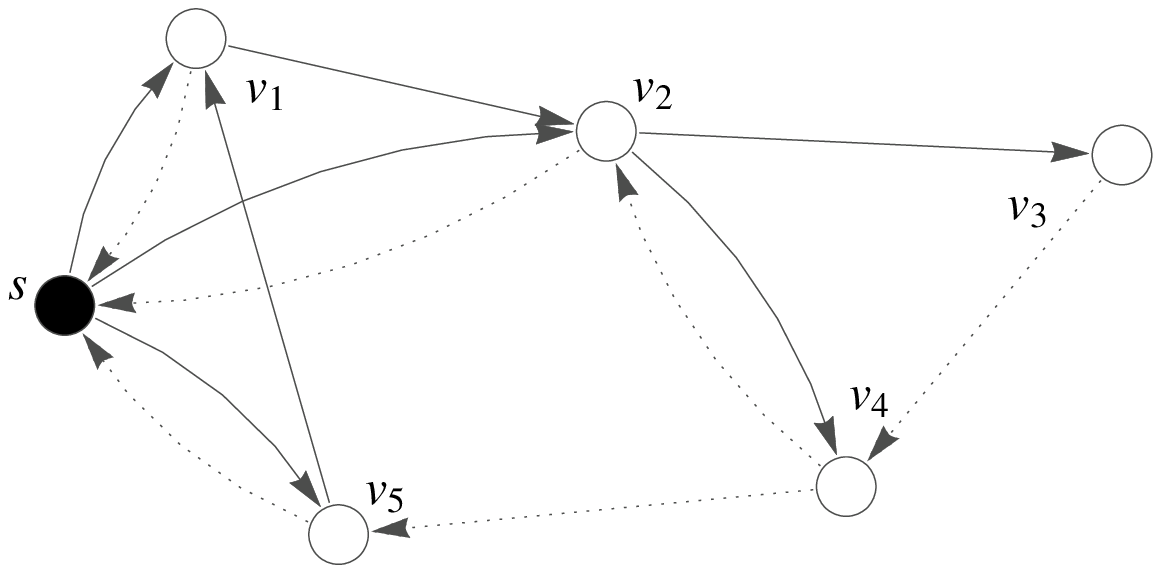}}
\caption{An example of firing graph}
\label{fig:im13141516}
\end{figure}

Figure \ref{fig:im14} presents an Eulerian digraph with the sink $s$ in black. Figure \ref{fig:im13} presents a recurrent configuration. The configuration $c+\beta$ is presented in Figure \ref{fig:im15}. Starting with the configuration $c+\beta$ we can fire consecutively the vertices $v_5,v_1,v_2,v_4,v_3$ of $V$ in this order to reach again $c$. With the legal firing sequence $(v_5,v_1,v_2,v_4,v_3)$ we have the firing graph that is presented by the undotted arcs in Figure \ref{fig:im16}. Note that legal firing sequences of $c+\beta$ are possibly not unique, so are firing graphs of $c$. In the next part we are going to study a kind of recurrent configurations that always have a unique firing graph.

\subsection{Minimal recurrent configurations and maximal acyclic arc sets}

In this subsection we work with the Chip-firing game on an Eulerian digraph $G=(V,E)$ with sink $s$. For two configurations $c'$ and $c$ we write $c'\leq c$ if $c'(v)\leq c(v)$ for every $v \in V\backslash \set{s}$. A recurrent configuration $c$ is \emph{minimal} if whenever $c'\neq c$ and $c'\leq c$, $c'$ is not recurrent. When $c$ has the minimum total number of chips over all recurrent configurations, we say that $c$ is \emph{minimum}. Let $\mathcal{M}$ be the set of all minimal recurrent configurations of the game.

Let $\mathcal{A}$ be the set of all maximal acyclic arc sets $A$ of $G$ such that $s$ is a unique sink of $A$. Note that maximal acyclic arc set can be considered as a generalization of acyclic orientation on undirected graphs. Figure \ref{fig:im18} shows such a maximal acyclic arc set of the Eulerian digraph shown in Figure \ref{fig:im14}. This subsection is devoted to showing that if a recurrent configuration $c$ is minimal, $c$ has a unique firing graph and the set of arcs of this firing graph is a maximal acyclic arc set. This gives a map from $\mathcal{M}$ to $\mathcal{A}$. Moreover we show that this map is a one-to-one correspondence between $\mathcal{M}$ and $\mathcal{A}$. The correspondence can be generalized easily to the case when $G$ has multi-arcs.

\begin{figure}
\centering
\includegraphics[bb=0 0 340 174,width=2.05in,height=1.05in,keepaspectratio]{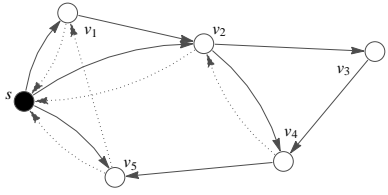}
\caption{A maximal acyclic arc set}
\label{fig:im18}
\end{figure}

When $G$ is an undirected graph, the correspondence  is exactly the one that was given in \cite{Sch10}. The correspondence in \cite{Sch10} deals with the case when $G$ has many sinks. However,  the many-sink case is not harder than the single-sink case since we can contract many sinks to a single sink, and consider the contracted graph. This subsection mainly focuses on showing a relation between $\mathcal{M}$ and $\mathcal{A}$, and not all results presented here are needed for the proof of the NP-hardness exposed in the next subsection. The following shows a basic relation between acyclic arc sets and recurrent configurations.

\begin{lem}
\label{from acyclic set to recurrent configuration}
Let $A$ be an acyclic arc set such that $s$ is a unique vertex of indegree $0$ in $G[A]$ and $A$ contains all vertices of $G$. Then the configuration $c$ defined by $c(v)=\outdegree{G}{v}-\indegree{G[A]}{v}$ for every $v \in V\backslash \set{s}$ is recurrent.
\end{lem}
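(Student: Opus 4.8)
The plan is to apply the burning algorithm, Lemma \ref{burning algorithm}: it is enough to produce a legal firing sequence starting from $c+\beta$ and ending at $c$, where $\beta(v)=1$ when $(s,v)\in E$ and $\beta(v)=0$ otherwise. Indeed, such a sequence witnesses $c=(c+\beta)^{\circ}$, since it terminates at the (necessarily stable) configuration $c$, and $(c+\beta)^{\circ}$ is recurrent by Lemma \ref{burning algorithm}. Before constructing the sequence I would record two sanity checks that the hypotheses supply for free: since $G$ is Eulerian, $\indegree{G[A]}{v}\le \indegree{G}{v}=\outdegree{G}{v}$, so $c(v)\ge 0$ and $c$ is a genuine configuration; and since $s$ is the \emph{only} vertex of indegree $0$ in $G[A]$, every $v\in V\backslash\set{s}$ has $\indegree{G[A]}{v}\ge 1$, hence $c(v)\le\outdegree{G}{v}-1$ and $c$ is stable.

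The firing sequence is dictated by $A$. Fix a topological ordering $v_1,v_2,\dots$ of the vertices of $V\backslash\set{s}$ with respect to the acyclic graph $G[A]$, so that $(v_i,v_j)\in A$ forces $i<j$ (with $s$, the unique source of $G[A]$, preceding everyone). I claim that firing $v_1,v_2,\dots$ in this order, starting from $c+\beta$, is legal. When it is $v_i$'s turn, the vertices $v_1,\dots,v_{i-1}$ have each fired exactly once and $v_i$ has not fired, so $v_i$ holds $(c+\beta)(v_i)+|\set{j<i:(v_j,v_i)\in E}|$ chips; substituting the definition of $c(v_i)$, this equals $\outdegree{G}{v_i}$ plus $\bigl(\beta(v_i)+|\set{j<i:(v_j,v_i)\in E}|-\indegree{G[A]}{v_i}\bigr)$. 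Thus legality of the $i$-th firing reduces to the counting inequality $\beta(v_i)+|\set{j<i:(v_j,v_i)\in E}|\ge \indegree{G[A]}{v_i}$, which holds because every arc of $A$ entering $v_i$ is either $(s,v_i)$ — and then $(s,v_i)\in E$ so $\beta(v_i)=1$ — or an arc $(v_j,v_i)$ with $(v_j,v_i)\in A\subseteq E$ and, by the choice of ordering, $j<i$; so the arcs of $A$ into $v_i$ are all counted, with room to spare, on the left-hand side.

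Once legality is established, a direct computation finishes the proof: in this sequence every non-sink vertex fires exactly once, so $v$ ultimately receives one chip from each of its $\indegree{G}{v}-\beta(v)$ non-sink in-neighbours and loses $\outdegree{G}{v}$ chips, so its final value is $(c+\beta)(v)+(\indegree{G}{v}-\beta(v))-\outdegree{G}{v}=c(v)$, again using $\indegree{G}{v}=\outdegree{G}{v}$. Hence $c=(c+\beta)^{\circ}$ and Lemma \ref{burning algorithm} gives that $c$ is recurrent. The only real content is the counting inequality guaranteeing that each firing in the topological order is legal; the remaining steps are bookkeeping with the Eulerian identity $\indegree{G}{v}=\outdegree{G}{v}$ and the fact that $s$ is the unique indegree-$0$ vertex of $G[A]$.
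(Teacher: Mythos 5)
Your proof is correct and follows essentially the same route as the paper: fix a topological order of $G[A]$ (with $s$ first), fire the non-sink vertices in that order starting from $c+\beta$, and verify legality by comparing the chips received from already-fired in-neighbours (plus $\beta$) with $\indegree{G[A]}{v}$, then conclude via the burning algorithm. You are in fact a bit more explicit than the paper about the bookkeeping it leaves implicit (nonnegativity and stability of $c$, and that the sequence returns exactly to $c$, so $c=(c+\beta)^{\circ}$); the only nitpick is the early phrase ``$(c+\beta)^{\circ}$ is recurrent by Lemma \ref{burning algorithm}'', which is not what that lemma says in general, but your final application (from $c=(c+\beta)^{\circ}$ infer recurrence of $c$) is the correct one.
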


\begin{proof}
Since $G[A]$ is acyclic, there is a linear order $v_0<v_1<v_2,\cdots<v_{|V|-1}$ on $V$ such that if $(v_i,v_j)\in A$ then $i <j$. Clearly $v_0=s$. The proof is completed by showing that $(v_1,v_2,\cdots,v_{|V(G)|-1})$ is a legal firing sequence of $c+\beta$. Since $v_1$ is an out-neighbor of $s$ in $G[A]$, we have $c(v_1)=\outdegree{G}{v_1}-1$, therefore it is active in $c+\beta$. Now by induction, suppose that $(v_1,v_2,\cdots,v_j)$ is a legal firing sequence of $c+\beta$, where $j<|V(G)|-1$. By firing consecutively the vertices $v_1,v_2,\cdots,v_j$ in this order we arrive at the configuration $c'$. It suffices to show that $v_{j+1}$ is active in $c'$. It is  clear that $v_{j+1}$ receives $\underset{0\leq i \leq j}{\sum} \nofedges{G}{v_i}{v_{j+1}}$ chips from its in-neighbors after all vertices $v_1,v_2,\cdots,v_j$ have been fired. Since $\underset{0\leq i \leq j}{\sum} \nofedges{G}{v_i}{v_{j+1}}\geq \indegree{G[A]}{v_{j+1}}$, the number of chips stored at $v_{j+1}$ in $c'$ is not less than $\outdegree{G}{v_{j+1}}$, therefore $v_{j+1}$ is active in $c'$. The claim follows.
\end{proof}

From the definition of firing graph, a recurrent configuration may have many firing graphs. However, the following implies that the numbers of arcs of those firing graphs have a lower bound that depends on the recurrent configuration.

\begin{lem}
\label{from recurrent configuration to acyclic set}
If $c$ is a recurrent configuration of $\edgerem{G}{s}$ then for every firing graph $\mathcal{F}=(\mathcal{V},\mathcal{E})$ of $c$, $s$ is a unique vertex of in-degree $0$ and $\mathcal{E}$ is an  acyclic arc set of $G$. Moreover, $\mathcal{F}$ is connected and for each $v \in V\backslash \set{s}$  we have $c(v) \geq \outdegree{G}{v}-\indegree{\mathcal{F}}{v}$.
\end{lem}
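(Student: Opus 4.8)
The plan is to work directly from the definition of firing graph. Fix a recurrent configuration $c$ and a legal firing sequence $c+\beta = d_0 \overset{w_1}{\to} d_1 \cdots \overset{w_k}{\to} d_k = c$, with $\mathcal{F}=(\mathcal{V},\mathcal{E})$ the associated firing graph; by Lemma \ref{burning algorithm} we have $k = |V|-1$ and $\{w_1,\dots,w_k\} = V\backslash\{s\}$. First I would observe that every arc of $\mathcal{E}$ goes ``forward'' in the order $s, w_1, w_2, \dots, w_k$: arcs of the form $(w_i,w_j)$ are only included when $i<j$, and arcs $(s,w_i)$ respect the order trivially since $s$ comes first. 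Hence $\mathcal{E}$ is an acyclic arc set of $G$ (a digraph admitting a topological order is acyclic), and moreover $s$ has in-degree $0$ in $\mathcal{F}$ because no arc of $\mathcal{E}$ has head $s$. To see $s$ is the \emph{unique} such vertex: I would use the connectedness claim — once established, any $w_j$ with $j\geq 1$ receives an arc in $\mathcal{F}$ from some earlier vertex on a path from $s$, so $\indegree{\mathcal{F}}{w_j}\geq 1$. So the uniqueness of the source follows from connectedness, which I prove next.

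For connectedness (meaning every vertex is reachable from $s$ in $\mathcal{F}$), I would argue by induction on $j$ that $w_j$ is reachable from $s$ in $\mathcal{F}$. The point is that $w_j$ is active in $d_{j-1}$, hence $d_{j-1}(w_j)\geq \outdegree{G}{w_j}\geq 1$, so $w_j$ has strictly positive content in $d_{j-1}$. Now $d_{j-1}(w_j)$ is obtained from $(c+\beta)(w_j)$ by adding one chip for each in-neighbor among $w_1,\dots,w_{j-1}$ that has fired (plus the $\beta$-contribution from $s$ if $(s,w_j)\in E$), since $w_j$ itself has not yet fired. If $(s,w_j)\in E$ then $(s,w_j)\in\mathcal{E}$ and we are done. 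Otherwise $(c+\beta)(w_j) = c(w_j)$, and if $w_j$ received no chips at all from earlier fired vertices then $d_{j-1}(w_j) = c(w_j) = d_k(w_j)$; but the whole net effect of firing $w_{j},w_{j+1},\dots,w_k$ (none of which, apart from $w_j$ itself, is an in-neighbor whose arc to $w_j$ is counted — wait, later vertices $w_\ell$ with $\ell>j$ and $(w_\ell,w_j)\in E$ do send chips) — so I would instead phrase it as: some in-neighbor $w_i$ of $w_j$ with $i<j$ must have fired, for otherwise $w_j$ would have content $c(w_j) < \outdegree{G}{w_j}$ in $d_{j-1}$ only if... Let me reorganize: the clean way is that $d_{j-1}(w_j)\geq\outdegree{G}{w_j}\geq 1$ forces, since $(c+\beta)(w_j)\leq c(w_j)+1$ and $c(w_j)$ is the \emph{final} stable value hence $c(w_j)<\outdegree{G}{w_j}$, that at least one chip arrived from a fired in-neighbor $w_i$, $i<j$ — unless $c(w_j)+1\geq\outdegree{G}{w_j}$ came purely from $\beta$, i.e. $(s,w_j)\in E$. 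Either way $w_j$ has an in-arc in $\mathcal{F}$ from $s$ or from some earlier $w_i$; by induction $w_i$ is reachable from $s$, so $w_j$ is too. This simultaneously gives connectedness and, as noted, the uniqueness of the source.

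Finally, for the inequality $c(v)\geq \outdegree{G}{v}-\indegree{\mathcal{F}}{v}$: I would track the content of $v=w_j$ across the sequence. When $w_j$ fires at step $j$ it is active, so $d_{j-1}(w_j)\geq\outdegree{G}{w_j}$, and firing drops it by $\outdegree{G}{w_j}$; thereafter $w_j$ only gains chips, one per in-neighbor $w_\ell$ with $\ell>j$ and $(w_\ell,w_j)\in E$ that fires — these are exactly counted, together with the earlier contributions, by $\indegree{\mathcal{F}}{w_j}$ plus whatever $w_j$ already had before firing beyond $\outdegree{G}{w_j}$. Concretely, $c(w_j) = d_k(w_j) = d_{j-1}(w_j) - \outdegree{G}{w_j} + (\text{chips received after step } j)$, and $d_{j-1}(w_j) = (c+\beta)(w_j) + (\text{chips received before step } j) \geq (\text{chips from }s\text{ via }\beta) + (\text{chips received before step }j)$; adding the two ``received'' counts and the $\beta$-term gives exactly $\indegree{\mathcal{F}}{w_j}$ (each in-arc of $\mathcal{F}$ at $w_j$ corresponds to a distinct chip delivered, since every non-sink vertex fires exactly once by Lemma \ref{burning algorithm}). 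Hence $c(w_j) \geq \indegree{\mathcal{F}}{w_j} - \outdegree{G}{w_j}$, which rearranges to the claim. The main obstacle I anticipate is purely bookkeeping: making the ``chips received before/after firing'' accounting precise and matching it exactly to $\indegree{\mathcal{F}}{w_j}$, using crucially that each vertex fires exactly once so there is no double-counting of arc contributions.
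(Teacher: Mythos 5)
Your handling of acyclicity, of $s$ being the unique vertex of in-degree $0$, and of connectedness is sound and is essentially the paper's own argument: $w_j$ is stable in $c$ but active just before it fires, so a chip must have arrived either from $s$ (the $\beta$-chip, i.e.\ $(s,w_j)\in E$) or from some already-fired $w_i$ with $i<j$, and in either case the corresponding arc lies in $\mathcal{F}$; the paper gets connectedness from ``acyclic with a unique source'', while you run an explicit reachability induction, which is the same content.

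The inequality part, however, fails as written. The accounting identity you assert --- ``adding the two received counts and the $\beta$-term gives exactly $\indegree{\mathcal{F}}{w_j}$'' --- is false: the chips received \emph{after} step $j$ arrive along arcs $(w_\ell,w_j)$ with $\ell>j$, which are by definition excluded from the firing graph, so ($\beta$-term)$\,+\,$(before)$\,+\,$(after) equals $\indegree{G}{w_j}$, not $\indegree{\mathcal{F}}{w_j}$. With the correct value your chain collapses (by Eulerianness) to the vacuous $c(w_j)\ge 0$, and the inequality you actually conclude, $c(w_j)\ge \indegree{\mathcal{F}}{w_j}-\outdegree{G}{w_j}$, has the roles of the two degrees swapped: it is trivially true (since $\indegree{\mathcal{F}}{w_j}\le \indegree{G}{w_j}=\outdegree{G}{w_j}$) and does not ``rearrange'' to the claim $c(w_j)\ge \outdegree{G}{w_j}-\indegree{\mathcal{F}}{w_j}$. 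The repair is short and is exactly the paper's proof: do not subtract the out-degree and do not look past step $j$ at all. Count the chips at $w_j$ at the moment it fires: since $w_j$ has not fired earlier and each $w_i$ with $i<j$ (as well as the $\beta$-addition from $s$) has contributed one chip along each arc into $w_j$, one has $d_{j-1}(w_j)=c(w_j)+\beta(w_j)+\#\set{i<j:(w_i,w_j)\in E}=c(w_j)+\indegree{\mathcal{F}}{w_j}$, and activity of $w_j$ in $d_{j-1}$ gives $c(w_j)+\indegree{\mathcal{F}}{w_j}\ge \outdegree{G}{w_j}$, which is the claim.
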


\begin{proof}
It follows immediately from the definition of firing graph that $s$ is a vertex of in-degree $0$ in $\mathcal{F}$ and $\mathcal{E}$ is an acyclic arc set. We show that there is no other vertex of in-degree $0$ in $\mathcal{F}$. Let $(v_1,v_2,\cdots,v_{|V|-1})$ be a legal firing sequence of $c+\beta$ that is used to construct $\mathcal{F}$. By convention $v_0=s$. For each $1 \leq i \leq |V|-1$ let $c'$ denote the configuration obtained from $c+\beta$ by firing consecutively the vertices $v_1,v_2,\cdots,v_{i-1}$. Since $v_i$ is not active in $c$ but active in $c'$, $v_i$ must receive some chips during this firing process. This implies that there is $j<i$ such that $(v_j,v_i) \in E$. It follows from the definition of firing graph that $(v_j,v_i) \in \mathcal{F}$, therefore $\indegree{\mathcal{F}}{v_i}\geq 1$. Since $\mathcal{F}$ is acyclic and has exactly one vertex of in-degree $0$, $\mathcal{F}$ is connected.

It remains to prove that for every $v \in V\backslash \set{s}$ we have $c(v)\geq \outdegree{G}{v}-\indegree{\mathcal{F}}{v}$. For every $1\leq i\leq |V|-1$ vertex $v_{i}$ receives $\indegree{\mathcal{F}}{v_i}$ chips from its in-neighbors when all vertices $v_1,v_2,\cdots,v_{i-1}$ have been fired. At this point $v_{i}$ is active, therefore $c(v_i)\geq \outdegree{G}{v_i}-\indegree{\mathcal{F}}{v_i}$.
\end{proof}

The notion of firing graph gives a map from $\mathcal{M}$ to $\mathcal{A}$ that is shown in the following.

\begin{lem}
\label{the corollary of the restriction on firing graph}
Let $c\in \mathcal{M}$ and $\mathcal{F}=(\mathcal{V},\mathcal{E})$ a firing graph of $c$. Then $c(v)=\outdegree{G}{v}-\indegree{\mathcal{F}}{v}$ for every $v\in V\backslash \set{s}$ and $\mathcal{E} \in \mathcal{A}$. Moreover, the configuration $c$ contains $|E|-\outdegree{G}{s}-|\mathcal{E}|$ chips.
\end{lem}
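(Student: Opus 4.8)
The plan is to combine Lemma \ref{from recurrent configuration to acyclic set} (which gives the lower bound $c(v)\geq \outdegree{G}{v}-\indegree{\mathcal{F}}{v}$) with the minimality of $c$ to force equality, and then use Lemma \ref{from acyclic set to recurrent configuration} to promote $\mathcal{E}$ from an arbitrary acyclic arc set to a \emph{maximal} one.

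First I would fix a firing graph $\mathcal{F}=(\mathcal{V},\mathcal{E})$ of $c$. By Lemma \ref{from recurrent configuration to acyclic set}, $\mathcal{E}$ is an acyclic arc set of $G$, $s$ is the unique vertex of in-degree $0$ in $\mathcal{F}$, and $c(v)\geq \outdegree{G}{v}-\indegree{\mathcal{F}}{v}$ for every $v\in V\backslash\set{s}$. Now define $c'$ by $c'(v)=\outdegree{G}{v}-\indegree{\mathcal{F}}{v}$ for every $v \in V\backslash\set{s}$; note $c'\geq 0$ since each such vertex has $\indegree{\mathcal{F}}{v}\le\indegree{G}{v}=\outdegree{G}{v}$. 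Since $\mathcal{F}$ is connected with $s$ its unique source, $\mathcal{E}$ (viewed as an acyclic arc set) has $s$ as its unique vertex of in-degree $0$ and spans all of $V$, so Lemma \ref{from acyclic set to recurrent configuration} applies and shows $c'$ is recurrent. But $c'\leq c$ by the inequality above, and $c$ is minimal, so $c'=c$; this gives $c(v)=\outdegree{G}{v}-\indegree{\mathcal{F}}{v}$ for every $v\in V\backslash\set{s}$, the first claim.

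Next I would show $\mathcal{E}\in\mathcal{A}$, i.e.\ that $\mathcal{E}$ is a \emph{maximal} acyclic arc set with $s$ its unique sink. We already know $\mathcal{E}$ is acyclic and $s$ is its unique vertex of in-degree $0$. Suppose for contradiction that $\mathcal{E}$ is not maximal: there is an arc $e=(u,w)\in E\backslash\mathcal{E}$ with $\mathcal{E}\cup\set{e}$ still acyclic. Adding $e$ does not increase any in-degree except $\indegree{}{w}$, which goes up by one; since $\mathcal{E}\cup\set{e}$ is acyclic with the same unique source $s$ (adding an arc cannot create a new source, and $w\neq s$ because $s$ had in-degree $0$ in an acyclic graph to which we added an incoming arc — actually $w = s$ would make $s$ have in-degree $1$, still consistent, so I must rule this out: but if $w=s$ then $\mathcal{E}\cup\{e\}$ has no vertex of in-degree $0$, contradicting acyclicity, so indeed $w\neq s$), Lemma \ref{from acyclic set to recurrent configuration} produces a recurrent configuration $c''$ with $c''(v)=\outdegree{G}{v}-\indegree{G[\mathcal{E}\cup\set{e}]}{v}$, which equals $c(v)$ for $v\neq w$ and $c(w)-1$ for $v=w$. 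Thus $c''\leq c$ and $c''\neq c$, contradicting minimality of $c$. Hence $\mathcal{E}$ is maximal, so $\mathcal{E}\in\mathcal{A}$.

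Finally, for the chip count: summing the identity $c(v)=\outdegree{G}{v}-\indegree{\mathcal{F}}{v}$ over $v\in V\backslash\set{s}$ gives $\sum_{v\neq s}c(v)=\sum_{v\neq s}\outdegree{G}{v}-\sum_{v\neq s}\indegree{\mathcal{F}}{v}$. The first sum is $|E|-\outdegree{G}{s}$, and since $s$ has in-degree $0$ in $\mathcal{F}$ the second sum is $\sum_{v\in V}\indegree{\mathcal{F}}{v}=|\mathcal{E}|$, yielding $|E|-\outdegree{G}{s}-|\mathcal{E}|$ as claimed. The main obstacle I anticipate is the careful bookkeeping in the maximality step — in particular verifying that adding the extra arc $e$ keeps $s$ as the unique in-degree-$0$ vertex (and that $e$'s head is not $s$), so that Lemma \ref{from acyclic set to recurrent configuration} is genuinely applicable; everything else is a direct application of the two preceding lemmas plus minimality.
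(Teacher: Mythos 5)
Your proposal is correct and takes essentially the same approach as the paper: you force $c(v)=\outdegree{G}{v}-\indegree{\mathcal{F}}{v}$ by combining Lemma \ref{from recurrent configuration to acyclic set}, Lemma \ref{from acyclic set to recurrent configuration} and minimality, and you obtain maximality of $\mathcal{E}$ by the same mechanism, then sum to get the chip count. The only cosmetic difference is in the maximality step, where you add a single arc $e$ to $\mathcal{E}$ (carefully checking that its head is not $s$, so that Lemma \ref{from acyclic set to recurrent configuration} applies), whereas the paper extends $\mathcal{E}$ to a full maximal acyclic arc set $A\in\mathcal{A}$ and compares $c$ with the configuration determined by $A$; both variants rest on the identical contradiction with the minimality of $c$.
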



\begin{proof}
Let $c'$ be the configuration defined by $c'(v)=\outdegree{G}{v}-\indegree{\mathcal{F}}{v}$ for every $V\backslash \set{s}$. By Lemma \ref{from acyclic set to recurrent configuration} $c'$ is a recurrent configuration. It follows from Lemma \ref{from recurrent configuration to acyclic set} that $c'\leq c$. Since $c$ is minimal, we have $c'=c$, therefore $c(v)=\outdegree{G}{v}-\indegree{\mathcal{F}}{v}$ for every $v \in V\backslash \set{s}$.

To prove $\mathcal{E}\in \mathcal{A}$, we assume otherwise that there is $A\in \mathcal{A}$ such that $\mathcal{E}\subsetneq A$ (from Lemma \ref{from recurrent configuration to acyclic set} we know that $\mathcal{E}$ is an acyclic arc set, hence it is not maximal). Let $c''$ be the configuration defined by $c''(v)=\outdegree{G}{v}-\indegree{G[A]}{v}$ for every $v \in V\backslash \set{s}$. Let $(u,u') \in A\backslash \mathcal{E}$. Clearly $\indegree{G[A]}{u'}>\indegree{\mathcal{F}}{u'}$, therefore $c''(u')<c(u')$. It implies that $c''\neq c$ and $c''\leq c$, a contradiction to the fact that $c \in \mathcal{M}$.

The number of chips $c$ contains is $\underset{v\neq s}{\sum} c(v)=\underset{v\neq s}{\sum} (\outdegree{G}{v}-\indegree{\mathcal{F}}{v})=\underset{v\in V}{\sum}\outdegree{G}{v}-\outdegree{G}{s}-|\mathcal{E}|=|E|-\outdegree{G}{s}-|\mathcal{E}|$. The second statement follows.
\end{proof}

For two non-repeated sequences $\mathfrak{f}=(v_1,v_2,\cdots,v_{|V|-1})$ and $\mathfrak{g}=(w_1,w_2,\cdots,w_{|V|-1})$ of the vertices in $V\backslash \set{s}$, $\pre{\mathfrak{f}}{\mathfrak{g}}$ denotes the maximum integer $k$ such that for every $i$ satisfying $1 \leq i\leq k$, we have $v_k=w_k$. Note that if $v_1 \neq w_1$ then $\pre{\mathfrak{f}}{\mathfrak{g}}=0$. The following shows that there is a well-defined and injective map from $\mathcal{M}$ to $\mathcal{A}$.

\begin{lem}
\label{uniqueness of firing graph}
For every $c \in \mathcal{M}$, $c$ has exactly one firing graph.
\end{lem}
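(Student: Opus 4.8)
The plan is to argue by contradiction. Suppose some $c\in\mathcal{M}$ has two distinct firing graphs, realized by legal firing sequences $\mathfrak{f}=(v_1,\dots,v_{n-1})$ and $\mathfrak{g}=(w_1,\dots,w_{n-1})$ of $c+\beta$ to $c$ (with $n=|V|$), and among all such pairs choose one maximizing $k:=\pre{\mathfrak{f}}{\mathfrak{g}}$. Then $v_i=w_i$ for $i\le k$, while $a:=v_{k+1}\neq b:=w_{k+1}$. I will manufacture from $\mathfrak{g}$ a legal firing sequence $\mathfrak{g}^{\ast}=(w_1,\dots,w_k,a,b,\dots)$ with the \emph{same} firing graph as $\mathfrak{g}$, so that $\pre{\mathfrak{f}}{\mathfrak{g}^{\ast}}\ge k+1$, contradicting maximality.

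Two preliminary facts drive the argument. \textbf{(i) Tightness:} since $c\in\mathcal{M}$, in every legal firing sequence of $c+\beta$ to $c$ each vertex $u$ is fired at the exact moment it holds $\outdegree{G}{u}$ chips (not more); indeed just before $u$ fires it holds $c(u)+\indegree{\mathcal{F}}{u}$ chips, where $\mathcal{F}$ is the firing graph of that sequence, and this equals $\outdegree{G}{u}$ by Lemma~\ref{the corollary of the restriction on firing graph}. \textbf{(ii) Independence:} if $d$ is a configuration reached from $c+\beta$ along the prefix of some legal firing sequence to $c$, and $x\neq y$ are both active and not yet fired in $d$, then $G$ has no arc between $x$ and $y$. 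Indeed, assume $(x,y)\in E$; by Lemma~\ref{burning algorithm} this prefix followed by $x$ extends to a legal firing sequence of $c+\beta$ to $c$, in which $y$ is fired after $x$. But $y$ already holds $\ge\outdegree{G}{y}$ chips in $d$, gains one more when $x$ fires, and never loses chips before it fires --- contradicting (i). The case $(y,x)\in E$ is symmetric.

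Now the construction. Let $c_k$ be the common configuration obtained from $c+\beta$ by firing $v_1,\dots,v_k$; both $a$ and $b$ are active there, so by (ii) there is no arc between $a$ and $b$. Write $a=w_m$ in $\mathfrak{g}$; since $a\neq b=w_{k+1}$ we have $m\ge k+2$. Firing $w_{k+1},\dots,w_{m-1}$ from $c_k$ adds to $a$ one chip per in-neighbour of $a$ among them, and by (i) the outcome is exactly $\outdegree{G}{a}$, while $a$ already has $\ge\outdegree{G}{a}$ chips in $c_k$; hence $a$ has \emph{no} in-neighbour among $w_{k+1},\dots,w_{m-1}$. The plan is then to slide $a$ leftwards: at the generic stage the current legal firing sequence reads $\sigma=(w_1,\dots,w_j,a,w_{j+1},\dots,w_{m-1},w_{m+1},\dots,w_{n-1})$ for some $j$ with $k+1\le j\le m-1$. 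Since $(w_j,a)\notin E$, the configuration reached after $w_1,\dots,w_{j-1}$ still has $a$ active, so $a$ and $w_j$ are both active and unfired there; by (ii), $G$ has no arc between $a$ and $w_j$ in either direction. Hence the sequence obtained from $\sigma$ by transposing $w_j$ and $a$ is again legal and still reaches $c$ (firing two vertices with no arc between them commutes, so nothing downstream changes), and it has the same firing graph as $\sigma$ because the relative order of the arc-free pair $\{a,w_j\}$ is irrelevant to every arc. Iterating down to $j=k+1$ produces $\mathfrak{g}^{\ast}=(w_1,\dots,w_k,a,b,w_{k+2},\dots,w_{m-1},w_{m+1},\dots,w_{n-1})$ with the same firing graph as $\mathfrak{g}$, which is the desired contradiction.

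I expect fact (ii) to be the crux: it is exactly where minimality of $c$ is used (the statement fails for arbitrary recurrent configurations), and it is what upgrades the classical abelian fact ``swaps preserve the end configuration'' to ``swaps preserve the firing graph''. A secondary, purely bookkeeping point is to check at each transposition that the prefix $w_1,\dots,w_k$, hence $c_k$, is never disturbed, and that the tail remains a legal continuation --- which holds because the two orders reach the same configuration and $(w_j,a)\notin E$ keeps $a$ active once $w_j$'s firing is postponed.
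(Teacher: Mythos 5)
Your proof is correct, and it shares the paper's overall skeleton: both arguments hinge on Lemma~\ref{the corollary of the restriction on firing graph} (which, for minimal $c$, pins $c(v)=\outdegree{G}{v}-\indegree{\mathcal{F}}{v}$ for \emph{every} firing graph $\mathcal{F}$ of $c$), and both proceed by an exchange argument that modifies one legal firing sequence so as to lengthen its common prefix with the other while keeping its firing graph, concluding by extremality of $\pre{\cdot}{\cdot}$. The difference is in how firing-graph preservation is justified. The paper performs a single block move (it passes from $\mathfrak{f}_1$ to $\inter{\mathfrak{f}_1}{\mathfrak{f}_2}$, inserting $w_{j+1}$ right after the common prefix) and argues by counting: the two firing graphs must have the same number of arcs and the same in-degree at the moved vertex by Lemma~\ref{the corollary of the restriction on firing graph}, and since the set of in-arcs of the moved vertex can only shrink, nothing changes. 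You instead extract two sharper structural facts --- exact tightness (every vertex fires holding exactly $\outdegree{G}{v}$ chips, again from Lemma~\ref{the corollary of the restriction on firing graph}) and your ``independence'' property (two simultaneously active, not-yet-fired vertices are never joined by an arc, which relies on the abelian extension of any legal prefix together with Lemma~\ref{burning algorithm}) --- and then slide $a$ leftwards by adjacent transpositions of arc-free pairs, which manifestly cannot alter the firing graph. Your route is a bit longer but yields a genuinely new structural statement about minimal recurrent configurations (the independence property) that the paper never makes explicit; the paper's counting argument is more economical but gives only the equality of the two firing graphs.
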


\begin{proof} Let $\mathfrak{f}_1=(v_1,v_2,\cdots,v_{|V|-1})$ and $\mathfrak{f}_2=(w_1,w_2,\cdots,w_{|V|-1})$ be two different legal firing sequences of $c+\beta$. Let $j$ denote $\pre{\mathfrak{f}_1}{\mathfrak{f}_2}$ and $\mathfrak{f}'=(v_1,v_2,\cdots,v_j,w_{j+1},v_{j+1},v'_{j+3},v'_{j+4},\cdots,v'_{|V|-1})$ the sequence of vertices of $G$, where $(v'_{j+3},v'_{j+4},\cdots,v'_{|V|-1})$ is the sequence $(v_{j+2},\cdots,v_{|V|-1})$ with $w_{j+1}$ deleted. Clearly, $\mathfrak{f}'$ is also a legal firing sequence of $c+\beta$. Let $\mathcal{F}_1=(\mathcal{V}_1,\mathcal{E}_1)$ and $\mathcal{F}'=(\mathcal{V}',\mathcal{E}')$ denote the firing graphs of $c$ with respect to $\mathfrak{f}_1$ and $\mathfrak{f}'$, respectively.

We claim that $\mathcal{F}_1=\mathcal{F}'$. Lemma \ref{the corollary of the restriction on firing graph} implies that $|\mathcal{E}_1|=|\mathcal{E}'|=\underset{v \in V\backslash \set{s}}{\sum} \outdegree{G}{v}-\underset{v \in V\backslash \set{s}}{\sum} c(v)$.  Hence it suffices to prove that $\mathcal{E}_1\backslash \mathcal{E}'=\emptyset$. We assume otherwise that $\mathcal{E}_1\backslash \mathcal{E}'\neq\emptyset$. Let $k$ denote the integer such that $w_{j+1}=v_k$. Note that $k >j+1$. It follows from the definition of firing graph that $\mathcal{E}_1\backslash \mathcal{E}'=\set{(v_i,v_k) \in E: j+1\leq i \leq k-1}$. Let $X=\set{(v_i,v_k): (v_i,v_k) \in \mathcal{F}'}$ and $Y=\set{(v_i,v_k): (v_i,v_k) \in \mathcal{F}_1}$. Since $\mathfrak{f}'$ can be viewed as $\mathfrak{f}_1$ in which $v_k$ has been moved backward, we have $X\subseteq Y$. It follows from $\mathcal{E}_1\backslash \mathcal{E}'\neq \emptyset$ that $X\subsetneq Y$, therefore $\indegree{\mathcal{F}'}{v_k}<\indegree{\mathcal{F}_1}{v_k}$, a contradiction to the assertion of  Lemma \ref{the corollary of the restriction on firing graph}.

Let $\mathcal{F}_2$ denote the firing graph of $c$ constructed by $\mathfrak{f}_2$. The proof is completed by showing that $\mathcal{F}_1=\mathcal{F}_2$. Let $\delta=(\delta_1,\delta_2,\cdots,\delta_{|V|-1})$ be a legal firing sequence of $c+\beta$ such that the firing graph constructed by $\delta$ is the same as $\mathcal{F}_1$ and $\pre{\delta}{\mathfrak{f}_2}$ is maximum. We are going to show that $\delta=\mathfrak{f}_2$. Let $p$ denote $\pre{\delta}{\mathfrak{f}_2}$. If $\delta\neq \mathfrak{f}_2$ then $p <|V|-1$. Let $\delta'$ denote the sequence $(\delta_1,\delta_2,\cdots,\delta_p,w_{p+1},\delta_{p+1},u_{p+3},u_{p+4},\cdots,u_{|V|-1})$ of vertices of $G$, where $(u_{p+3},u_{p+4},\cdots,u_{|V|-1})$ is the sequence $(\delta_{p+2},\delta_{p+3},\cdots,\delta_{|V|-1})$ with the vertex $w_{p+1}$ deleted. The above claim implies that the firing graph of $c$ constructed by $\delta'$ is the same as the one constructed by $\delta$. It is clear that $\pre{\delta'}{\mathfrak{f}_2}>\pre{\delta}{\mathfrak{f}_2}$, a contradiction to the maximum of $\pre{\delta}{\mathfrak{f}_2}$
\end{proof}

For two non-repeated sequences $\mathfrak{f}=(v_1,v_2,\cdots,v_{|V|-1}),\mathfrak{g}=(w_1,w_2,\cdots,w_{|V|-1})$ of vertices in $V\backslash \set{s}$ we denote by $\inter{\mathfrak{f}}{\mathfrak{g}}$ the sequence $(v_1,v_2,\cdots,v_{k},w_{k+1},v_{k+1},v'_{k+3},v'_{k+4},\cdots,v'_{|V|-1})$, where $k=\pre{\mathfrak{f}}{\mathfrak{g}}$ and $(v'_{k+3},v'_{k+4},\cdots,v'_{|V|-1})$ is the sequence $(v_{k+2},v_{k+3},\cdots,v_{|V|-1})$ with the vertex $w_{k+1}$ deleted. It is easy to see that $\pre{\mathfrak{f}}{\mathfrak{g}}<\pre{\inter{\mathfrak{f}}{\mathfrak{g}}}{\mathfrak{g}}$.  Note that if $\mathfrak{f}$ and $\mathfrak{g}$ are two legal firing sequences of a configuration $c$, $\inter{\mathfrak{f}}{\mathfrak{g}}$ is also a legal firing sequence of $c$. The following result is the converse of Lemma \ref{the corollary of the restriction on firing graph}.

\begin{lem}
\label{from maximality to minimality}
Let $A\in \mathcal{A}$ and $\mathcal{F}$ denote $G[A]$. Then the configuration $c$ defined by $c(v)=\outdegree{G}{v}-\indegree{\mathcal{F}}{v}$ for every $v \in V\backslash \set{s}$ is a minimal recurrent configuration.
\end{lem}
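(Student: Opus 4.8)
The plan is to show that the configuration $c$ defined from a set $A\in\mathcal{A}$ is first recurrent, and then minimal. For recurrence, note that since $A\in\mathcal{A}$ it is an acyclic arc set whose unique sink (indegree-$0$ vertex) is $s$, and since $A$ is \emph{maximal} it contains all vertices of $G$ (every vertex that is not $s$ must have positive indegree in $G[A]$; in particular no vertex is isolated — one should double-check this, but it follows because if some $v\neq s$ had indegree $0$ then $v$ would be a second sink, contradicting $A\in\mathcal{A}$, and then adding to $A$ any arc into $v$ keeps acyclicity and the single-sink property, contradicting maximality). Hence Lemma \ref{from acyclic set to recurrent configuration} applies directly and tells us $c$ is recurrent. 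Moreover the firing sequence exhibited in that lemma's proof — a topological order $v_0=s<v_1<\cdots<v_{|V|-1}$ of $G[A]$ — is a legal firing sequence of $c+\beta$, which I will want to reuse.

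For minimality, I would argue by contradiction: suppose $c'\leq c$, $c'\neq c$, and $c'$ is recurrent. By Lemma \ref{burning algorithm}, $c'$ has a firing graph, and by Lemma \ref{from recurrent configuration to acyclic set} any firing graph $\mathcal{F}'=(\mathcal{V},\mathcal{E}')$ of $c'$ is a connected acyclic arc set with $s$ its unique indegree-$0$ vertex, and $c'(v)\geq \outdegree{G}{v}-\indegree{\mathcal{F}'}{v}$ for all $v\neq s$. The key computation is to compare sizes: $|\mathcal{E}'|=\sum_{v\neq s}\indegree{\mathcal{F}'}{v}\geq \sum_{v\neq s}\big(\outdegree{G}{v}-c'(v)\big)\geq \sum_{v\neq s}\big(\outdegree{G}{v}-c(v)\big)=\sum_{v\neq s}\indegree{\mathcal{F}}{v}=|A|$, where the last equalities use the definition of $c$ and that $G[A]=\mathcal{F}$. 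Since $A$ is a \emph{maximal} acyclic arc set and $\mathcal{E}'$ is an acyclic arc set with $|\mathcal{E}'|\geq|A|$, I need to conclude $\mathcal{E}'=A$: this is where the single-sink condition on $\mathcal{A}$ matters — $\mathcal{E}'$ is an acyclic arc set whose unique indegree-$0$ vertex is $s$, so if $\mathcal{E}'\not\subseteq A$ one could add an arc of $\mathcal{E}'\setminus A$ to $A$ and preserve both acyclicity (hmm — not immediate) and the single-sink property. Actually the cleaner route: show $A\subseteq\mathcal{E}'$ forces $A=\mathcal{E}'$ by the size inequality, so it suffices to show $A\subseteq \mathcal{E}'$, or alternatively invoke that both are acyclic of the same size with $A$ maximal among acyclic arc sets — but maximal does not mean maximum, so I must be careful. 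The robust argument: from $|\mathcal{E}'|\geq |A|$ and strict inequality somewhere forced by $c'\neq c$ (some vertex $u$ has $c'(u)<c(u)$, giving $\indegree{\mathcal{F}'}{u}\geq \outdegree{G}{u}-c'(u)>\outdegree{G}{u}-c(u)=\indegree{\mathcal{F}}{u}$, hence $|\mathcal{E}'|>|A|$), and then show this contradicts maximality of $A$ by exhibiting an arc in $\mathcal{E}'$ whose addition to $A$ is legal.

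Concretely, to finish: let $u$ be a vertex with $\indegree{\mathcal{F}'}{u}>\indegree{\mathcal{F}}{u}$, so some arc $(x,u)\in\mathcal{E}'\setminus A$. I want $A\cup\{(x,u)\}$ to be an acyclic arc set with unique sink $s$; then it contradicts $A\in\mathcal{A}$ (maximality). Acyclicity: a cycle through $(x,u)$ in $A\cup\{(x,u)\}$ would give a path from $u$ to $x$ inside $G[A]=\mathcal{F}$; I'd need to rule this out, and the natural tool is to compare the topological orders coming from $\mathcal{F}'$ and $\mathcal{F}$ — but $\mathcal{F}$ and $\mathcal{F}'$ need not be comparable. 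The honest assessment: the main obstacle is precisely this step — converting ``$\mathcal{E}'$ is acyclic, single-sink, and strictly larger than the maximal set $A$'' into a contradiction. I expect the intended proof avoids the difficulty by working with firing sequences rather than the orders directly: take the topological firing sequence $\mathfrak{f}$ of $G[A]$ (legal for $c+\beta$, hence also legal for $c'+\beta$ since $c'+\beta\leq c+\beta$ componentwise and legality of a full firing sequence is monotone in this direction via Lemma \ref{burning algorithm}) — then $\mathfrak{f}$ is \emph{also} a legal firing sequence of $c'+\beta$, so its firing graph equals $G[A]=\mathcal{F}$, forcing (by Lemma \ref{from recurrent configuration to acyclic set} applied to $c'$ with this particular firing graph) $c'(v)\geq\outdegree{G}{v}-\indegree{\mathcal{F}}{v}=c(v)$ for all $v$, whence $c'\geq c$ and combined with $c'\leq c$ gives $c'=c$, the contradiction. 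I would therefore structure the writeup as: (i) $A$ contains all vertices and $s$ is its unique indegree-$0$ vertex; (ii) $c$ is recurrent by Lemma \ref{from acyclic set to recurrent configuration}; (iii) the topological firing sequence $\mathfrak{f}$ of $G[A]$ is legal for $c+\beta$ and — because $\beta\geq\textbf{0}$ and adding chips preserves legality of this sequence — also legal for $c'+\beta$ whenever $c'\leq c$; (iv) this firing graph of $c'$ is $\mathcal{F}$, so $c'(v)\geq\outdegree{G}{v}-\indegree{\mathcal{F}}{v}=c(v)$; (v) conclude $c'=c$, so $c$ is minimal.
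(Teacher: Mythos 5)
Your recurrence argument (steps (i)--(ii)) is fine and matches the paper: $s$ being the unique sink of $A$ gives every other vertex positive in-degree in $G[A]$, so Lemma \ref{from acyclic set to recurrent configuration} applies. The minimality argument, however, has a genuine gap at its pivotal step (iii). Legality of a firing sequence is monotone under \emph{adding} chips, not removing them: from $c'\leq c$ you get $c'+\beta\leq c+\beta$, so a legal sequence for the smaller configuration $c'+\beta$ is legal for $c+\beta$, but not conversely --- and the converse is exactly what you invoke. Worse, in your situation the claim is provably false: by maximality of $A$, every arc of $G$ going forward in a topological order of $G[A]$ already lies in $A$, so when the topological sequence reaches the first vertex $u$ with $c'(u)<c(u)$, that vertex holds exactly $c'(u)+\indegree{G[A]}{u}<\outdegree{G}{u}$ chips and is not active. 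Thus the sequence is never legal for $c'+\beta$ when $c'\lneq c$, step (iv) cannot be run, and no contradiction is reached; the argument collapses precisely in the case it is supposed to exclude. (You had in fact already identified the true obstacle in your earlier counting attempt --- the firing graph $\mathcal{F}'$ of $c'$ is an acyclic arc set with at least $|A|$ arcs, but $A$ is only \emph{maximal}, not maximum, so this alone contradicts nothing.)

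The paper's proof is built to overcome exactly that obstacle, and it uses the correct direction of monotonicity: it takes a topological firing sequence $\mathfrak{f}_1$ of $A$ (legal for $c+\beta$) and a topological firing sequence $\mathfrak{f}_2$ of the firing graph $\mathcal{F}'$ of a minimal recurrent $c'\leq c$ (legal for $c'+\beta$, hence also legal for $c+\beta$ since $c'\leq c$), and transforms $\mathfrak{f}_1$ into $\mathfrak{f}_2$ by repeated interchanges $\inter{\cdot}{\mathfrak{f}_2}$, all of which remain legal firing sequences of $c+\beta$. At the first interchange where the associated firing graph changes, it shows --- using $c'\leq c$ and the in-degree identities of Lemma \ref{the corollary of the restriction on firing graph}, in the style of Lemma \ref{uniqueness of firing graph} --- that no arc into the moved vertex can be lost, so the firing graph strictly gains arcs; since the firing graph before that step is exactly $A$, this yields a proper acyclic superset of $A$ with the same unique sink $s$, contradicting $A\in\mathcal{A}$. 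To repair your write-up you would need to replace (iii)--(v) by an argument of this interchange type (or some other mechanism converting ``$c'\lneq c$'' into a proper acyclic extension of $A$); the size comparison and the reversed monotonicity claim cannot do the job.
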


\begin{proof}
For a contradiction we assume otherwise that $c$ is not minimal. There is $c' \in \mathcal{M}$ such that $c'\neq c$ and $c'\leq c$. Let $\mathcal{F}'$ be the firing graph of $c'$. By Lemma \ref{the corollary of the restriction on firing graph} we have $E(\mathcal{F}')\in \mathcal{A}$ and $\mathcal{F}' \neq \mathcal{F}$.

Since $A$ is acyclic, there is a non-repeated sequence $\mathfrak{f}_1=(v_1,v_2,\cdots,v_{|V|-1})$ of vertices in $V\backslash \set{s}$ such that if $(v_i,v_j) \in A$ then $i<j$. Clearly, $\mathfrak{f}_1$ is a legal firing sequence of $c+\beta$. Similarly, there is a non-repeated sequence $\mathfrak{f}_2=(w_1,w_2,\cdots,w_{|V|-1})$ of vertices $V\backslash \set{s}$ such that if $(w_i,w_j) \in E(\mathcal{F}')$ then  $i<j$. Clearly, $\mathfrak{f}_2$ is a legal firing sequence of $c'+\beta$. We define the sequence $\set{\mathfrak{g}_i}_{i \in \mathbb{N}}$ as follows
\begin{align*}
&\mathfrak{g}_0=\mathfrak{f}_1\\
&\mathfrak{g}_{i+1}=\inter{\mathfrak{g}_i}{\mathfrak{f}_2}, i\geq 0
\end{align*}
Let $p$ be the minimum integer such that $\mathfrak{g}_p=\mathfrak{f}_2$. Note that for every $i \geq p$, $\mathfrak{g}_i=\mathfrak{f}_2$. Since $\mathcal{F} \neq \mathcal{F}'$, there is a minimum integer $q<p$ such that the firing graph constructed by $\mathfrak{g}_q=(\delta_1,\delta_2,\cdots,\delta_{|V|-1})$ is distinct from the firing graph constructed by $\mathfrak{g}_{q+1}$. Let $k=\pre{\mathfrak{g}_q}{\mathfrak{f}_2}$ and $l$ be the integer such that $\delta_l=w_{k+1}$. The firing graphs constructed by $\mathfrak{g}_q$ and $\mathfrak{g}_{q+1}$ are denoted by $\mathcal{G}_1$ and $\mathcal{G}_2$, respectively.

We claim that for every $k+1\leq i \leq l-1$ we have $(\delta_i,\delta_l) \not \in E$. For a contradiction we assume otherwise. By a similar argument as in the proof of Lemma \ref{uniqueness of firing graph}, the set of arcs of $\mathcal{G}_2$ whose head $\delta_l$ is a subset of the set of arcs of $\mathcal{G}_1$ whose head $\delta_l$. The assumption implies that there is an arc $e\in E$ such that $e\in \mathcal{G}_1$ and $e \not \in \mathcal{G}_2$, therefore $\indegree{\mathcal{G}_2}{\delta_l}<\indegree{\mathcal{G}_1}{\delta_l}$. Since $\pre{\mathfrak{g}_i}{\mathfrak{f}_2}<\pre{\mathfrak{g}_{i+1}}{\mathfrak{f}_2}$ for every $0\leq i \leq p-1$, $\indegree{\mathcal{G}_2}{\delta_l}$ is equal to the in-degree of $\delta_l$ in the firing graph constructed by $\mathfrak{g}_p=\mathfrak{f}_2$, namely $\mathcal{F}'$. It follows that $\indegree{\mathcal{F}}{\delta_l}=\indegree{\mathcal{G}_1}{\delta_l}>\indegree{\mathcal{G}_2}{\delta_l}=\indegree{\mathcal{F}'}{\delta_l}$, therefore $c(\delta_l)<c'(\delta_l)$, a contradiction to the fact that $c'\leq c$.

Since $E(\mathcal{G}_1)\backslash E(\mathcal{G}_2)=\set{(\delta_i,\delta_l) \in E: k+1\leq i\leq l-1}$, it follows from the above claim that $E(\mathcal{G}_1)\backslash E(\mathcal{G}_2)=\emptyset$, therefore $E(\mathcal{G}_1)\subsetneq E(\mathcal{G}_2)$. The choice of $q$ implies that $E(\mathcal{G}_1)=A$, a contradiction to the fact that $A$ is a maximal acyclic arc set.
\end{proof}

The following is the main result of this subsection.

\begin{theo}
\label{one-to-one correspondence}
Let $\mathcal{F}_c$ denote the firing graph of $c$, the map from $\mathcal{M}$ to $\mathcal{A}$ defined by $c \mapsto \mathcal{F}_c$ is bijective.
\end{theo}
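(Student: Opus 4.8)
The plan is to assemble the bijection directly from the lemmas already proved in this subsection, checking the three required properties: that $c \mapsto \mathcal{F}_c$ is well-defined, injective, and surjective. The key preliminary is that every $c \in \mathcal{M}$ has a \emph{unique} firing graph, which is exactly Lemma \ref{uniqueness of firing graph}; this lets us speak of ``the'' firing graph $\mathcal{F}_c$ and makes $c \mapsto \mathcal{F}_c$ a genuine function. Lemma \ref{the corollary of the restriction on firing graph} then gives that $E(\mathcal{F}_c) \in \mathcal{A}$, so the map lands in $\mathcal{A}$ as claimed, and moreover records the identity $c(v) = \outdegree{G}{v} - \indegree{\mathcal{F}_c}{v}$ for every $v \in V \backslash \set{s}$, which will be the crucial link between a minimal recurrent configuration and its firing graph.

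For surjectivity, I would take an arbitrary $A \in \mathcal{A}$, set $\mathcal{F} = G[A]$, and define $c$ by $c(v) = \outdegree{G}{v} - \indegree{\mathcal{F}}{v}$ for $v \neq s$. Lemma \ref{from maximality to minimality} says precisely that this $c$ lies in $\mathcal{M}$. It then remains to argue that $\mathcal{F}_c = \mathcal{F}$, i.e. that $A$ really is the (unique) firing graph of $c$. Since $A$ is acyclic with $s$ its unique source, any topological order $(v_1,\dots,v_{|V|-1})$ of $V\backslash\set{s}$ compatible with $A$ is a legal firing sequence of $c+\beta$ (this is the computation in the proof of Lemma \ref{from acyclic set to recurrent configuration}), and the firing graph it produces has arc set contained in $A$; by Lemma \ref{the corollary of the restriction on firing graph} the firing graph of $c$ has exactly $|E| - \outdegree{G}{s} - \big(\text{number of chips in }c\big)$ arcs, which by the defining formula for $c$ equals $|A|$, forcing $E(\mathcal{F}_c) = A$ and hence $\mathcal{F}_c = \mathcal{F}$. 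So $c \mapsto \mathcal{F}_c$ hits every element of $\mathcal{A}$.

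For injectivity, suppose $c_1, c_2 \in \mathcal{M}$ with $\mathcal{F}_{c_1} = \mathcal{F}_{c_2} =: \mathcal{F}$. By the formula in Lemma \ref{the corollary of the restriction on firing graph}, $c_i(v) = \outdegree{G}{v} - \indegree{\mathcal{F}}{v}$ for every $v \in V\backslash\set{s}$ and each $i$, so $c_1 = c_2$ immediately. (Equivalently one can package surjectivity and injectivity together: Lemma \ref{from maximality to minimality} and Lemma \ref{the corollary of the restriction on firing graph} exhibit mutually inverse maps $\mathcal{A} \to \mathcal{M}$ and $\mathcal{M} \to \mathcal{A}$ given in both directions by the same formula $c(v) = \outdegree{G}{v} - \indegree{\mathcal{F}}{v}$, so the composite in either order is the identity.) I expect the only real subtlety to be the surjectivity step — specifically, confirming that the configuration built from $A\in\mathcal{A}$ has $A$ itself as its firing graph rather than some other maximal acyclic arc set; but this is pinned down by the arc-count identity of Lemma \ref{the corollary of the restriction on firing graph} together with the fact that a topological order of $A$ is a legal firing sequence whose firing graph is contained in $A$, so no new argument is needed beyond what the preceding lemmas supply.
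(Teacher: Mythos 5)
Your proposal is correct and follows essentially the same route as the paper, whose proof simply cites Lemma~\ref{the corollary of the restriction on firing graph} and Lemma~\ref{uniqueness of firing graph} for well-definedness and injectivity and Lemma~\ref{from maximality to minimality} for surjectivity; you in fact spell out the one point the paper leaves implicit, namely that the configuration built from $A\in\mathcal{A}$ has $A$ itself as its firing graph. One small remark: your assertion that the firing graph produced by a topological order of $A$ has arc set contained in $A$ is not automatic but uses the maximality of $A$ (a forward arc outside $A$ could be added while preserving acyclicity); alternatively, the trivial reverse containment $A\subseteq E(\mathcal{F}_c)$ together with your arc-count identity from Lemma~\ref{the corollary of the restriction on firing graph} already forces $E(\mathcal{F}_c)=A$.
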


\begin{proof}
Lemma \ref{the corollary of the restriction on firing graph} and Lemma \ref{uniqueness of firing graph} imply that the map is well-defined and injective. Lemma \ref{from maximality to minimality} implies the surjectivity.
\end{proof}

We end this subsection with an interesting property of the Chip-firing game on Eulerian digraphs

\begin{prop}
\label{minrec-independence}
The number of minimum recurrent configurations is independent of the choice of sink.
\end{prop}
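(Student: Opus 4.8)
The plan is to combine the two main correspondences already established. By Theorem~\ref{one-to-one correspondence}, for a fixed sink $s$ the minimal recurrent configurations $\mathcal{M}$ are in bijection with $\mathcal{A}$, the maximal acyclic arc sets $A$ of $G$ with $s$ as unique sink. By Lemma~\ref{the corollary of the restriction on firing graph}, the configuration $c$ corresponding to $A$ contains exactly $|E|-\outdegree{G}{s}-|A|$ chips. Since $G$ is Eulerian, $\outdegree{G}{s}=\indegree{G}{s}$ is a quantity that does \emph{not} depend on which $s$ we pick being the same for all vertices? — no, that is false in general, so one must be careful: $\outdegree{G}{s}$ does depend on $s$. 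Instead, I would argue as follows. A minimum recurrent configuration is in particular minimal (if $c$ is minimum but not minimal, some $c'\lneq c$ is recurrent with strictly fewer chips, contradiction). Hence the minimum recurrent configurations are exactly those elements of $\mathcal{M}$ whose chip count $|E|-\outdegree{G}{s}-|\mathcal{F}_c|$ is smallest, i.e.\ those whose firing graph has the \emph{maximum} number of arcs among all sets in $\mathcal{A}$. So the number of minimum recurrent configurations (for sink $s$) equals the number of maximum-size elements of $\mathcal{A}$; and an element of $\mathcal{A}$ has maximum size within $\mathcal{A}$ iff it is a maximum acyclic arc set of $G$ with unique sink $s$ (this uses Theorem~\ref{unique-source property}: there is a maximum acyclic arc set, of size $N$, with no arc having head $s$, hence contained in some member of $\mathcal{A}$, hence some member of $\mathcal{A}$ attains $N$; and no acyclic arc set with unique sink can exceed $N$). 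Therefore the number of minimum recurrent configurations with sink $s$ equals $\chi_s$, the number of maximum acyclic arc sets of $G$ with exactly one sink $s$.

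The proof is then finished by invoking Proposition~\ref{cardinality-independence}, which states precisely that $\chi_{s_1}=\chi_{s_2}$ for any two vertices $s_1,s_2$. Concretely: fix two candidate sinks $s_1,s_2$; write $\mathcal{M}_1,\mathcal{M}_2$ for the respective sets of minimal recurrent configurations and $\mathcal{A}_1,\mathcal{A}_2$ for the respective sets of maximal acyclic arc sets with the designated unique sink. The number of minimum recurrent configurations of the game with sink $s_i$ is $|\set{A\in\mathcal{A}_i: |A|=N}| = \chi_{s_i}$, where $N$ is the maximum number of arcs of any acyclic arc set of $G$ (independent of $i$, since it is an invariant of the underlying digraph $G$). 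By Proposition~\ref{cardinality-independence}, $\chi_{s_1}=\chi_{s_2}$, so the two counts coincide.

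The main obstacle, and the step deserving the most care, is the identification ``minimum recurrent $\Leftrightarrow$ minimal with maximum-size firing graph.'' One direction uses that minimum implies minimal together with the chip-count formula of Lemma~\ref{the corollary of the restriction on firing graph}; the subtlety is the other direction — showing that \emph{every} element of $\mathcal{A}$ of maximum size corresponds to a minimum recurrent configuration, i.e.\ that no non-minimal recurrent configuration can have fewer chips than these. This follows because any recurrent configuration $d$ dominates some minimal recurrent configuration $d'\leq d$, and $d'$ corresponds via Theorem~\ref{one-to-one correspondence} to some $A'\in\mathcal{A}$ with chip count $|E|-\outdegree{G}{s}-|A'|\geq |E|-\outdegree{G}{s}-N$, so $d$ has at least as many chips as the minimum; hence the minimum over all recurrent configurations is attained exactly on the maximum-size members of $\mathcal{A}$. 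A secondary point to state cleanly is why $\max_{A\in\mathcal{A}}|A| = N$, the overall maximum acyclic arc set size: the inequality $\leq$ is trivial, and $\geq$ is exactly Theorem~\ref{unique-source property} applied with $s=s_i$ (a maximum acyclic arc set with no arc into $s_i$ has $s_i$ as a sink, and can be extended to a maximal one still having $s_i$ as its unique sink without changing its status as maximum, since it already has $N$ arcs). With these two observations in place, the proposition is immediate from Proposition~\ref{cardinality-independence}.
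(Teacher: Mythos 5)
Your proposal is correct and follows essentially the same route as the paper: compose the bijection of Theorem~\ref{one-to-one correspondence} with the chip-count formula of Lemma~\ref{the corollary of the restriction on firing graph} to identify the minimum recurrent configurations with the maximum-size members of $\mathcal{A}$, and then invoke Proposition~\ref{cardinality-independence}; your added checks (minimum implies minimal, only counts and not chip values are compared across sinks) are exactly the points the paper leaves implicit. The one step you phrase loosely --- that some member of $\mathcal{A}$ attains the global maximum $N$ --- cannot be fixed by ``extending'' the set from Theorem~\ref{unique-source property} (a maximum acyclic arc set is already maximal, and a priori it might have a second sink), but it does hold because the proof of that theorem produces a maximum set $A_k$ with $\reaches{G}{A_k}{s}=V$, which forces $s$ to be its unique sink.
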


\begin{proof}
Theorem \ref{one-to-one correspondence} and Lemma \ref{the corollary of the restriction on firing graph} imply that the map $c\mapsto \mathcal{F}_c$ induces a map from the minimum recurrent configurations to the maximum acyclic arc sets of $G$ in $\mathcal{A}$. Therefore the number of minimum recurrent configurations is equal to the number of maximum acyclic arc sets of $G$ in $\mathcal{A}$. It follows from Proposition \ref{cardinality-independence} that the number of maximum acyclic arc sets of $G$ in $\mathcal{A}$ is independent of the choice of sink, so is the number of minimum recurrent configurations.
\end{proof}

Proposition \ref{minrec-independence} states that the number of minimum recurrent configurations is characteristic of the digraph itself.

\subsection{NP-hardness of minimum recurrent configuration problem}

In this subsection we study the computational complexity of the following problem
\begin{center}
\begin{tabular}{|l|}
\hline
\textbf{MINREC problem}\\
\text{}\\
\textbf{Input:} A graph $G$ with a global sink.\\
\textbf{Output:} Minimum total number of chips of a recurrent configuration of $G$.\\
\hline
\end{tabular}
\end{center}
If the input graphs are restricted to undirected graphs $G$ with a sink $s$, the problem can be solved in polynomial time since all minimal recurrent configurations have the same total number of chips, namely $\frac{E(G)}{2}$. Nevertheless, the problem is NP-hard for general digraphs. In particular, we show that the problem is NP-hard when the input graphs are restricted to Eulerian digraphs. 
\begin{center}
\begin{tabular}{|l|}
\hline
\textbf{EMINREC problem}\\
\text{}\\
\textbf{Input: } An Eulerian digraph $G$ with a sink $s$.\\
\textbf{Output: } Minimum total number of chips of a recurrent configuration of $G$.\\
\hline
\end{tabular}
\end{center}

\begin{theo}
The $EMINREC$ problem is NP-hard, so is the MINREC problem.
\end{theo}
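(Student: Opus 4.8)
The plan is to reduce the EMINFAS problem to the EMINREC problem, using the one-to-one correspondence between minimal recurrent configurations and maximal acyclic arc sets (Theorem \ref{one-to-one correspondence}) together with the arc-counting formula of Lemma \ref{the corollary of the restriction on firing graph}. Recall that by that lemma, if $c$ is a minimal recurrent configuration with firing graph $\mathcal{F}_c = (\mathcal{V},\mathcal{E})$, then $c$ contains exactly $|E| - \outdegree{G}{s} - |\mathcal{E}|$ chips. Since a minimum recurrent configuration is in particular minimal (any recurrent configuration dominates a minimal one, which has no more chips), the minimum total number of chips of a recurrent configuration equals $|E| - \outdegree{G}{s} - N_{\mathcal{A}}$, where $N_{\mathcal{A}}$ is the maximum size of an arc set in $\mathcal{A}$, i.e. the maximum size of a maximal acyclic arc set of $G$ whose unique sink is $s$. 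So solving EMINREC on $(G,s)$ amounts to computing $N_{\mathcal{A}}$, and it remains to relate $N_{\mathcal{A}}$ to the true maximum acyclic arc set size of $G$.

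The key step is to show that $N_{\mathcal{A}}$ equals $N$, the maximum number of arcs of \emph{any} acyclic arc set of $G$ (equivalently $|E|$ minus the EMINFAS optimum). The inequality $N_{\mathcal{A}} \le N$ is immediate since every set in $\mathcal{A}$ is an acyclic arc set. For the reverse inequality, start from a maximum acyclic arc set $X$ with $|X| = N$ and apply Theorem \ref{unique-source property} at the vertex $s$: this produces an acyclic arc set $X'$ of $N$ arcs containing no arc whose head is $s$, i.e. $s$ is a sink of $X'$. Next I would argue that $s$ can be taken to be the \emph{unique} sink of $X'$: because $G$ is strongly connected, every vertex lies on an incoming path in $G$; if some $v \neq s$ had in-degree $0$ in $G[X']$, one could repeatedly add to $X'$ an arc entering a current in-degree-$0$ vertex from an already-reachable vertex without creating a cycle (such an arc exists by strong connectivity and can be oriented consistently with the acyclic structure), contradicting maximality unless $s$ is the only source. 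Finally, since $N = |X'|$ is already maximum, $X'$ is automatically a maximal acyclic arc set, hence $X' \in \mathcal{A}$ and $N_{\mathcal{A}} = N$.

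Putting this together: given an Eulerian digraph $G$, pick any vertex $s$ as sink; then $\edgerem{G}{s}$ has a global sink, and the EMINREC value on $(G,s)$ is $|E| - \outdegree{G}{s} - N$, which is $|E| - \outdegree{G}{s} - (|E| - \mathrm{EMINFAS}(G))$, so $\mathrm{EMINFAS}(G)$ is recovered in polynomial time from the EMINREC value (all other quantities, $|E|$ and $\outdegree{G}{s}$, being trivially computable). This is a polynomial reduction from EMINFAS to EMINREC, and since EMINFAS is NP-hard (the preceding theorem), EMINREC is NP-hard. As EMINREC is the restriction of MINREC to Eulerian digraphs with a sink, MINREC is NP-hard as well.

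The main obstacle is the second step, namely proving $N_{\mathcal{A}} = N$ — specifically, upgrading the "$s$ is a sink" conclusion of Theorem \ref{unique-source property} to "$s$ is the unique sink, and the set is maximal (hence in $\mathcal{A}$)". The delicate point is that maximality of an acyclic arc set does not in general force a unique source (the remark after Proposition \ref{cardinality-independence} warns that maximal $\ne$ maximum for Eulerian digraphs), so one must use that $X'$ has the \emph{maximum} cardinality $N$ to conclude both that $X'$ is maximal and that it has a single sink; the strong connectivity of $G$ is what makes the single-sink argument go through. Once this is in hand, the rest is the bookkeeping above.
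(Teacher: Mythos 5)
Your overall plan is sound and lands on the same identity the paper uses (the minimum number of chips equals $|E|-\outdegree{G}{s}-N$, where $N$ is the maximum size of an acyclic arc set), but the step you yourself flag as the main obstacle contains a genuine gap. You claim that if $X'$ is a \emph{maximum} acyclic arc set with no arc entering $s$, then any other in-degree-$0$ vertex $v$ could be eliminated by adding to $X'$ an arc entering $v$ "without creating a cycle, contradicting maximality". This is false: maximum cardinality plus strong connectivity does not guarantee such an arc. Take the Eulerian digraph on $\set{s,u,v}$ with arcs $(s,u),(u,s),(u,v),(v,u)$ (the undirected path $s-u-v$ viewed as a digraph). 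Then $X'=\set{(s,u),(v,u)}$ is a maximum acyclic arc set ($N=2$) containing no arc with head $s$, yet $v$ also has in-degree $0$ in $G[X']$; the only arc of $G$ entering $v$ is $(u,v)$, and adding it creates the $2$-cycle on $\set{u,v}$. So no contradiction with maximality arises, and your augmentation argument cannot establish that $s$ is the unique sink; to fix the example you must \emph{exchange} arcs (replace $(v,u)$ by $(u,v)$), not add one.

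The fact you need is true, but it comes from the {\cutrev} machinery rather than from maximality: in the proof of Theorem \ref{unique-source property} the iteration stops only when $\reaches{G}{A_k}{s}=V$, and every $v\neq s$ reachable from $s$ by a nonempty path in $G[A_k]$ has $\indegree{G[A_k]}{v}\geq 1$, so $s$ is the unique sink of $A_k$ while $|A_k|=N$ by Lemma \ref{cardinality non-decreasing}. Equivalently, you can apply {\cutrev} at $s$ repeatedly to your $X'$. With that repair, your derivation of $N_{\mathcal{A}}=N$ and the rest of the reduction go through. It is worth noting that the paper's own proof is leaner than your route: instead of invoking Theorem \ref{one-to-one correspondence}, Lemma \ref{the corollary of the restriction on firing graph}, and the observation that a minimum recurrent configuration is minimal, it proves $k+k'=\underset{v\in V\backslash\set{s}}{\sum}\outdegree{G}{v}$ by two inequalities: one from Lemma \ref{from acyclic set to recurrent configuration} applied to a maximum acyclic arc set with unique source $s$ (obtained exactly as above), and one from the lower bound $\bar{c}(v)\geq \outdegree{G}{v}-\indegree{\mathcal{F}}{v}$ of Lemma \ref{from recurrent configuration to acyclic set} applied to a firing graph of a minimum recurrent configuration, which avoids any appeal to uniqueness of firing graphs or to the bijection between $\mathcal{M}$ and $\mathcal{A}$.
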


\begin{proof}
Let $G$ be an Eulerian digraph with sink $s$. Let $k$ be the maximum number of arcs of a feedback arc set of $G$ and $k'$ be the minimum number of chips of a recurrent configuration of $G$. Since the EMINFAS problem is NP-hard, the proof is completed by showing that $k+k'=\underset{v \in V\backslash \set{s}}{\sum}\outdegree{G}{v}$.

By Theorem \ref{unique-source property} there is an acyclic arc set $A$ of $G$ such that $|A|=k$ and $s$ is a unique vertex of indegree $0$ in $G[A]$. Lemma \ref{from acyclic set to recurrent configuration} implies that the configuration $c$ defined by $c(v)=\outdegree{G}{v}-\indegree{G[A]}{v}$ for every $v \in V\backslash \set{s}$ is recurrent. Clearly $k+\underset{v \in V\backslash\set{s}}{\sum}c(v)=\underset{v \in V\backslash \set{s}}{\sum}\outdegree{G}{v}$ and $k+k'\leq \underset{v \in V\backslash \set{s}}{\sum} \outdegree{G}{v}$ since $G$ is Eulerian.

It remains to prove that $k+k'\geq \underset{v\in V\backslash \set{s}}{\sum}\outdegree{G}{v}$. Let $\bar{c}$ be a recurrent configuration such that $\underset{v \in V\backslash \set{s}}{\sum} \bar{c}(v)=k'$. Let $\mathcal{F}$ be a firing graph of $\bar{c}$. Lemma \ref{from recurrent configuration to acyclic set} implies that $\bar{c}(v)\geq \outdegree{G}{v}-\indegree{\mathcal{F}}{v}$ for every $v \in V\backslash \set{s}$, therefore $k+k'\geq \underset{v \in V\backslash \set{s}}{\sum} \bar{c}(v)+|E(\mathcal{F})|\geq \underset{v \in V\backslash \set{s}}{\sum} \outdegree{G}{v}$.
\end{proof}

Note that it follows directly from \cite{Sta91} that the EMINFAS problem restricted to planar Eulerian digraphs is solvable in polynomial time, so is the EMINREC problem. This class of graphs is pretty big since it contains planar undirected graphs.
\section{Conclusion and perspectives}

In this paper we pointed out a close relation between the MINFAS problem and the MINREC problem. The important consequence of this relation is the NP-hardness of the MINREC problem. It would be interesting to investigate classes of graphs that are situated strictly between the class of undirected graphs and the class of Eulerian digraphs, for which the MINFAS and MINREC problems are solvable in polynomial time. We discuss here about such a class.

It follows from Theorem \ref{unique-source property} that to compute the maximum number of arcs of an acyclic arc set of an Eulerian digraph,  we can restrict to the acyclic arc sets that satisfy the condition in Theorem \ref{unique-source property}. With different choices of $s$ we have different sets of maximal acyclic arc sets. One would prefer to choose a vertex $s$ such that all maximal acyclic arc set have the same number of arcs since a maximal acyclic arc set can be computed quickly, therefore a maximum acyclic arc set. Figure \ref{fig:im19} shows an Eulerian digraph. If $v_1$ is chosen, we have exactly one maximal acyclic arc set that is shown in Figure \ref{fig:im20}. If $v_2$ is chosen, we have exactly two maximal acyclic arc sets with different sizes. Thus one computes easily a maximum acyclic arc set if $v_1$ is chosen.

\begin{figure}[!h]
\centering
\subfloat[An Eulerian digraph]{\label{fig:im19}\includegraphics[bb=0 0 373 165,width=2.05in,height=0.905in,keepaspectratio]{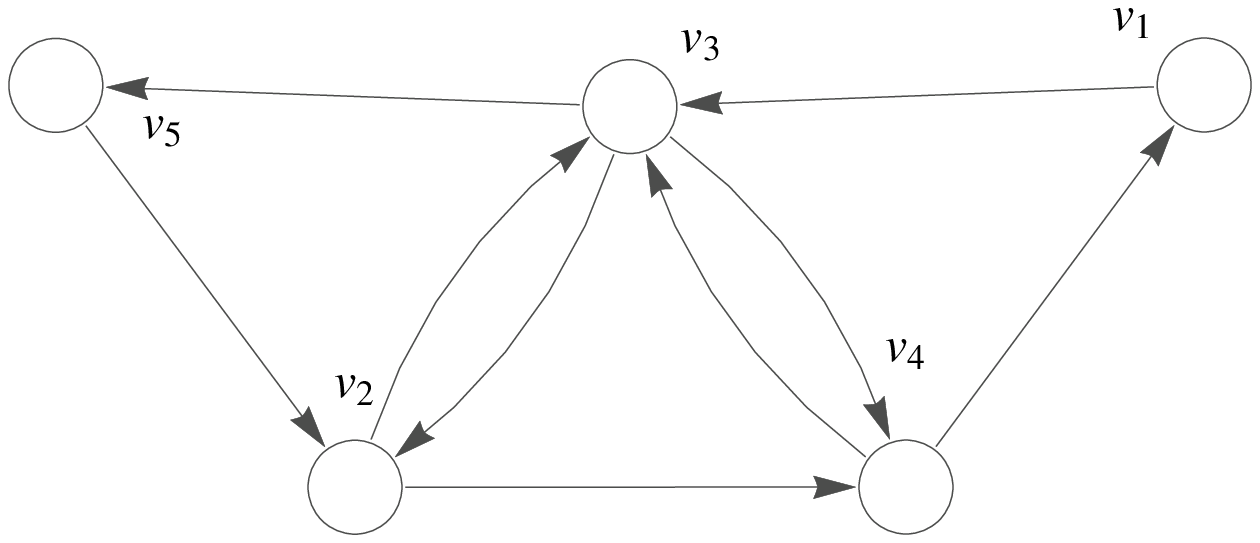}}\quad
\subfloat[A maximal acyclic arc set with respect to $v_1$]{\label{fig:im20}\includegraphics[bb=0 0 297 133,width=2.05in,height=0.914in,keepaspectratio]{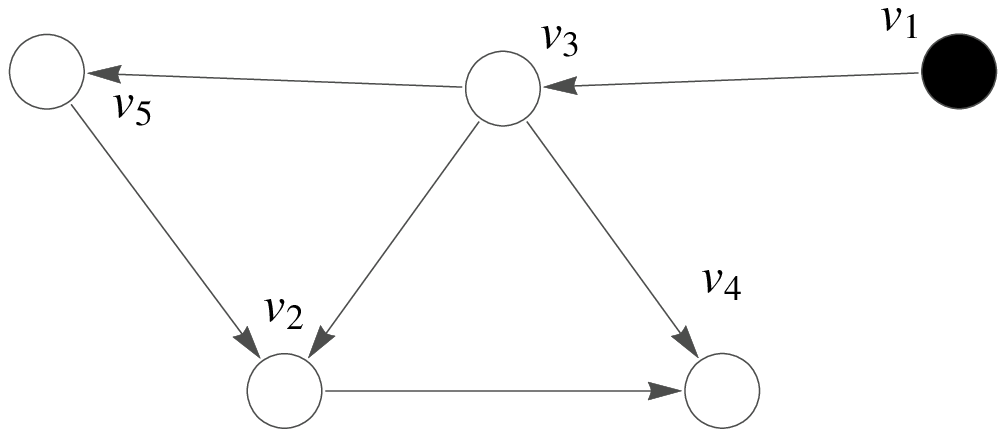}}\\
\subfloat[Maximal acyclic arc sets with respect to $v_2$]{\label{fig:im23}\includegraphics[bb=0 0 461 107,width=4.09in,height=0.945in,keepaspectratio]{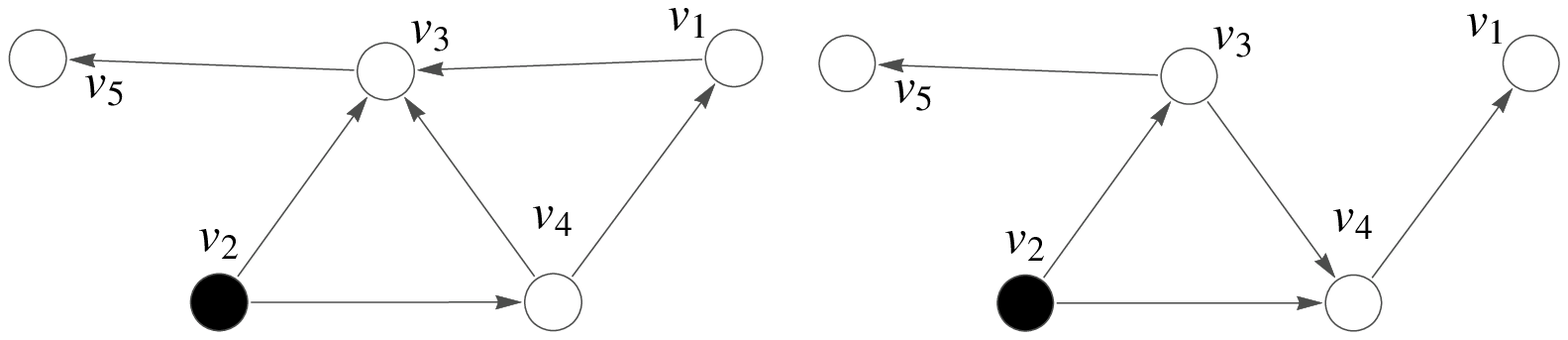}}
\caption{Maximal acyclic arc sets with different choices of $s$}
\label{fig:im192023}
\end{figure}

Note that there are many Eulerian digraphs in each of which there is no vertex $s$ that satisfies this good property. By an experimental observation we see that the class of Eulerian digraphs, for which at least one vertex $s$ has the property, is rather large. However, a characterization for this class  of graphs, on which the MINFAS problem is polynomial, is unknown and remains to be done. In addition, the observation also provides a heuristic algorithm for the EMINFAS problem. It is interesting to investigate the properties of this algorithm.

We also presented in this paper a number of interesting properties of feedback arc sets and recurrent configurations of the Chip-firing game on Eulerian digraphs. One of the most interesting properties is the one in Proposition \ref{minrec-independence}. We propose here an open question that is currently in our interests for further investigations: Is there any stronger result for Proposition \ref{minrec-independence} on Eulerian digraphs, and on digraphs? We believe that the results we presented in this paper can be generalized to general digraphs.

\section*{Acknowledgments}

We would like to thank Holger-Frederik Robert Flier for noticing us that the NP-hardnesss of the MINFAS problem on Eulerian multi-digraphs has been discovered in his PhD thesis. We  would also like to thank him for the useful discussions.

K\'evin Perrot\\
Universit\'e de Lyon - LIP (UMR 5668 CNRS-ENS de Lyon-Universit\'e Lyon 1)\\
46 all\'ee d'Italie 69364 Lyon Cedex 7-France\\
Universit\'e de Nice Sophia Antipolis - Laboratoire I3S (UMR 6070 CNRS)\\
2000 route des Lucioles, BP 121, F-06903 Sophia Antipolis Cedex, France\\
email: kevin.perrot@ens-lyon.fr\\
\text{}\\
Trung Van Pham\\
Department of Mathematics of Computer Science\\
Vietnam Institute of Mathematics\\
18 Hoang Quoc Viet Road, Cau Giay District, Hanoi, Vietnam\\
email: pvtrung@math.ac.vn


\begin{thebibliography}{9}
\bibitem[BTW87]{BTW87} P. Bak, C. Tang and K. Wiesenfeld. Self-Organized Criticality: An Explanation of $1/f$ Noise, \emph{Phys. Rev. Lett.} 59(4):381-384, 1987.
\bibitem[Big99]{Big99} N. Biggs. Chip-Firing and the Critical Group of a Graph, \emph{Journal of Algebraic Combinatorics 9}, 25-45, 1999.
\bibitem[BL92]{BL92} A. Bj\"orner, L. Lov\'asz. Chip-firing games on directed graphs, \emph{J. Algebraic Combin.} 1 (1992) 304-328.
\bibitem[BLS91]{BLS91} A. Bj\"orner, L. Lov\'asz, and W. Shor.  Chip-firing games on graphs, \emph{European Journal of Combinatorics}, 12 (1991), 283-291.
\bibitem[BNP96]{BNP96} A. Borobia, Z. Nutov and M. Penn. Doubly Stochastic Matrices and Dicycle Covers and Packings in Eulerian digraphs, \emph{Linear algebra and its application}, 246:361-371, 1996.
\bibitem[CTY07]{CTY07} P. Charbit, S. Thomass\'e and A. Yeo. The Minimum Feedback Arc Set Problem is NP-Hard for Tournaments, \emph{Combinatorics, Probability and Computing}, Vol. 16(1), 2007, pages 1-4.
\bibitem[Dha90]{Dha90}D. Dhar. Self-organized critical state of sandpile automaton models. \emph{Phys. Rev. Lett.} 64(14):1613-1616,1990. 
\bibitem[Fli11]{Fli11} H.-F. R. Flier. Optimization of railway operations. PhD thesis, ETH (2011). http://dx.doi.org/10.3929/ethz-a-007017958
\bibitem[GHM07]{GHM07} J. Guo and F. H{\"u}ffner and H. Moser. Feedback arc set in bipartite tournaments is NP-complete, \emph{Inf. Process. Lett, 102(2-3),2007, 62-65}.
\bibitem[GW96]{GW96} M. X. Goemans and D. P. Williamson: Primal-dual approximation algorithms for feedback problems in planar graphs, \emph{5th MPS Conference on Integer Programming and Combinatorial Optimization (IPCO) (1996), 147-161}.
\bibitem[HLMPPW08]{HLMPPW08} A. E. Holroyd, L. Levin, K. Meszaros, Y. Peres, J. Propp and D. B. Wilson. Chip-firing and rotor-routing on directed graphs \emph{In and Out of Equilibrium II, Progress in Probability vol. 60 (Birkhauser 2008)}
\bibitem[HMSSY12]{HMSSY12} H. Huang, J. Ma, A. Shapira, B. Sudakov and R. Yuster. Large feedback arc sets, high minimum degree subgraphs, and long cycles in Eulerian digraphs, \emph{submitted}.

\bibitem[Kar72]{Kar72} R. M. Karp. "Reducibility Among Combinatorial Problems", \emph{Complexity of Computer Computations, Proc. Sympos. IBM Thomas J. Watson Res. Center, Yorktown Heights, N.Y., New York:} Plenum, pp. 85-103, 1972.
\bibitem[LP01]{LP01} M. Latapy, H. D. Phan. The lattice structure of Chip Firing Game, \emph{Physica D} 115 (2001) 69-82.
\bibitem[Mag03]{Mag03} C. Magnien. Classes of lattices induced by Chip Firing (and Sandpile) Dynamics, \emph{European Journal of Combinatorics}, 24(6) (2003) 665-683.

\bibitem[PP13]{PP13} T. V. Pham and T. H. D. Phan. Lattices generated by Chip Firing Game models: Criteria and recognition algorithms, \emph{European Journal of Combinatorics} 34(5), 2013, 812-832.

\bibitem[Ram88]{Ram88} V. Ramachandran. Finding a minimum feedback arc set in reducible flow graphs, \emph{Journal of Algorithms} Vol. 9 (1988), 299-313.  
\bibitem[Sch07]{Sch07} M. Schulz. An NP-complete problem for the Abelian sandpile model. \emph{Complex Systems 17 (2007)}, no. 1-2, 17-28.
\bibitem[Sch10]{Sch10} M. Schulz. Minimal recurrent configurations of chip-firing games and directed acyclic graphs. \emph{AUTOMATA 2010, DMTCS proceedings}, pages 115-130, 2010.

\bibitem[Sey77]{Sey77} P. D. Seymour. Packing directed circuits fractionally, \emph{Combinatorica} Vol. 15 (1995), 281-288.
\bibitem[Sey96]{Sey96} P. D. Seymour. Packing circuits in Eulerian digraphs, \emph{Combinatorica}, 16(2), 1996, 223-231. 
\bibitem[Sta91]{Sta91} H. Stamm. On feedback problems in planar digraphs, \emph{Graph-Theoretic Concepts in Computer Science Leture Notes in Computer Science}, Vol. 484(1991), 79-89
\end{thebibliography}
\end{document}